\documentclass[letterpaper,11pt]{article}
\usepackage[utf8]{inputenc}
\usepackage[margin=1.0in]{geometry}
\usepackage{microtype}
\usepackage{amsmath,amsthm,amssymb, color, tikz, mathtools}
\usepackage{tcolorbox}
\usepackage{mdframed}

\usetikzlibrary{positioning}
\usetikzlibrary{shapes}
\usetikzlibrary{decorations.pathreplacing}
\usepackage{enumerate, enumitem} 

\usepackage{thm-restate}
\usepackage[hidelinks]{hyperref}
\usepackage{cleveref}

\title{On the Composition of Randomized Query Complexity and Approximate Degree}


\newcommand{\EE}{\mathbb{E}}




\newcommand{\set}[1]{\left\{ #1 \right\}}



\usepackage{bbm}

\newcommand{\zone}{\set{0,1}}
\newcommand{\zonep}{\set{0,1,*}}

\newcommand{\OR}{\mathsf{OR}}
\newcommand{\PrOR}{\mathsf{PrOR}}
\newcommand{\AND}{\mathsf{AND}}

\newcommand{\rub}{\mathsf{RUB}}
\newcommand{\sink}{\mathsf{SINK}}

\renewcommand{\tilde}{\widetilde}
\renewcommand{\Tilde}{\widetilde}
\newcommand{\bdeg}{\widetilde{\mathrm{bdeg}}}
\newcommand{\adeg}{\widetilde{\deg}}
\newcommand{\rqc}{\textnormal{R}}
\newcommand{\s}{\textnormal{s}}
\newcommand{\bs}{\textnormal{bs}}
\newcommand{\fbs}{\textnormal{fbs}}

\newcommand{\sign}{\textnormal{sign}}


\newtheorem{theorem}{Theorem}[section]
\newtheorem{corollary}[theorem]{Corollary}
\newtheorem{lemma}[theorem]{Lemma}
\newtheorem{defi}[theorem]{Definition}
\newtheorem*{question}{Question}
\newtheorem{observation}[theorem]{Observation}

\newtheorem{prop}[theorem]{Proposition}
\newtheorem{open question}[theorem]{Open question}



\newcommand{\noisyR}{\textnormal{noisyR}}
\newcommand{\Dom}{\textnormal{Dom}}
\newcommand{\R}{\textnormal{R}}
\newcommand{\GapMaj}{\textnormal{GapMaj}}



\author{
Sourav Chakraborty\thanks{Indian Statistical Institute, Kolkata, India. \texttt{sourav@isical.ac.in}}
\and 
Chandrima Kayal\thanks{Indian Statistical Institute, Kolkata, India. \texttt{chandrimakayal2012@gmail.com}}
\and
Rajat Mittal\thanks{Indian Institute of Technology Kanpur, India.  \texttt{rmittal@cse.iitk.ac.in }}
\and
Manaswi Paraashar\thanks{Aarhus University, Denmark. \texttt{manaswi.isi@gmail.com}}
\and 
Swagato Sanyal\thanks{Indian Institute of Technology  Kharagpur, India. \texttt{sanyalswagato@gmail.com}}
\and
Nitin Saurabh \thanks{Indian Institute of Technology, Hyderabad, India. \texttt{3295.nitin@gmail.com}}
}


\begin{document}

\maketitle

\begin{abstract}
For any Boolean functions $f$ and $g$, the question whether $\text{R}(f\circ g) = \tilde{\Theta}(\text{R}(f) \cdot \text{R}(g))$, is known as the composition question for the randomized query complexity. Similarly, the composition question for the approximate degree asks whether $\widetilde{\text{deg}}(f\circ g) = \tilde{\Theta}(\widetilde{\text{deg}}(f)\cdot\widetilde{\text{deg}}(g))$. These questions are two of the most important and well-studied problems in the field of analysis of Boolean functions, and yet we are far from answering them satisfactorily. 

It is known that the measures compose if one assumes various properties of the outer function $f$ (or inner function $g$). This paper extends the class of outer functions for which $\text{R}$ and $\widetilde{\text{deg}}$ compose. 

A recent landmark result (Ben-David and Blais, 2020) showed that $\text{R}(f \circ g) = \Omega(\text{noisyR}(f)\cdot \text{R}(g))$. This implies that composition holds whenever $\text{noisyR}(f) = \Tilde{\Theta}(\text{R}(f))$. We show two results:
\begin{itemize}
    \item[1.] When $\text{R}(f) = \Theta(n)$, then $\text{noisyR}(f) = \Theta(\text{R}(f))$. In other words, composition holds whenever the randomized query complexity of the outer function is full.
    
    \item[2.] If $\text{R}$ composes with respect to an outer function, then $\text{noisyR}$ also composes with respect to the same outer function.
\end{itemize}
On the other hand, no result of the type $\widetilde{\text{deg}}(f \circ g) = \Omega(M(f) \cdot \widetilde{\text{deg}}(g))$ (for some non-trivial complexity measure $M(\cdot)$) was known to the best of our knowledge.
We prove that 
\[
\widetilde{\text{deg}}(f\circ g) = \widetilde{\Omega}(\sqrt{\text{bs}(f)} \cdot \widetilde{\text{deg}}(g)),
\]
where $\text{bs}(f)$ is the block sensitivity of $f$.
This implies that $\widetilde{\text{deg}}$ composes when $\widetilde{\text{deg}}(f)$ is asymptotically equal to 
$\sqrt{\text{bs}(f)}$.

It is already known that both $\text{R}$ and $\widetilde{\text{deg}}$ compose when the outer function is symmetric. We also extend these results to weaker notions of symmetry with respect to the outer function.
\end{abstract}
\thispagestyle{empty}
\newpage
\tableofcontents
\newpage
\addtocounter{page}{-1}

\section{Introduction}
\label{sec: introduction}
For studying the complexity of Boolean functions, a number of simple complexity measures (like decision tree complexity, randomized query complexity, degree, certificate complexity and so on) have been studied over the years. (Refer to the survey~\cite{BW02} for an introduction to complexity measures of Boolean functions.) Understanding how these measures are related to each other \cite{ABK16, ABK+, ABBL+17, Huang}, and how they behave for various classes of Boolean functions has been a central area of research in complexity theory \cite{Paturi, Drucker11, Sun07}. 

A crucial step towards understanding a complexity measure is: how does the complexity measure behave when two Boolean functions are combined to obtain a new function (i.e., what is the relationship between the measure of the resulting function and the measures of the two individual functions)~\cite{BKT19, BGKW20, GSS16, Tal13}? One particularly natural combination of functions is called \emph{composition}, and it takes a central role in complexity theory.

For any two Boolean functions $f:\{0,1\}^n\to \{0,1\}$ and 
$g:\{0,1\}^m\to \{0,1\}$, the composed function $f \circ g:\zone^{nm} \to \zone$ is defined as the function  
\begin{align*}
    f \circ g (x_1, \dots ,x_n) = f(g(x_1), \dots ,g(x_n)),
\end{align*}
where $x_i \in \zone^m$ for $i \in [n]$.  For the function $f \circ g$, the function $f$ is called the outer function and $g$ is called the inner function. See Definition~\ref{defi: Generalized composition of functions} for a natural extension to partial functions. 

Let $M(\cdot)$ be a complexity measure of Boolean functions. A central question in complexity theory is the following.
\begin{question}[Composition question for $M$]
\label{question: composition question}
    Is the following true for all Boolean functions $f$ and $g$:
    \begin{align*}
        M(f \circ g) = \Tilde{\Theta}(M(f) \cdot M(g))?
    \end{align*}
\end{question}
The notation $\Tilde{\Theta}(\cdot)$ hides poly-logarithmic factors of the arity of the outer function $f$. 

Composition of Boolean functions with respect to different complexity measures is a very important and useful tool in areas like communication complexity, circuit complexity and many more. To take an example, a popular application of composition is to  create new functions demonstrating better separations (refer to~\cite{NS94, Tal13, Ambainis05, GSS16} for some other results of similar flavour). 

It is known that the answer to the composition question is in the affirmative for complexity measures like deterministic decision tree complexity~\cite{Savicky02,Tal13,Montanaro14}, degree~\cite{Tal13} and quantum query complexity~\cite{Reichardt11,LMR+11,Kim13}. 
While it is well understood how several complexity measures behave under composition, there are two important measures (even though well studied) for which this problems remains wide open: randomized query complexity (denoted by $\R$) and approximate degree (denoted by $\adeg$)~\cite{Sherstov12,NS94,Ambainis05,She13a,BT13,She13}. (See Definition~\ref{defi: randomized query complexity} and Definition~\ref{defi: approximate degree} for their respective formal definitions.)


For both $\R$ and $\adeg$ the upper bound inequality is known, i.e.,  $\R(f\circ g) = \Tilde{O}(\R(f)\cdot\R(g))$ (folklore) and $\adeg(f\circ g) = O(\adeg(f)\cdot\adeg(g))$ \cite{She13b}. Thus it is enough to prove the lower bound on the complexity of composed function in terms of the individual functions. Most of the attempts to prove this direction of the question have focused on special cases when the outer function has certain special properties\footnote{We note that some works have also studied the composition of $\R$ and $\adeg$ when the inner functions have special properties, for example, \cite{ABK16,BK18,AGJKLMSS17,GLSS19,Li21,BBGM22}.}. 

The initial steps taken towards answering the composition question for $\R$ were to show a lower bound by using a weaker measure of outer function than the randomized query complexity. In particular, it was shown that $\R(f \circ g) = \Omega(\s(f) \cdot \R(g))$~\cite{GJPW18, AKK16}, where $\s(f)$ is the sensitivity of $f$ (Definition~\ref{defi: block sensitivity}).
Since $\s(f) = \Theta(\R(f))$ for any symmetric function\footnote{Functions that depend only on the Hamming weight of their input.} $f$, these results show that $\R$ composes when the outer function is a symmetric function (like $\OR$, $\AND$, Majority, Parity, etc.). The lower bound was later improved to obtain  $\R(f \circ g) = \Omega(\fbs(f) \cdot \R(g))$ \cite{BDGHMT20, BB20}, where $\fbs(f)$ is the fractional block sensitivity of $f$ (Definition~\ref{defi: fbs}). 

Unfortunately, at this stage, there could even be a cubic gap between $\R$ and $\fbs$; the best known bound is $\R(f) = O(\fbs(f)^3)$ \cite{ABK+}. Given that there are already known Boolean functions with quadratic gap between $\fbs(f)$ and $\R(f)$ (e.g., $\mathsf{BKK}$ function \cite{ABK16}), it is natural to consider composition question for randomized query complexity when $\R$ is \emph{big} but $\fbs$ is \emph{small}. We take a step towards this problem by showing that composition for $\R$ holds when the outer function has full randomized query complexity, i.e., $\R(f)=\Theta(n)$, where $n$ is the arity of the outer function $f$. 



For composition of $\adeg$, Sherstov~\cite{Sherstov12} already showed that $\adeg (f\circ g)$ composes when the approximate degree of the outer function $f$ is $\Theta(n)$, where $n$ is the arity of the outer function. Thus approximate degree composes for several symmetric functions (having approximate degree $\Theta(n)$, like Majority and Parity). Though, until recently it was not even clear if $\adeg(\OR\circ\AND)=\Omega(\adeg(\OR)\,\adeg(\AND))$ 
(arguably the simplest of composed functions). $\OR$ has approximate degree $O(\sqrt{n})$, and thus the result of~\cite{Sherstov12} does not prove $\adeg$ composition when the outer function is $\OR$ (similarly for $\AND$). In a long series of work by \cite{NS94,Ambainis05,She13a,BT13,She13}, the question 
was finally resolved; it was later generalized to the case  when the outer function is symmetric~\cite{BBG+18}. 

In contrast to $\R$ composition, no lower bound on the approximate degree of composed function is known with a weaker measure on the outer function. It is well known that for any Boolean function $f$, $\sqrt{\s(f)} \leq \sqrt{\bs(f)} = O(\adeg(f))$ \cite{NS94}. So a natural step towards proving $\adeg$ composition is: prove a lower bound on $\adeg(f\circ g)$ by $\sqrt{\bs(f)}\cdot \adeg(g)$. We show this result by generalizing the approach of \cite{BBG+18}.

While the techniques used for the composition of $\R$ and $\adeg$ are quite different, one can still observe similarities between the classes of outer functions for which the composition of $\R$ and $\adeg$ is known to hold respectively. We further expand these similarities, by extending the classes of outer functions for which the composition theorem hold. 

\section{Our Results and Techniques}

It is well-known, by amplification, that $\R(f\circ g) = O(\R(f)\cdot \R(g) \cdot \log \R(f))$. In the case of approximate degree, Shrestov~\cite{She13b} showed that $\adeg(f\circ g) = O(\adeg(f)\cdot \adeg(g))$. So, to answer the composition question for $\R$ (or $\adeg$), we are only concerned about proving a lower bound on $\R(f\circ g)$ (or $\adeg(f\circ g)$) in terms of $\R(f)$ and $\R(g)$ (or $\adeg(f)$ and $\adeg(g)$) respectively. 

We split our results into three parts. In the first part we prove the tight lower bound on $\R(f\circ g)$ when the outer function has full randomized complexity. In the second part we provide a tight lower bound on $\adeg(f\circ g)$ in terms of $\bs(f)$ and $\adeg(g)$. Our results on the lower bound of $\R(f\circ g)$ and $\adeg(f\circ g)$ are summarized in Table~\ref{tab:complexity}.
Finally, we also prove composition theorems for $\R$ and $\adeg$ when the outer functions have a slightly relaxed notion of symmetry. 

\begin{table}
    \centering
    \scalebox{1.0}{
    \begin{tabular}{|c||c|c|}
        \hline
     &  {In terms of $\bs(f)$} &  {In terms of arity of $f$}  \\
\hline
\vspace{-10pt}
&&\\
 $\R$      & $\R(f \circ g) =\Tilde{\Omega}( \bs(f)\cdot \R(g))$ & $\R(f \circ g) = \Tilde{\Omega}(\R(f)\cdot \R(g))$ when $\R(f) = \Theta(n)$  \\
 & \cite{GJPW18} & Theorem~\ref{theo: composition with full RIntro}\\
\hline
\vspace{-10pt}
&&\\
 $\adeg$               & $\adeg(f \circ g) = \Tilde{\Omega}( \sqrt{\bs(f)}\cdot \adeg(g))$ &  $\adeg(f \circ g) = \Tilde{\Omega}( \adeg(f)\cdot \adeg(g))$ when $\adeg(f) = \Theta(n)$ \\
 & Theorem~\ref{thm: blocksensitivity composition} & \cite{Sherstov12}\\

\hline
    \end{tabular}}
    \caption{Composition of $\R$ and $\adeg$ depending on the complexity of the outer function in terms of block-sensitivity and arity.}
    \label{tab:complexity}
    \end{table}

\subsection{Lower bounds on $\R(f\circ g)$ when the outer function has full randomized query complexity}

Sherstov~\cite{Sherstov12} proved that $\adeg(f\circ g) = \Omega(\adeg(f)\cdot \adeg(g))$ when the approximate degree of the outer function $f$ is $\Theta(n)$, where $n$ is the arity of $f$. Though, a corresponding result for the case of randomized query complexity was not known. Our main result is to prove the corresponding theorem for randomized query complexity.
\begin{restatable}{theorem}{RandomizedMainThm}
\label{theo: composition with full RIntro}
Let $f$ be a partial Boolean function on $n$-bits such that $\R(f) = \Theta(n)$. Then for all partial functions $g$, we have
\[
\R(f \circ g) = {\Omega}(\R(f)\cdot \R(g)).
\]
\end{restatable}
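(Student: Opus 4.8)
The plan is to route the argument through the \emph{noisy} randomized query complexity $\noisyR$. The composition theorem of Ben-David and Blais~\cite{BB20} gives $\R(f\circ g) = \Omega(\noisyR(f)\cdot\R(g))$ for all partial $f$ and $g$, so the theorem follows at once if we establish $\noisyR(f) = \Omega(\R(f))$ under the hypothesis $\R(f) = \Theta(n)$. Since $\noisyR(f) \le \R(f)$ always (a deterministic query is a unit-cost, bias-$1$ noisy query), the content is the lower bound $\noisyR(f) = \Omega(n)$, and this is precisely where fullness must be used: for general $f$ there is a polynomial gap between $\noisyR(f)$ and $\R(f)$, and the natural lower bounds on $\noisyR(f)$ by combinatorial measures such as block sensitivity can be polynomially smaller than $n$ even when $\R(f)=\Theta(n)$, so a new argument is needed. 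The statement to prove is really that the $\noisyR$-versus-$\R$ gap disappears once $\R$ is as large as the arity.

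To prove $\noisyR(f) = \Omega(n)$ I would take an optimal noisy algorithm $\mathcal N$ of total cost $T := \noisyR(f)$ and turn it into an \emph{ordinary} randomized algorithm for $f$ making few genuine queries, so that $\R(f)$ itself is bounded in terms of $T$. The conversion uses the fact that a bias-$\gamma$ noisy query to coordinate $i$ (which returns $x_i$ with probability $(1+\gamma)/2$) is distributed exactly as the mixture ``with probability $\gamma$ read $x_i$ exactly, and with probability $1-\gamma$ output a fresh uniform bit''. One can therefore sample in advance, as private randomness, the read-or-garbage coin and the fresh bit attached to every node of $\mathcal N$, and then simulate $\mathcal N$ on the real input, issuing a genuine query to coordinate $i$ only the first time some bias-$\gamma$ query to $i$ lands on its ``read'' branch (all later noisy answers for $i$ being generated from the cached value). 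The simulation reproduces $\mathcal N$'s output distribution verbatim, hence its error; its query cost is the number of distinct coordinates ever read, whose expectation one controls by Cauchy--Schwarz in terms of the cost accumulated on each coordinate and the number of queries to each coordinate. If $\mathcal N$ can be assumed to make at most $\tilde O(n)$ queries in total, this works out to $\R(f) = \tilde O\!\left(\sqrt{n\cdot\noisyR(f)}\right)$ (plus a lower-order $O(\noisyR(f))$ term); plugging in $\R(f) = \Theta(n)$ forces $\noisyR(f) = \tilde\Omega(n)$, and composing with the Ben-David--Blais bound finishes the proof. Phrased for general $f$, the argument yields $\noisyR(f) = \tilde\Omega(\R(f)^2/n)$, which is exactly the ``gap collapses at full complexity'' statement.

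The step I expect to be the main obstacle is the reduction to a noisy algorithm with few queries: a priori nothing bounds the number of queries $\mathcal N$ makes in terms of its cost, since a noisy algorithm of cost $o(1)$ can issue astronomically many queries of minuscule bias, possibly spread adaptively over all coordinates, rendering the crude simulation useless. I would aim to show that, at the price of only a polylogarithmic factor in cost and a negligible increase in error, one can put $\mathcal N$ into a normal form in which (i) a coordinate is never queried once the cost accumulated on it along the current path exceeds $\Theta(\log n)$ --- by then its value is pinned down except with inverse-polynomial probability, so one may branch on the true value for free --- and (ii) no bias below an inverse-polynomial threshold is ever used, the negligible-bias queries being pruned through a hybrid argument governed by the relative entropy between a bias-$\gamma$ and a bias-$0$ channel, with consecutive queries to the same coordinate merged into a single query of the combined bias. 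Verifying that these clean-ups can be carried out simultaneously within the claimed loss --- in particular handling adaptivity, and handling coordinates hit by many individually weak queries --- is the technical heart of the argument. Should the direct simulation prove unwieldy, an alternative is to invoke an LP/minimax characterization of $\noisyR$ and to lift a dual solution certifying $\R(f) = \Omega(n)$ to a feasible dual solution certifying $\noisyR(f) = \Omega(n)$ of the same order.
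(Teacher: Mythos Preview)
Your high-level strategy coincides with the paper's: reduce to showing $\noisyR(f)=\Omega(\R(f))$ under the hypothesis $\R(f)=\Theta(n)$, and then invoke the Ben-David--Blais bound $\R(f\circ g)=\Omega(\noisyR(f)\cdot\R(g))$. Indeed the quantitative statement you isolate, $\noisyR(f)=\tilde\Omega(\R(f)^2/n)$ for \emph{all} partial $f$, is exactly what the paper proves (and even without the polylog loss). Where you diverge is in how to obtain that inequality. The paper does not simulate a noisy algorithm by a classical one directly; instead it sandwiches $\R(f\circ\GapMaj_t)$ between an upper bound $O(t\cdot\noisyR(f)+n)$ --- a mild generalization of the Ben-David--Blais characterization of $\noisyR$, proved via their random-walk bias-conversion trick --- and the lower bound $\Omega(\R(f)\cdot\sqrt{t})$ obtained by applying the known general bound $\R(f\circ g)=\Omega(\R(f)\sqrt{\R(g)})$ with $g=\GapMaj_t$. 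Choosing $t=\Theta(n/\noisyR(f))$ balances the two sides and yields $\R(f)=O\!\left(\sqrt{n\cdot\noisyR(f)}\right)$.

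The obstacle you flag in your own route is genuine, and the fixes you sketch do not close it. Capping the per-coordinate accumulated cost at $O(\log n)$ bounds the \emph{cost} spent on a coordinate but says nothing about the \emph{number} of noisy queries to it; thresholding the bias at an inverse polynomial only bounds the total query count by $\noisyR(f)\cdot\mathrm{poly}(n)$, which need not be $\tilde O(n)$; and merging consecutive same-coordinate queries is defeated as soon as the algorithm interleaves queries across coordinates adaptively. What your Cauchy--Schwarz count actually requires is that all biases be bounded below by roughly $\sqrt{\noisyR(f)/n}$, so that the total number of noisy queries is $O(n)$. But converting an arbitrary low-bias noisy algorithm into one at that fixed bias level, with only constant cost overhead, is precisely the content of the random-walk simulation behind the paper's upper bound on $\R(f\circ\GapMaj_t)$. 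In other words, a completed version of your direct-simulation plan would essentially have to re-derive that lemma; the paper's route sidesteps the difficulty by pairing the existing $\GapMaj$ machinery with the off-the-shelf $\R(f\circ g)=\Omega(\R(f)\sqrt{\R(g)})$ lower bound.
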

The proof of this theorem is given in Section~\ref{section: Composition for R}.
Notice, since $\R(f \circ g) = O(\R(f)\cdot \R(g) \log \R(f))$ (by error reduction), Theorem~\ref{theo: composition with full RIntro} says that composition of $\R$ holds when the randomized query complexity of the outer function, $f$, is $\Theta(n)$. Next, we give main ideas behind the proof of the above theorem.

\paragraph*{Ideas behind proof of Theorem~\ref{theo: composition with full RIntro}} 
A crucial complexity measure that we use for the proof of Theorem~\ref{theo: composition with full RIntro} is called the \textit{noisy randomized query complexity}, first introduced by Ben-David and Blais~\cite{bb20focs} in order to study randomized query complexity.
Noisy randomized query complexity can be seen as a generalization of randomized query complexity where the algorithm can query a bit with any bias and only pays proportionally to the square of the bias in terms of cost
(see Definition~\ref{defi: Noisy Oracle Model of Computation}).
They give the following characterization of $\noisyR(f)$ (the noisy randomized query complexity of $f$).



\begin{restatable}[Ben-David and Blais~\cite{bb20focs}]{theorem}{bb20thm2}
\label{thm: bdb20 characterization of noisyR}
For all partial functions $f$ on $n$-bits, we have
\begin{equation}\label{Eq:Noisy}
    \noisyR(f) = \Theta\left(\frac{\R(f \circ \GapMaj_n)}{n} \right),
\end{equation}
where $\GapMaj_n$ is the Gap-Majority function on $n$ bits whose input is promised to have Hamming weight either $(n/2 + 2\sqrt{n})$ (in which case the output is $-1$) or  $(n/2 - 2\sqrt{n})$ (in which case the output is $1$).
\end{restatable}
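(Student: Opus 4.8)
The statement is an equivalence, and the plan is to prove each direction by a query-by-query simulation built on one elementary observation: if $z\in\zone^n$ has Hamming weight $n/2\pm 2\sqrt n$, then a uniformly random coordinate of $z$ is a bit that agrees with $\GapMaj_n(z)$ with probability $\tfrac12+\Theta(1/\sqrt n)$. Hence a single ordinary query to a coordinate sitting inside a $\GapMaj_n$-gadget behaves like a single \emph{noisy} query, to the value of that gadget, of bias $\Theta(1/\sqrt n)$ and so of noisy-cost $\Theta(1/n)$; and, conversely, $\Theta(\gamma^2 n)$ ordinary queries to distinct coordinates of such a gadget, suitably aggregated, reconstruct a bias-$\Theta(\gamma)$ noisy sample of its value. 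Both inequalities amount to turning this correspondence into a simulation.

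\textbf{Simulating $\R(f\circ\GapMaj_n)$ by $\noisyR(f)$.} Fix an optimal bounded-error randomized algorithm $\cA$ for $f\circ\GapMaj_n$. I would define a noisy algorithm $\cB$ for $f$ that, on input $x\in\Dom(f)$, conceptually plants in the $i$-th block a uniformly random string of Hamming weight $n/2+2\sqrt n$ or $n/2-2\sqrt n$ according to $x_i$ (so the $i$-th gadget evaluates to $x_i$), runs $\cA$ on this virtual input, and echoes $\cA$'s output. Whenever $\cA$ queries a not-yet-exposed coordinate $j$ of block $i$, $\cB$ exposes it lazily: it issues a noisy query to $x_i$ with a bias it computes from the coordinates of block $i$ already exposed, post-processes the returned bit so that its conditional law matches that of coordinate $j$ given the exposed coordinates, and passes it to $\cA$. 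The virtual input is a legal input of $f\circ\GapMaj_n$ evaluating to $f(x)$, so $\cA$'s correctness transfers; and as long as no block is exposed on more than a constant fraction of its coordinates the required bias is $\Theta(1/\sqrt n)$, so the noisy-cost charged per query of $\cA$ is $\Theta(1/n)$, yielding $\noisyR(f)=O(\R(f\circ\GapMaj_n)/n)$.

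\textbf{Simulating $\noisyR(f)$ by $\R(f\circ\GapMaj_n)$.} Fix an optimal noisy algorithm $\cB$ for $f$ (after putting it in a convenient normal form, e.g.\ a bounded number of queries and no bias below $n^{-1/2}$), and let $\cA$ simulate it on $f\circ\GapMaj_n$: on input $y$, whenever $\cB$ issues a noisy query to coordinate $i$ with bias $\gamma$, $\cA$ reads about $\gamma^2 n$ previously-unread coordinates of block $i$ and, from their empirical frequency, manufactures a bit distributed (conditionally on $y$) as a bias-$\gamma$ noisy copy of $\GapMaj_n(y_i)$, feeding it back to $\cB$; if block $i$ has too few unread coordinates left, $\cA$ reads the rest, which pins $\GapMaj_n(y_i)$ down exactly, degrades that value to a bias-$\gamma$ bit, and answers all further queries to block $i$ from it for free. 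Then $\cB$ faces a noisy oracle for $x$, where $x_i:=\GapMaj_n(y_i)$, so $\cA$ inherits $\cB$'s error; and the number of ordinary queries is $\sum_t\Theta(\gamma_t^2 n)=\Theta(n)$ times the noisy-cost of $\cB$, i.e.\ $O(n\cdot\noisyR(f))$.

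\textbf{Where the work is.} The skeleton is short, but both simulations must handle that reading coordinates of a $\GapMaj_n$-gadget produces samples which are not perfectly independent, since one samples without replacement from a \emph{fixed} block. In the first direction this forces the careful posterior/lazy-exposure computation and a separate, cheap treatment of the low-probability histories on which some gadget has been over-exposed. In the second direction it means the manufactured answers are only approximately distributed as genuine noisy-oracle responses, so one must argue the residual correlations are too small to affect $\cB$'s guarantee — equivalently, couple the simulated oracle with a true noisy oracle on the corresponding input — and, separately, establish the normal form for $\cB$ (in particular that $\cB$ need not make many bias-below-$n^{-1/2}$ queries). I expect this coupling step, certifying that $\cB$ inside the simulation is talking to something statistically indistinguishable from a true noisy oracle, to be the delicate part; the cost accounting follows at once from the $\Theta(1/n)$-per-query correspondence.
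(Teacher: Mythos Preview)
This theorem is not proved in the paper; it is quoted from Ben-David and Blais. The paper does, however, prove a generalization of your second direction, $\R(f\circ\GapMaj_t)=O\!\left(t\cdot\noisyR(f)+n\right)$ for arbitrary $t$ (Theorem~\ref{thm: bdb20 main theorem Intro}, proved in Appendix~\ref{section: proof of BB20 main thm}), and it obtains your first direction as an immediate consequence of the cited composition bound $\R(f\circ g)=\Omega(\noisyR(f)\,\R(g))$ applied with $g=\GapMaj_n$ (see the proof of Lemma~\ref{lemma: implication 1 of R composition}). So the meaningful comparison is between your second simulation and Appendix~\ref{section: proof of BB20 main thm}.

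Your route there is genuinely different from the paper's. You propose to answer each bias-$\gamma$ query of $\cB$ by reading $\Theta(\gamma^2 n)$ fresh coordinates of the relevant gadget and aggregating. The paper instead first applies a black-box reduction (Lemma~\ref{lemma: noisyR only two bias}) collapsing $\cB$ to an algorithm using only the two bias levels $1$ and some fixed $\hat\gamma$, and then---this is the technical heart---uses a random-walk sampling procedure (Lemma~\ref{lemma: random walk with bias}) to expand a single bias-$\Theta(1/\sqrt t)$ bit into an expected $\Theta(1/(t\hat\gamma^2))$ independent bias-$\hat\gamma$ bits at the correct cost when $\hat\gamma<1/\sqrt t$. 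Only after this does it pass to $f\circ\GapMaj_t$, where each bias-$1/\sqrt t$ query is answered by reading a \emph{single} random coordinate of the gadget; no multi-sample aggregation or without-replacement coupling ever arises in this direction.

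This bears on your assessment of where the work lies. You flag the coupling as the delicate step and treat ``$\cB$ makes no bias-below-$n^{-1/2}$ queries'' as a routine normal-form side condition. The paper's argument suggests the weights are reversed. One cannot simply round small biases up to $n^{-1/2}$: a bias-$\gamma$ query costs only $\gamma^2$, the noisy model places no a priori bound on the \emph{number} of queries, and rounding each up to cost $1/n$ can blow the budget arbitrarily. Handling arbitrarily small biases cheaply is exactly what the two-bias reduction together with the random-walk trick accomplishes, and if you attempt to justify your normal form directly you will most likely have to reinvent it. Conversely, the without-replacement issue you emphasize is sidestepped entirely by the paper's single-coordinate reads.
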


We want to point out that the arity of $f$ and Gap-Majority is the \textit{same} in Theorem~\ref{thm: bdb20 characterization of noisyR}. Towards a proof of Theorem~\ref{theo: composition with full RIntro}, we first make the following crucial observation.

\begin{restatable}[]{observation}{GeneralizationNoisyR}
\label{our approach restated introduction}
Let $f$ be a partial Boolean function on $n$ bits. If $t(n) \geq 1$ is a non-decreasing function of $n$ and 
\begin{align*}
   \noisyR(f) =  {\Omega}\left(\frac{\R(f \circ \GapMaj_{t(n)})}{t(n)} \right),
\end{align*}
then $\R(f \circ g) = {\Omega}((\R(f) \cdot \R(g))/t(n))$ for all partial functions $g$. 
\end{restatable}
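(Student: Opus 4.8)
The plan is to obtain the bound by chaining together three facts. First, the main composition theorem of Ben-David and Blais~\cite{bb20focs}: for all partial functions $f$ and $g$, $\R(f \circ g) = \Omega(\noisyR(f) \cdot \R(g))$. Second, the hypothesis of the Observation, $\noisyR(f) = \Omega(\R(f \circ \GapMaj_{t(n)})/t(n))$. Third, an elementary monotonicity of randomized query complexity under composition, namely $\R(f \circ h) \ge \R(f)$ whenever $h$ is a (partial) function attaining both output values on its domain. Combining the three,
\[
\R(f \circ g) \;=\; \Omega\!\left(\noisyR(f) \cdot \R(g)\right) \;=\; \Omega\!\left(\frac{\R(f \circ \GapMaj_{t(n)}) \cdot \R(g)}{t(n)}\right) \;=\; \Omega\!\left(\frac{\R(f) \cdot \R(g)}{t(n)}\right),
\]
which is exactly the claimed bound. (We may assume $\GapMaj_{t(n)}$ is non-constant; otherwise $f \circ \GapMaj_{t(n)}$ is constant, the hypothesis forces $\noisyR(f) = 0$, hence $f$ is constant and the statement is trivial.)

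The one step that requires an argument is the monotonicity $\R(f \circ \GapMaj_{t(n)}) \ge \R(f)$, and here I would use the standard block-embedding reduction, checking the partial-function promise carefully. Pick inputs $z_0, z_1$ in the domain of $\GapMaj_{t(n)}$ on which it outputs $0$ and $1$ respectively. Given a randomized decision tree $T$ computing $f \circ \GapMaj_{t(n)}$ with error at most $1/3$ and worst-case cost $\R(f \circ \GapMaj_{t(n)})$, build a randomized decision tree for $f$ as follows: on input $y \in \Dom(f)$, simulate $T$ on the conceptual input $x = (z_{y_1}, \dots, z_{y_n})$, and whenever $T$ queries a coordinate in the $i$-th block, query $y_i$ once (if it has not been queried before) and return the corresponding bit of $z_{y_i}$. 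Since each $z_{y_i} \in \Dom(\GapMaj_{t(n)})$ and $\GapMaj_{t(n)}(z_{y_i}) = y_i$, the point $x$ lies in $\Dom(f \circ \GapMaj_{t(n)})$ and $(f \circ \GapMaj_{t(n)})(x) = f(y)$; hence $T$ is correct on $x$ with probability at least $2/3$, so the simulation computes $f(y)$ correctly. Moreover, the number of queries the simulation makes to $y$ is at most the number of distinct blocks $T$ reads, which is at most $T$'s cost. Therefore $\R(f) \le \R(f \circ \GapMaj_{t(n)})$.

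I do not expect a genuine obstacle here: the Observation is essentially a one-line consequence of the Ben-David--Blais theorem together with the hypothesis, and the only subtlety is the partial-function bookkeeping in the monotonicity step (one must ensure that the embedded input $x = (z_{y_1}, \dots, z_{y_n})$ respects the promise of $f \circ \GapMaj_{t(n)}$, which the choice $x_i = z_{y_i}$ guarantees). The non-decreasing assumption on $t(\cdot)$ plays no role in this particular argument; it is presumably retained because it is convenient in the applications of the Observation.
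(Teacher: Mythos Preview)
Your proof is correct and follows essentially the same route as the paper: both use the Ben-David--Blais bound $\R(f\circ g)=\Omega(\noisyR(f)\cdot\R(g))$, the hypothesis of the Observation, and the elementary monotonicity $\R(f\circ\GapMaj_{t(n)})\ge\R(f)$, chained in the obvious way. The only difference is cosmetic: the paper states the monotonicity without justification, whereas you spell out the block-embedding reduction and the partial-function bookkeeping.
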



In particular choosing $t(n)$ to be $(\log n)$, if the outer function $f$ satisfies
\begin{align}
    \noisyR(f) = {\Omega}\left(\frac{\R(f \circ \GapMaj_{\log n})}{\log n} \right). \label{eqn: condition for composition}
\end{align}
 then the above observation gives $\R(f \circ g) = \Omega((\R(f) \cdot \R(g))/(\log n))$ for all partial functions $g$.

The Observation~\ref{our approach restated introduction} allows us to approach the composition question for randomized query complexity in a conceptually fresh manner. The goal of proving that randomized query complexity composes for a function or a class of functions, say upto $(\log n)$-factor, reduces to showing that Equation~\ref{eqn: condition for composition} holds for that function or class of functions for $t(n) = \log n$.

We are able to show that Equation~\ref{eqn: condition for composition} holds for all non-decreasing functions $t(n)$ with a slight overhead.
\begin{restatable}[]{theorem}{GeneralizedCharNoisyR}
\label{thm: bdb20 main theorem Intro}
Let $f$ be a partial function on $n$ bits and let $t \geq 1$, then
$\R(f \circ \GapMaj_t) = O\left(t \cdot \noisyR(f) + n \right)$.
\end{restatable}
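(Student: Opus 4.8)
The plan is to extend the ``noisy-algorithm $\Rightarrow$ query-algorithm'' direction underlying Theorem~\ref{thm: bdb20 characterization of noisyR} from $t=n$ to arbitrary $t$, by simulating an optimal noisy query algorithm for $f$ with an ordinary randomized decision tree for $f\circ\GapMaj_t$. Fix a noisy randomized algorithm $\cA$ that computes $f$ with error at most $1/3$ and expected cost $O(\noisyR(f))$, where a noisy query to a coordinate with strength $\gamma$ returns a bit of bias $\gamma$ about that coordinate and costs $\Theta(\gamma^2)$. We build a randomized decision tree $\cB$ for $f\circ\GapMaj_t$ on input $(x_1,\dots,x_n)$ with $x_i\in\zone^t$: run $\cA$, and whenever $\cA$ issues a strength-$\gamma$ noisy query to coordinate $i$, have $\cB$ sample $k:=\Theta(\gamma^2 t)$ coordinates of the block $x_i$ uniformly at random, query them, and feed $\cA$ a single bit obtained by post-processing the empirical fraction of ones among the samples. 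When $\gamma^2 t<1$ this prescription would demand less than one physical query, so instead $\cB$ queries a single fresh coordinate of $x_i$ and \emph{degrades} the resulting signal down to bias exactly $\gamma$ by mixing it with a fair coin (a stronger signal can always be weakened, so no information is lost). When $\cA$ halts, $\cB$ returns $\cA$'s output.

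Correctness of $\cB$ rests on a statistical step: conditioned on $\GapMaj_t(x_i)=b$, the bit $\cB$ feeds back must have the same distribution as the answer of a genuine strength-$\gamma$ noisy oracle queried on a bit equal to $b$. This is exactly where the $\GapMaj_t$ promise enters: a valid block on which $\GapMaj_t$ evaluates to $b$ has Hamming weight bounded away from $t/2$ by $\Theta(\sqrt t)$, so a uniformly random coordinate reveals $b$ with bias $\Theta(1/\sqrt t)$, and $k=\Theta(\gamma^2 t)$ such samples produce an empirical signal of effective bias $\Theta(\gamma)$. A monotone-likelihood-ratio / coupling argument for the two binomials (one per value of $b$), together with a Chernoff bound controlling the unlikely event that the empirical count strays far from its mean, lets one reshape this signal into \emph{precisely} the strength-$\gamma$ channel. (This is the same phenomenon that makes $\GapMaj$ blocks the canonical hard instances realizing the worst-case noisy oracle in Theorem~\ref{thm: bdb20 characterization of noisyR}.) Granting the reshaping, $\cB$ simulates $\cA$ faithfully and hence computes $f\circ\GapMaj_t$ with error at most $1/3$.

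It remains to bound the cost of $\cB$. A strength-$\gamma_j$ noisy query costs $\cB$ at most $\max\{1,\,\Theta(\gamma_j^2 t)\}$ physical queries, so in expectation the total is at most $N+\Theta(t)\sum_j\gamma_j^2 = N+\Theta\big(t\cdot\noisyR(f)\big)$, where $N$ is the number of noisy queries $\cA$ makes; thus it suffices to arrange $N=O\big(n+t\cdot\noisyR(f)\big)$, and the additive $n$ in the theorem is exactly the budget that absorbs this ``one physical query per simulated query'' rounding term. For $t\ge n$ one can avoid the issue entirely: from a $\GapMaj_t$ block one can simulate a $\GapMaj_n$-style block at a cost of $\Theta(t/n)\ge 1$ physical queries per bit (amplifying the per-coordinate bias $\Theta(1/\sqrt t)$ up to $\Theta(1/\sqrt n)$ by a majority vote, and appealing to a robust/noisy-oracle version of Theorem~\ref{thm: bdb20 characterization of noisyR} to handle that the simulated blocks are only approximately valid), giving $\R(f\circ\GapMaj_t)=O\big((t/n)\cdot\R(f\circ\GapMaj_n)\big)=O(t\cdot\noisyR(f))$ with no rounding and no appeal to $N$.

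The main obstacle is the regime $t<n$, i.e.\ obtaining the bound $N=O\big(n+t\cdot\noisyR(f)\big)$. An arbitrary noisy algorithm could issue a huge number of very low-strength queries --- in particular, re-query a single coordinate many times --- so one needs a \emph{normal form}: any noisy algorithm can be converted, losing only a constant factor in expected cost, into one making $O(n)$ queries (morally, querying each of the $n$ coordinates of $f$ only a bounded number of times). The natural route is to first truncate runs whose cost exceeds a large constant times $\noisyR(f)$ (which fails with probability $o(1)$ by Markov), and then, along each branch and for each coordinate, coalesce a run of low-strength queries to that coordinate once the strength accumulated on it reaches a constant; handling the adaptivity correctly --- the decision to re-query coordinate $i$ may depend on what was learned about other coordinates in between --- is the delicate point. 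Making this normal-form reduction together with the statistical reshaping step fully rigorous is the crux; the rest is bookkeeping. Note that for $t=n$ the bound reads $O(n\cdot\noisyR(f)+n)=O(n\cdot\noisyR(f))$, recovering the known direction of Theorem~\ref{thm: bdb20 characterization of noisyR}.
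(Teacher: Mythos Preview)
Your high-level plan---simulate an optimal noisy algorithm $\cA$ for $f$ by answering each strength-$\gamma$ query with $\Theta(\gamma^2 t)$ samples from the corresponding $\GapMaj_t$ block---is the right intuition, and your cost accounting correctly isolates the difficulty: the additive $N$ coming from rounding each simulated query up to at least one physical query. You also correctly identify that controlling $N$ by putting $\cA$ into a normal form with $O(n)$ queries is the crux, and that adaptivity makes this delicate. But you do not actually carry this step out, and it is not clear it can be done in the generality you need: a noisy algorithm may legitimately interleave a huge number of tiny-bias queries to different coordinates, using each partial answer to decide what to query next, and it is not obvious such a transcript can be compressed to $O(n)$ queries at only constant overhead in cost. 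As stated, your argument is a plan with an acknowledged hole at exactly the point where the work is.

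The paper sidesteps this obstacle with a different idea. It first invokes the two-bias normal form of Ben-David and Blais, reducing $\cA$ to use only biases $1$ and some fixed $\widehat\gamma$, with \emph{no} bound on the number of queries. When $\widehat\gamma\ge 1/\sqrt t$ your amplification story works directly. When $\widehat\gamma<1/\sqrt t$---precisely the regime where your rounding hurts---the paper runs the simulation in the opposite direction: rather than spending one physical query per low-bias noisy query, a single $1/\sqrt t$-bias coin is used to \emph{generate} an expected $\mu_T=\Omega(1/(t\widehat\gamma^2))$ independent $\widehat\gamma$-biased bits, via a random-walk sampling trick (the $\delta'$-biased coin decides whether a $\widehat\gamma$-biased walk hits $+T$ or $-T$ first; the walk conditioned on that event supplies the bits). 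The only overhead is the surplus of generated bits left over when $\cA$ stops querying a given coordinate; this surplus is bounded by $\mu_{2T}\le 12\mu_T$ per coordinate, contributing $O(n)$ physical queries in total. Thus the $+n$ term arises from bounding leftover samples across the $n$ coordinates, not from rounding up a query count, and no $O(n)$-query normal form for $\cA$ is ever needed.
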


Notice that this is a generalization of Ben-David and Blais' characterization of $\noisyR$ given by Theorem~\ref{thm: bdb20 characterization of noisyR} 
in one direction. 
To give an idea of the proof, their characterization (Theorem~\ref{thm: bdb20 characterization of noisyR}) shows that any noisy oracle algorithm for $f$ can be simulated using only two biases, $1$ and $1/\sqrt{n}$ (where $n$ is the arity of $f$), with only constant overhead. We generalize this by showing that the same simulation works with a slight overhead even when the bias $1/\sqrt{n}$ is replaced by a bias $1/\sqrt{t}$, for some $t \geq 1$. A proof the above theorem is provided in Appendix~\ref{section: proof of BB20 main thm}.

This seemingly inconsequential generalization allows us to complete the proof of Theorem~\ref{theo: composition with full RIntro}, i.e. if for an $n$-bit partial function $f$, $\R(f) = \Theta(n)$, then $\R(f \circ g) = \Tilde{\Theta}(\R(f) \cdot \R(g))$ for all partial functions $g$ (see Section~\ref{section: Composition for R} for details).

Furthermore, Theorem~\ref{thm: bdb20 main theorem Intro} even sheds light on \textit{the composition question for $\noisyR$.}
A corollary of this theorem is that if $\R$ composes with respect to an outer function, then $\noisyR$ also composes with respect to the same outer function (see Section~\ref{section: Composition for R} for a proof).

\begin{restatable}{corollary}{CompositionRImpliesNoisyR}
\label{theo: composition of R implies that of noisyR Intro}
Let 
$f$ be a partial Boolean function. If $\R(f \circ g) = \Tilde{\Theta}(\R(f)\cdot \R(g))$ for all partial functions $g$
then $\noisyR(f \circ g) = \Tilde{\Theta}(\noisyR(f) \cdot \noisyR(g))$.
\end{restatable}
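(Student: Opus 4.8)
The plan is to route everything through the Ben-David--Blais characterization of $\noisyR$ (Theorem~\ref{thm: bdb20 characterization of noisyR}) and its one-directional generalization (Theorem~\ref{thm: bdb20 main theorem Intro}), exploiting that function composition is associative. Writing $n$ and $m$ for the arities of $f$ and $g$, the centerpiece of the argument will be to show that, under the hypothesis of the corollary,
\[
\noisyR(f\circ g)=\Tilde{\Theta}\!\left(\R(f)\cdot\noisyR(g)\right)\qquad\text{for every partial function }g ,
\]
with all hidden factors polylogarithmic in $n$. Granting this, I would specialize it to $g$ the identity function on one bit --- so that $f\circ g=f$ and $\noisyR(g)=\Theta(1)$ --- to obtain $\noisyR(f)=\Tilde{\Theta}(\R(f))$, and then substitute this back into the displayed identity to get $\noisyR(f\circ g)=\Tilde{\Theta}(\noisyR(f)\cdot\noisyR(g))$, which is exactly the corollary. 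So the real work is the displayed identity.

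For its \emph{lower bound} I would apply Theorem~\ref{thm: bdb20 characterization of noisyR} to the $nm$-bit function $f\circ g$ and rewrite using associativity,
\[
\noisyR(f\circ g)=\Theta\!\left(\frac{\R\bigl(f\circ(g\circ\GapMaj_{nm})\bigr)}{nm}\right).
\]
Now invoke the hypothesis ($\R$ composes with outer function $f$) with the inner function $g\circ\GapMaj_{nm}$, which gives $\R(f\circ(g\circ\GapMaj_{nm}))=\Tilde{\Omega}\!\left(\R(f)\cdot\R(g\circ\GapMaj_{nm})\right)$ (legitimate since $\GapMaj_{nm}$, hence $g\circ\GapMaj_{nm}$, is a partial function and the hypothesis quantifies over all partial inner functions); and then invoke the Ben-David--Blais lower bound $\R(h_1\circ h_2)=\Omega(\noisyR(h_1)\cdot\R(h_2))$ \cite{bb20focs} with $h_1=g$, $h_2=\GapMaj_{nm}$, which together with the standard bound $\R(\GapMaj_k)=\Theta(k)$ gives $\R(g\circ\GapMaj_{nm})=\Omega(nm\cdot\noisyR(g))$. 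Chaining the three estimates yields $\noisyR(f\circ g)=\Tilde{\Omega}(\R(f)\cdot\noisyR(g))$.

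For the \emph{upper bound} I would start from the same identity and use the folklore inequality $\R(f\circ h)=\Tilde{O}(\R(f)\cdot\R(h))$ with $h=g\circ\GapMaj_{nm}$ (so this direction needs only folklore, not the hypothesis), reducing the problem to proving $\R(g\circ\GapMaj_{nm})=\Tilde{O}(nm\cdot\noisyR(g))$. This is the one place where Theorem~\ref{thm: bdb20 main theorem Intro} is genuinely needed rather than Theorem~\ref{thm: bdb20 characterization of noisyR} itself: the inner function $g$ has arity $m$, but here it is composed with $\GapMaj$ of arity $nm\neq m$, so I would apply Theorem~\ref{thm: bdb20 main theorem Intro} with the parameter $t=nm$ to get $\R(g\circ\GapMaj_{nm})=O\!\left(nm\cdot\noisyR(g)+m\right)=O(nm\cdot\noisyR(g))$, the last step using $\noisyR(g)=\Omega(1)$ for non-constant $g$ (the constant case being trivial). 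This gives $\noisyR(f\circ g)=\Tilde{O}(\R(f)\cdot\noisyR(g))$ and hence the full displayed identity.

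I do not expect a genuine obstacle here: the argument is short once Theorems~\ref{thm: bdb20 characterization of noisyR} and~\ref{thm: bdb20 main theorem Intro} are available. The two things that will need care are (i) noticing that the copy of $\GapMaj$ that gets attached to the \emph{inner} function has arity $nm$, not $m$, which is precisely why the flexible-parameter statement Theorem~\ref{thm: bdb20 main theorem Intro} --- and not merely the original Ben-David--Blais characterization --- is indispensable for the upper bound; and (ii) a small bookkeeping check that the only logarithmic overheads come from the hypothesis and are therefore polylogarithmic in $n$, the arity of the outer function $f$, so that the final bound has the $\Tilde{\Theta}$ form demanded by the composition question.
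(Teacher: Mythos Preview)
Your proposal is correct and follows essentially the same route as the paper: apply the Ben-David--Blais characterization to $f\circ g$, use associativity to push $\GapMaj_{nm}$ onto $g$, invoke the hypothesis on the outer $f$, and then use Theorem~\ref{thm: bdb20 main theorem Intro} (with $t=nm\ge m$) together with Theorem~\ref{thm: bdb20 composition at least noisyR times R} to identify $\R(g\circ\GapMaj_{nm})/(nm)$ with $\noisyR(g)$. The paper packages the last step as a standalone lemma (Lemma~\ref{lemma: implication 1 of R composition}) and obtains $\R(f)=\Tilde{\Theta}(\noisyR(f))$ by applying the hypothesis with $g=\GapMaj_n$ rather than with the identity, but these are cosmetic differences.
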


\subsection{Lower bound on $\adeg(f\circ g)$ in terms of block sensitivity of $f$ and $\adeg(g)$}


As discussed in the introduction, the composition question for $\adeg$ is only known to hold when the outer function $f$ is symmetric \cite{BBG+18} or has high approximate degree \cite{Sherstov12}. 
There are also no known lower bounds on $\adeg(f\circ g)$ in terms of weaker measures of $f$ and $\adeg(g)$. Compare this with the situation with respect to composition of $\R$.  It was shown in~\cite{GJPW18} that $\R(f \circ g) = {\Omega}(\s(f)\,\R(g))$, where $\s(f)$ denotes the sensitivity of $f$. This was later strengthened to $\Omega(\fbs(f)\,\R(g))$ \cite{BDGHMT20, BB20}, where $\fbs(f)$ is the fractional block sensitivity of $f$. 

In this second part we show analogous lower bounds on approximate degree of composed function $f\circ g$. Our main result here is the following. 

\begin{restatable}[]{theorem}{ApproxDegBSMain}
\label{thm: blocksensitivity composition}
For all non-constant (possibly partial)\footnote{For definitions of block sensitivity and approximate degree in the context of partial functions, please see Definitions \ref{defi: block sensitivity} and \ref{defi: bounded degree}.} Boolean functions $f: \zone^n \to \zone$ and $g:\zone^m\to\zone$, we have 
\[\adeg(f \circ g) = \widetilde \Omega(\sqrt{\bs(f)}\cdot \adeg(g)).\]
\end{restatable}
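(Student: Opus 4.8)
The plan is to follow the block-sensitivity-to-composition recipe that worked for randomized query complexity, but adapted to the polynomial world via the dual-polynomial / hardness-amplification machinery, generalizing the symmetric-outer-function argument of \cite{BBG+18}. Recall that $\bs(f) = b$ means there is an input $z$ and $b$ disjoint blocks $B_1,\dots,B_b \subseteq [n]$ each of which, when flipped in $z$, changes the value of $f$. Fixing all coordinates of $z$ outside $B_1 \cup \dots \cup B_b$ to their values in $z$, and collapsing each block $B_i$ to a single control bit that toggles between ``$z$ restricted to $B_i$'' and its flip, one obtains a subfunction of $f$ on $b$ bits that contains either $\OR_b$ or $\AND_b$ as a subfunction (since flipping any single block away from the all-``agree-with-$z$'' pattern changes the value, but we have no control over what happens when several blocks are flipped simultaneously — wait, that is exactly the $\OR/\AND$ structure: value at $z$ is one constant, and flipping any nonempty subset that includes at least one block gives\dots). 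More carefully: restricting so that each block is all-agreeing or all-flipped, the behavior on the ``at most one block flipped'' slice embeds $\OR_b$ or $\AND_b$; composing further with $g$ on the inside, and noting that $\adeg$ is monotone under taking subfunctions (restrictions and, crucially here, ``block-collapse'' where a block of inputs to $g$ is identified — this needs a small argument, see below), we get $\adeg(f \circ g) \geq \adeg(h)$ for some $h$ that is $\OR_b$ or $\AND_b$ composed with $g$ on each of its $b$ inputs.

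The heart of the matter is then a lower bound of the form $\adeg(\OR_b \circ g) = \widetilde\Omega(\sqrt{b}\cdot \adeg(g))$ and the analogous statement for $\AND_b$. This is precisely the kind of statement proved for the special case $g = \AND$ in the long line \cite{NS94,Ambainis05,She13a,BT13,She13} culminating in $\adeg(\OR \circ \AND) = \widetilde\Theta(\sqrt{nm})$, and generalized to symmetric outer functions in \cite{BBG+18}; here the outer function $\OR_b$ is symmetric, so I would invoke or re-derive the $\OR$-composition bound. Concretely, the strategy is dual: take an optimal dual witness $\psi$ for the fact that $\adeg_{1/3}(g) = d$, i.e. a function $\psi$ with $\|\psi\|_1 = 1$, $\langle \psi, g\rangle > 1/3$, and $\psi$ orthogonal to all monomials of degree $< d$ (high pure-high-degree part); and combine $b$ independent copies of it with the standard dual witness for $\OR_b$ (built from the $\sqrt{b}$ lower bound on $\adeg(\OR_b)$) using Sherstov's/BT's dual-block-composition lemma, which multiplies correlations and adds pure high degrees, yielding a dual witness for $\OR_b \circ g$ certifying approximate degree $\widetilde\Omega(\sqrt{b}\cdot d)$. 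The $\widetilde\Omega$ (rather than $\Omega$) absorbs the error-amplification overhead needed to make the correlations of the $b$ inner copies simultaneously bounded away from failure — this is the standard logarithmic loss in these arguments.

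The main obstacle I anticipate is the ``block-collapse'' monotonicity step: when I fix $z$ and group the coordinates of $f$ into blocks, each block $B_i$ of $f$ corresponds to a set of $|B_i|$ disjoint copies of $g$ inside $f\circ g$, and I want to argue that restricting $f\circ g$ appropriately gives me a function that contains $(\OR_b \text{ or } \AND_b) \circ g$ as a subfunction on \emph{single} copies of $g$. The clean way is: within block $B_i$, fix all but one of the copies of $g$ to constants realizing the two values of $g$ that $z$-restricted-to-$B_i$ versus its flip demand, but the blocks of $f$ need not be singletons, so I must instead set all coordinates of $f$ in $B_i$ except possibly route a single $g$ through — this requires that on each block, the restriction of $f$ is still sensitive via a single-bit change, which is not automatic since $\bs$ only promises the \emph{whole block} flips the value. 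The correct fix, used in \cite{BBG+18} and earlier, is to work with the structure directly: restrict all coordinates outside $\bigcup B_i$, and for each block let all its $g$-copies share a single fresh input variable for $g$ (legal since it only shrinks the function and composition respects variable identification up to it still being a valid Boolean function), so that the block's control bit is exactly $g$ evaluated once; this yields exactly $f' \circ g$ where $f'$ on $b$ bits has full sensitivity at the all-zeros point, hence contains $\OR_b$ or $\AND_b$, and $\adeg$ is monotone under restriction and variable identification. I would state this as a lemma, prove it carefully, and then the rest is the dual-polynomial computation above; I should also double-check the partial-function conventions (Definitions~\ref{defi: block sensitivity}, \ref{defi: bounded degree}) so that the promise on $g$ is preserved under the identification, which it is since identifying variables only restricts the domain.
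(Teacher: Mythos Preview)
Your overall strategy---embed $\PrOR_{\bs(f)}$ in $f$ via the sensitive blocks at a maximizing input $z$, then invoke a $\PrOR$-composition lower bound---is exactly the paper's. The gap is your ``block-collapse'' step. Identifying all $g$-copies inside a block $B_i$ to a single input $x_i$ forces every coordinate $j\in B_i$ of $f$ to receive the \emph{same} value $g(x_i)$. But the target pattern $z|_{B_i}$ (and its flip $z^{B_i}|_{B_i}$) need not be constant on the block: if $B_i=\{1,2\}$ with $z_1=0$, $z_2=1$, then no choice of $g(x_i)$ realizes $z|_{B_i}=(0,1)$. So after identification you obtain $f'(c_1,\dots,c_b)=f(\text{block }i\text{ set to }c_i^{|B_i|},\ \text{rest }=z)$, a function that need not be sensitive at any point and certainly need not contain $\PrOR_b$. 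Your first instinct (fix all but one $g$-copy per block) was right to be suspicious of, and your proposed fix does not repair it; this trick is not in \cite{BBG+18}.

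The paper handles this by \emph{not} collapsing blocks. It keeps block $i$ intact and packages its inner function as $I_i\circ g^{|B_i|}$, where $I_i:\{0,1\}^{B_i}\to\{0,1,*\}$ is the partial function with $I_i(z|_{B_i})=0$, $I_i(z^{B_i}|_{B_i})=1$, and $*$ elsewhere. This yields the clean identity
\[
\bdeg(f\circ g)\ \ge\ \bdeg\bigl(\PrOR_b\circ(I_1\circ g,\ldots,I_b\circ g)\bigr),
\]
but now the inner functions are \emph{different partial} functions of possibly different arities. The $\PrOR$-composition theorem of \cite{BBG+18} that you planned to invoke only treats a single total inner function, so the paper proves the needed generalization (Theorem~\ref{thm: for different function in PrOR}): $\bdeg(\PrOR_n\circ(g_1,\dots,g_n))=\Omega(\sqrt{n}\cdot\min_i\bdeg(g_i)/\log n)$ for arbitrary partial $g_i$. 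Combined with $\bdeg(I_i\circ g)\ge\adeg(g)$ (each $I_i$ is non-constant), this gives the theorem. Your dual-block-composition sketch targets $\OR_b\circ g$ with a \emph{single} $g$, which is the wrong endpoint once the identification step fails; to salvage your route you would have to redo the dual argument in the heterogeneous-inner-partial-function setting, which is essentially what the paper does via the $\mathsf{SCGT}$/robust-polynomial machinery of \cite{BBG+18}.
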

We first note that the above theorem is tight in terms of block sensitivity, i.e., we cannot have $\adeg(f \circ g) = \widetilde \Omega(\bs(f)^c\cdot \adeg(g))$ for any $c > 1/2$.   This is because the $\OR$ function over $n$ bits witnesses the tight quadratic separation between $\adeg$ and $\bs$, i.e., $\adeg(\OR_n) = \Theta(\sqrt{n})= \Theta(\sqrt{\bs(\OR_n)})$ \cite{NS94}. 

We also get the following composition theorem as a corollary. It says that the composition for $\adeg$ holds 
when the outer function has minimal approximate degree with respect to its block sensitivity. Recall, $\adeg(f) = \Omega(\sqrt{\bs(f)})$ \cite{NS94}.  
\begin{restatable}[]{corollary}{ApproxDegBSCorro}
\label{cor:adeg-minimal-bs}
For all Boolean function $f: \zone^n \to \zone$ with $\adeg(f) = \Theta(\sqrt{\bs(f)})$ and for all $g:\zone^m\to\zone$, we have 
$\adeg(f \circ g) = \widetilde{\Theta}(\adeg(f)\cdot \adeg(g))$.
\end{restatable}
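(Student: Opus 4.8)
The plan is to derive the corollary directly from Theorem~\ref{thm: blocksensitivity composition} together with the known upper bound on the approximate degree of a composition. For the lower bound, I would start from $\adeg(f \circ g) = \widetilde{\Omega}(\sqrt{\bs(f)} \cdot \adeg(g))$, which is exactly the statement of Theorem~\ref{thm: blocksensitivity composition} and holds for all non-constant (possibly partial) $f$ and $g$. The hypothesis $\adeg(f) = \Theta(\sqrt{\bs(f)})$ then lets me replace $\sqrt{\bs(f)}$ by $\adeg(f)$ up to constant factors, yielding $\adeg(f \circ g) = \widetilde{\Omega}(\adeg(f) \cdot \adeg(g))$.

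For the matching upper bound, I would invoke Sherstov's composition upper bound $\adeg(f \circ g) = O(\adeg(f) \cdot \adeg(g))$ from~\cite{She13b}, which holds for arbitrary (partial) Boolean functions $f$ and $g$ with no structural assumption on the outer function. Combining the two bounds immediately gives $\adeg(f \circ g) = \widetilde{\Theta}(\adeg(f) \cdot \adeg(g))$, as claimed. (Note we also freely use $\adeg(f) = \Omega(\sqrt{\bs(f)})$ of~\cite{NS94} only to make sense of the hypothesis, not in the argument itself.)

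Since the corollary reduces to substituting the hypothesis into Theorem~\ref{thm: blocksensitivity composition}, there is no independent obstacle here: all the difficulty is already absorbed into that theorem. The one bookkeeping point worth verifying is that the hidden polylogarithmic factors are consistent — Theorem~\ref{thm: blocksensitivity composition} hides polylog factors in the arity $n$ of $f$, and the composition question (Question~\ref{question: composition question}) is itself phrased with $\widetilde{\Theta}(\cdot)$ hiding polylog-in-$n$ factors — so the corollary is stated at exactly the right granularity and nothing is lost when chaining the lower and upper bounds.
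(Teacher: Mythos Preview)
Your proposal is correct and matches the paper's approach: the corollary is presented in the paper as an immediate consequence of Theorem~\ref{thm: blocksensitivity composition} together with Sherstov's upper bound $\adeg(f\circ g)=O(\adeg(f)\cdot\adeg(g))$, exactly as you outline. There is no additional content in the paper's treatment beyond what you have written.
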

This complements a result of Sherstov \cite[Theorem 6.6]{Sherstov12},   which shows that composition of $\adeg$ holds when the outer function has maximal $\adeg$ 
with respect to its arity.

We further note that Corollary~\ref{cor:adeg-minimal-bs} covers new set of composed functions $f\circ g$ for which the composition theorem for $\adeg$ doesn't follow from the known results \cite{BBG+18,Sherstov12}. 
For example, consider the Rubinstein function $\rub$ with arity $n$ (Definition~\ref{defi: Rubinstein function}) as the outer function $f$. 
It is clearly not a symmetric function. 
It also doesn't have high approximate degree, i.e., $\adeg(\rub) = O(\sqrt{n}\log n)$ (Lemma~\ref{claim: ub adeg sink and rub}). Therefore, the composition of $\adeg(\rub\circ g)$ doesn't follow from the existing results. However, it follows from Corollary~\ref{cor:adeg-minimal-bs}, since $\bs(\rub)=\Omega(n)$ and so $\adeg(\rub)=\Tilde\Theta(\sqrt{\bs(\rub)})$. 

Another example is the sink function $\sink$ over $\binom{n}{2}$ variables (Definition~\ref{defi: sink}), which is also not a symmetric function. Furthermore, its approximate degree is $O(\sqrt{n}\log n)$ (Lemma~\ref{claim: ub adeg sink and rub}).  Therefore, the composition of $\adeg(\sink\circ g)$ also doesn't follow from the existing results. 
Again, it follows from Corollary~\ref{cor:adeg-minimal-bs}, since 
$\bs(\sink) = \Theta(n)$ (Observation~\ref{obs: bs of sink}) and $\adeg(\sink) = \Tilde\Theta(\sqrt{n})$.  

\paragraph*{Ideas behind proof of Theorem~\ref{thm: blocksensitivity composition}}
We will first sketch the proof ideas in the case when $f$ and $g$ are total Boolean functions, and then explain how to extend it to partial functions too. 

Our starting point is the well known Nisan-Szegedy's embedding of $\PrOR$ over $\bs(f)$ many bits in a Boolean function $f$ \cite{NS94}. Carrying out this transformation in $f\circ g$ embeds $\PrOR_{\bs(f)} \circ (g_1,\ldots ,g_{\bs(f)})$ into $f\circ g$, where $g_1,\ldots ,g_{\bs(f)}$ are different partial functions such that $\bdeg(g_i)\geq \adeg(g)$ for all $i \in [\bs(f)]$\footnote{$\bdeg$ is the notion of approximate degree in the context of partial functions. For a formal definition, see Definition~\ref{defi: bounded degree}.}. Since the transformation is just substitutions of variables by constants, we further have 
\begin{align}
\label{eq:adeg-intro}
\adeg(f\circ g) \geq \bdeg(\PrOR_{\bs(f)} \circ (g_1,\ldots ,g_{\bs(f)})).
\end{align}
It now looks like that we can appeal to the composition theorem for $\PrOR$ (Theorem~\ref{thm:promise-or-composition}) \cite{BBG+18} to obtain our theorem. However, there is a technical difficulty -- Theorem~\ref{thm:promise-or-composition} doesn't hold for \emph{different} inner \emph{partial} functions. 
It only deals with a single total inner function. We therefore generalize the proof of Theorem~\ref{thm:promise-or-composition} to obtain the following general version of the composition theorem for $\PrOR$.
\begin{restatable}[]{theorem}{ApproxDegPrORThm}
\label{thm: for different function in PrOR}
For any partial Boolean functions $g_1, g_2, \dots, g_n$, we have
\begin{align*}
\bdeg\left(\PrOR_n\circ (g_1,g_2,\ldots ,g_n)\right) = \Omega\left(\frac{\sqrt{n}\cdot \min_{i=1}^n\bdeg(g_i)}{\log n}\right).
\end{align*}
\end{restatable}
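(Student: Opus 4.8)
The plan is to generalize the existing proof of the composition theorem for $\PrOR$ with a single total inner function (Theorem~\ref{thm:promise-or-composition} of \cite{BBG+18}) so that it tolerates $n$ distinct partial inner functions. Recall that the $\PrOR_n$ function is a partial function: its input is promised to be either all-zeros (output $0$) or to have Hamming weight exactly one (output $1$). The standard lower bound for $\adeg(\OR_n\circ g)$, and more generally for $\PrOR_n\circ g$, proceeds via a dual polynomial / dual witness construction: one combines a dual witness $\phi$ for the high approximate degree of $\PrOR_n$ with dual witnesses $\psi_i$ for the high (bounded) approximate degree of each $g_i$, and takes a suitable ``dual block composition'' $\phi \star (\psi_1,\dots,\psi_n)$ that (i) is well-correlated with $\PrOR_n\circ(g_1,\dots,g_n)$, (ii) has pure high degree at least $\deg_\varepsilon$-threshold times $\min_i \bdeg(g_i)$, and (iii) has $\ell_1$-norm $1$. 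The key point is that the construction of $\phi$ and the combinatorial argument in \cite{BBG+18} only ever use properties of the outer function $\PrOR_n$ and treat the inner pieces as black boxes; so the argument should go through verbatim once we verify that the inner dual witnesses for \emph{partial} functions behave correctly under the promise.

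Concretely, first I would recall the dual characterization of $\bdeg$ for partial functions: $\bdeg(g_i)\ge d$ iff there is $\psi_i:\zone^m\to\RR$ with $\|\psi_i\|_1=1$, $\sum_x \psi_i(x) g_i(x)$ bounded away from the value it would have on the ``lying'' sign, $\psi_i$ vanishes on the complement of $\Dom(g_i)$ in the appropriate sense, and $\psi_i$ has pure high degree $\ge d$ (i.e.\ is orthogonal to all monomials of degree $< d$). Set $d := \min_i \bdeg(g_i)$. Second, I would take the dual witness $\phi$ for $\PrOR_n$ witnessing $\adeg_{\varepsilon}(\PrOR_n) = \Omega(\sqrt n)$ — this is where the $\sqrt n$ comes from, and it is a classical symmetric-function dual witness (essentially the one behind the $\adeg(\OR_n)=\Theta(\sqrt n)$ bound of \cite{NS94}), supported on Hamming weights $0$ and $1$ to respect the promise. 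Third, I would form the dual block composition $\Psi := \phi \star (\psi_1,\dots,\psi_n)$ defined by $\Psi(x_1,\dots,x_n) = 2^n\,\phi(\mathrm{sign\text{-}profile}) \prod_i \psi_i(x_i)$ in the usual way (the standard formula from Sherstov/Lee's dual block method, adapted as in \cite{BBG+18}), and verify the three bullet properties: $\ell_1$-norm $1$ is multiplicative and uses $\|\phi\|_1=\|\psi_i\|_1=1$; pure high degree $\ge d$ follows because each $\psi_i$ has pure high degree $\ge d$ and the block structure multiplies degrees; and the correlation with $\PrOR_n\circ(g_1,\dots,g_n)$ on the promised domain is handled exactly as in \cite{BBG+18}, using that on inputs in the domain each block $x_i$ is a ``yes'' or ``no'' instance of $g_i$ and $\psi_i$ pushes correlation in the right direction. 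The $(\log n)$ loss in the statement is the standard error-reduction / correlation-boosting overhead (from $\varepsilon$ fixed to $\varepsilon = 1/3$, or to control the accumulation of errors across $n$ blocks), identical to the loss already present in Theorem~\ref{thm:promise-or-composition}.

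The main obstacle I anticipate is bookkeeping around \emph{partiality}: with a single total inner function, the dual block composition automatically lands in a clean domain, but here each $g_i$ has its own promise domain $\Dom(g_i)$, so I must be careful that (a) the dual witnesses $\psi_i$ are genuinely supported (in the $\ell_1$ sense, for the pure-high-degree part) on $\Dom(g_i)$ so that the composed witness only ``sees'' legal inputs to $\PrOR_n$, and (b) the promise of $\PrOR_n$ (exactly one block is a $1$-instance, or none are) is respected by the support of $\Psi$ — this requires checking that $\phi$'s support on Hamming weights $\{0,1\}$ interacts correctly with the product structure. A second, more technical point is making sure the pure-high-degree argument survives when the $\bdeg(g_i)$ are unequal: taking $d = \min_i \bdeg(g_i)$ and truncating each $\psi_i$ to pure high degree exactly $d$ (which is fine, since having pure high degree $\ge \bdeg(g_i) \ge d$ implies pure high degree $\ge d$) resolves this, but one should double-check the correlation bound still holds after this normalization. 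Once these partiality details are in place, the rest is a direct transcription of the \cite{BBG+18} argument, and plugging the resulting Theorem~\ref{thm: for different function in PrOR} into \eqref{eq:adeg-intro} yields Theorem~\ref{thm: blocksensitivity composition}.
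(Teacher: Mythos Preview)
Your proposal takes a fundamentally different route from the paper, and as stated it contains a concrete error. The paper's proof (following \cite{BBG+18}) is \emph{primal}, not dual: given a bounded approximating polynomial $q$ for $\PrOR_n\circ(g_1,\dots,g_n)$ of degree $T$, one amplifies its error to $O(1/n)$ (this is the source of the $\log n$), and for each $S\subseteq[n]$ produces a polynomial $q'_S$ approximating $\PrOR_{|S|}\circ(g_i)_{i\in S}$ by fixing every block $i\notin S$ to a $0$-input of the corresponding $g_i$. These $2^n$ bounded polynomials are then plugged into a robust multilinear $O(\sqrt n)$-degree polynomial for $\mathsf{XOR}_n\circ\mathsf{SCGT}_{2^n}$ (obtained from a quantum query upper bound), yielding an approximation of $\mathsf{XOR}_n\circ(g_1,\dots,g_n)$ of degree $O(T\sqrt n\log n)$. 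Sherstov's lower bound $\bdeg(\mathsf{XOR}\circ(g_1,\dots,g_n))=\Omega(\sum_i\bdeg(g_i))\ge \Omega(n\min_i\bdeg(g_i))$ then forces $T=\Omega(\sqrt n\min_i\bdeg(g_i)/\log n)$. No dual witness for $\PrOR_n$ appears anywhere, and the generalization to different partial $g_i$'s is essentially free because the only place the inner functions enter is in the choice of ``$0$-input'' for each block and in the final invocation of Sherstov's $\mathsf{XOR}$ bound, both of which already allow distinct partial functions.

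Your dual-block-composition sketch breaks at a specific point: you assert that the outer dual witness $\phi$ for $\PrOR_n$ is ``supported on Hamming weights $0$ and $1$ to respect the promise.'' But any function supported on the $n+1$ points of Hamming weight at most $1$ that is orthogonal to the $n+1$ monomials $1,x_1,\dots,x_n$ is identically zero, so such a $\phi$ cannot have pure high degree $\ge 2$, let alone $\Omega(\sqrt n)$. The genuine dual witness behind $\bdeg(\PrOR_n)=\adeg(\OR_n)=\Theta(\sqrt n)$ is spread over Hamming weights up to $\Theta(\sqrt n)$. Once $\phi$ carries mass at weight $k\gg 1$, the composed $\Psi$ puts mass on inputs where many blocks are simultaneously $1$-instances of the $g_i$, i.e.\ \emph{outside} the promise of $\PrOR_n\circ(g_1,\dots,g_n)$, and controlling the resulting correlation is exactly the obstacle that made $\OR$-composition hard for so long. \cite{BBG+18} circumvents this by the primal $\mathsf{SCGT}$ argument rather than by any dual trick, so your appeal that the correlation ``is handled exactly as in \cite{BBG+18}'' points to machinery that is not there.
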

We can now obtain our main theorem from Eq.~\eqref{eq:adeg-intro} and Theorem~\ref{thm: for different function in PrOR}. The proof of Theorem~\ref{thm: for different function in PrOR} is a generalization of Theorem~\ref{thm:promise-or-composition}. For lack of space, we present it in Appendix~\ref{appendix: shifted OR}. 

We end this part with a comment on how to extend our main theorem to partial functions. 
Note that with the appropriate definition of block sensitivity (Definition~\ref{defi: block sensitivity}), the embedding of Nisan-Szegedy's carries through and then the rest of the proof is identical.

\subsection{Composition results when the outer functions has some symmetry}

The class of symmetric functions capture many important function like $\OR$, $\AND$, Parity and Majority. 
Recall that a function is symmetric when the function value only depends on the Hamming weight of the input; in other words, a function is symmetric iff its value on an input remains unchanged even after permuting the bits of the input. As noted earlier, both for $\R$ and $\adeg$, composition was known to hold when the outer function was symmetric.


A natural question is, whether one can prove composition theorems when the outer function is \emph{weakly} symmetric (it is symmetric with respect to a weaker notion of symmetry). In this paper we consider one such notion of symmetry -- junta-symmetric functions.

\begin{defi}[$k$-junta symmetric function]
\label{junta symmetric function}
A function $f: \zone^n \to \zone$ is called a $k$-junta symmetric function if there exists a set $\mathcal{J}$ of size $k$ of variables such that the function value depends on assignments to the variables in $\mathcal{J}$ as well as on the Hamming weight of the whole input.
\end{defi}


$k$-junta symmetric functions can be seen as a mixture of symmetric functions and $k$-juntas. This class of functions has been considered previously in literature, particularly in \cite{CFGM12, BWY15} where these functions plays a crucial role. \cite{CFGM12} even presents multiple characterisations of $k$-junta symmetric functions for constant $k$. 
Note that by definition an arbitrary $k$-junta (i.e., a function that depend on $k$ variables) is also a $k$-junta symmetric function, since we can consider the dependence on Hamming weight to be trivial. Thus, this notion loses out on the symmetry of the function considered. We, therefore, consider the class of \emph{strongly} $k$-junta symmetric functions. 
\begin{defi}[Strongly $k$-junta symmetric function]
\label{strongly junta symmetric function}
A $k$-junta symmetric function is called strongly $k$-junta symmetric if every variable is influential. In other words, 
there exists a setting to the junta variables such that 
the function value depends on the Hamming weight of the whole input in a non-trivial way. 
\end{defi}

We prove that if the outer function is strongly $\sqrt{n}$-junta symmetric (``strongly'' indicating that the dependence on the Hamming weight is non-trivial) then $\adeg$ composes. On the other hand, Theorem~\ref{theo: composition with full RIntro} implies that $\R$ composes for any strongly $k$-junta symmetric functions (as long as $n-k = \Theta(n)$).
\begin{theorem}
\label{thm: junta sym composition Intro}
For any strongly $k$-junta symmetric function $f: \zone^n \to \zone$ and any Boolean function $g: \zone^m \to \zone$, we have 
\begin{itemize}
    \item $\adeg(f\circ g) = \widetilde{\Theta}({\adeg(f)\cdot \adeg(g)})$ where $ k = O(\sqrt{n})$.
    \item $\R(f\circ g) = \widetilde{\Theta}({\R(f)\cdot \R(g)})$ where $n-k = \Theta(n)$.
\end{itemize}
\end{theorem}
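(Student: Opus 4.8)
The plan is to reduce the composition of a strongly $k$-junta symmetric outer function to the two workhorses established earlier: Corollary~\ref{cor:adeg-minimal-bs} for the $\adeg$ statement, and Theorem~\ref{theo: composition with full RIntro} for the $\R$ statement. For either measure, the key structural fact we need about a strongly $k$-junta symmetric $f$ is a lower bound on its block sensitivity (resp.\ randomized query complexity) coming from the non-trivial Hamming-weight dependence. Let $\mathcal{J}$ be the junta set, $|\mathcal{J}| = k$. By Definition~\ref{strongly junta symmetric function}, there is a fixing $\rho$ of the variables in $\mathcal{J}$ so that the restricted function $f_\rho$ on the remaining $n - k$ variables is a non-constant symmetric function. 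A non-constant symmetric function on $N$ bits has block sensitivity $\Theta(N)$ (it changes value at some Hamming weight threshold $w$, and from a point of weight $w$ one can flip any single coordinate to cross it, so $\bs(f_\rho) \ge \min(w+1, N-w+1) \cdot (\text{something})$; more simply, $\bs$ of a non-constant symmetric function on $N$ bits is $\Omega(N)$, in fact it is at least $\lceil (N+1)/2 \rceil$ or so). Since restriction cannot increase block sensitivity, $\bs(f) \ge \bs(f_\rho) = \Omega(n - k)$. Similarly $\R(f) \ge \R(f_\rho) = \Theta(n-k)$, because a non-constant symmetric function on $N$ bits has $\R = \Theta(N)$.

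For the $\adeg$ bullet: assume $k = O(\sqrt{n})$, so $n - k = \Theta(n)$ and hence $\bs(f) = \Omega(n)$. On the other hand $\adeg(f) = O(\sqrt{n})$: indeed $f$ is determined by the $k$ junta bits together with the Hamming weight of the full input, and one can approximate it by combining a degree-$k = O(\sqrt n)$ exact polynomial handling the junta restrictions with the $O(\sqrt{N})$-degree approximations of the symmetric restrictions (or, more crudely, $\adeg(f) \le \sqrt{bs(f) \cdot \text{something}}$ — but the cleanest route is: each of the $2^k$ restrictions $f_\rho$ is symmetric on $\le n$ bits with $\adeg(f_\rho) = O(\sqrt n)$, and $\adeg$ of a function built by conditioning on $k$ bits is $O(k + \max_\rho \adeg(f_\rho)) = O(\sqrt n)$). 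Combined with the universal lower bound $\adeg(f) = \Omega(\sqrt{\bs(f)}) = \Omega(\sqrt{n})$, we get $\adeg(f) = \Theta(\sqrt{n}) = \Theta(\sqrt{\bs(f)})$. Thus $f$ satisfies the hypothesis of Corollary~\ref{cor:adeg-minimal-bs}, which immediately yields $\adeg(f \circ g) = \widetilde{\Theta}(\adeg(f) \cdot \adeg(g))$.

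For the $\R$ bullet: assume $n - k = \Theta(n)$. Then $\R(f) \ge \R(f_\rho) = \Theta(n - k) = \Theta(n)$, and since trivially $\R(f) = O(n)$, we have $\R(f) = \Theta(n)$. Now Theorem~\ref{theo: composition with full RIntro} applies directly and gives $\R(f \circ g) = \Omega(\R(f) \cdot \R(g))$; the matching upper bound $\R(f \circ g) = \widetilde{O}(\R(f) \cdot \R(g))$ is the folklore bound, so $\R(f \circ g) = \widetilde{\Theta}(\R(f)\cdot\R(g))$.

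The only non-routine step is justifying the upper bound $\adeg(f) = O(\sqrt n)$ for a strongly $O(\sqrt n)$-junta symmetric function; the lower bounds on $\bs(f)$ and $\R(f)$ via restriction to a symmetric subfunction are standard, and everything else is a direct invocation of earlier theorems. I expect the $\adeg$ upper bound to need a short argument that conditioning a Boolean function on $k$ bits and taking a maximum of the resulting approximate degrees costs only an additive $O(k)$: write $f(x) = \sum_{\rho \in \zone^{\mathcal J}} \mathbbm{1}[x_{\mathcal J} = \rho] \cdot f_\rho(x_{\overline{\mathcal J}})$, replace each indicator by its exact degree-$k$ polynomial and each $f_\rho$ by its degree-$O(\sqrt n)$ $\epsilon/2^k$... — wait, the error blows up; instead approximate each $f_\rho$ to constant error and note only one indicator is nonzero, so errors don't accumulate across $\rho$. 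This gives degree $k + O(\sqrt{n}) = O(\sqrt n)$, as needed.
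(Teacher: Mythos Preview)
Your proof of the $\R$ bullet is essentially identical to the paper's: restrict the junta variables to expose a non-constant symmetric function on $n-k$ bits, observe that such a function has $\R = \Theta(n-k) = \Theta(n)$, hence $\R(f)=\Theta(n)$, and invoke Theorem~\ref{theo: composition with full RIntro}. Nothing to add there.

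Your proof of the $\adeg$ bullet, however, contains a genuine gap. You claim that every restriction $f_\rho$ (a symmetric function on $n-k$ bits) has $\adeg(f_\rho)=O(\sqrt{n})$, and hence that $\adeg(f)=O(\sqrt{n})$. This is false: symmetric functions on $N$ bits can have approximate degree anywhere between $\Theta(\sqrt{N})$ and $\Theta(N)$ --- take $f=\PARITY_n$ (a strongly $0$-junta symmetric function, so certainly within the scope of the theorem), for which $\adeg(f)=n$ while $\bs(f)=n$. In that case $\adeg(f)\ne\Theta(\sqrt{\bs(f)})$, so Corollary~\ref{cor:adeg-minimal-bs} does not apply. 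Your reduction to that corollary therefore cannot cover the full class of strongly $k$-junta symmetric functions.

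The paper avoids this by \emph{not} going through block sensitivity at all for the $\adeg$ part. Instead it uses Paturi's characterization $\adeg(h)=\Theta(\sqrt{N(\gamma(h)+1)})$ for symmetric $h$, and proves (Proposition~\ref{prop:adeg-k-junta}) that $\adeg(f)=\Theta(\max\{k,\sqrt{(n-k)\gamma_{\max}}\})$ where $\gamma_{\max}=\max_\rho \gamma(f_\rho)$. When $k=O(\sqrt{n})$ this gives $\adeg(f)=\Theta(\sqrt{(n-k)\gamma_{\max}})=\Theta(\adeg(f'))$ for the symmetric restriction $f'$ achieving $\gamma_{\max}$. Then one applies the \emph{symmetric} composition theorem of~\cite{BBG+18} (Theorem~\ref{thm:sym-composition}) to $f'\circ g$, obtaining
\[
\adeg(f\circ g)\ \ge\ \adeg(f'\circ g)\ =\ \widetilde\Omega(\adeg(f')\cdot\adeg(g))\ =\ \widetilde\Omega(\adeg(f)\cdot\adeg(g)).
\]
So the fix is to replace your appeal to Corollary~\ref{cor:adeg-minimal-bs} by an appeal to the symmetric composition theorem plus Paturi's tight bound on $\adeg$ of the restrictions; the indicator-decomposition you sketched does give the right \emph{upper} bound on $\adeg(f)$, but the ceiling is $k+O(\max_\rho\adeg(f_\rho))$, not $k+O(\sqrt{n})$.
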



For the lack of space, the proof of the above theorem is given in Appendix~\ref{appendix: junta symmetric function}. Note that if one is able to prove the above theorem for $k$-junta-symmetric functions (without the requirement of ``strongly'') for any non-constant $k$ then we would have the full composition theorem. 




\section*{Organization of the paper}

We have formally defined complexity measures and Boolean functions needed for our results in Section~\ref{section: prelims}. Section~\ref{section: Composition for R} contains proofs of our results related to the composition of randomized query complexity (Theorem~\ref{theo: composition with full RIntro}). In Section~\ref{section: block sensitivity proof} we give the proof of our result for the composition of approximate degree (Theorem~\ref{thm: blocksensitivity composition}). Finally, the results about composition of functions with weak notion of symmetry are in  Appendix~\ref{appendix: junta symmetric function}.



\section{Preliminaries}
\label{section: prelims}

\textbf{Notations:}
We will use $[n]$ to represent the set $\{1, \dots, n\}$.  For any (possibly partial) Boolean function $f:\zone^n \to \{0,1,*\}$ we will denote by $\Dom(f)$ the set $f^{-1}(\{0,1\})$. The arity of $f$ is the number of variables - in this case $n$. A Boolean function $f:\zone^n \to \{0,1,*\}$ is said to be total if $\Dom(f) = \{0,1\}^n$. Any function (not otherwise stated) will be a total Boolean function.

For any $x\in \{0,1\}^n$, we will use $|x|$ to denote the number of $1$s in $x$, that is, the Hamming weight of the string $x$. The string $x^i$ denotes the modified string $x$ with the $i$-th bit flipped. Similarly, $x^{B}$ is defined to be the string such that all the bits whose index is contained in the set $B \subseteq [n]$ are flipped in $x$. 

Following is a formal definition of (partial) function composition. 
\begin{defi}[Generalized composition of functions]
\label{defi: Generalized composition of functions}
For any (possibly partial) Boolean function $f:\zone^n \to \{0,1,*\}$ and $n$ (possibly partial) Boolean functions $g_1, g_2, \dots, g_n$, define the (possibly partial) composed function 
\[f \circ (g_1, g_2, \dots, g_n)(x_1,x_2, \dots, x_n)= f(g_1(x_1),g_2(x_2), \dots, g_n(x_n)),\]
where $g_i$'s can have different arities and, moreover,  if $x_i \notin \Dom(g_i)$ for any $i\in [n]$ or the string $(g_1(x_1),g_2(x_2), \dots, g_n(x_n)) \notin \Dom(f)$, then the function $f \circ g$ outputs $*$.
\end{defi}





In this paper we use the standard definitions of various complexity measures like randomized query complexity,  sensitivity, block-sensitivity, fractional block sensitivity, noisy randomized query complexity, approximate degree and bounded approximate degree and the partial function Promise-$\OR$. We present the formal definitions in the following subsections. 

\ 
\subsection{Standard definitions and functions}
\label{appendix: pewlim}
\subsubsection{Standard definition of complexity measures}
We look at many different complexity measures in the paper, let us start with the formal definition of $\R$ and $\adeg$. 

\begin{defi}[Randomized query complexity $(\R)$]
\label{defi: randomized query complexity}
 Let $f : \zone^n \to \{0,1,*\}$ be a (possibly partial) Boolean function. A randomized query algorithm $A$ computes $f$ if 
 $\forall x \in \Dom(f), \Pr[A(x) \neq f(x)] \leq 1/3$, 
 where the probability is over the internal randomness of the algorithm. The cost of the algorithm $A$, \text{cost}(A), is the number of queries made in the worst case over any input as well as internal randomness. The randomized query complexity of $f$, denoted by $\R(f)$, is defined as
 
 \begin{align*}
     \R(f) = \min_{A \text{ computes } f} \text{cost}(A).
 \end{align*}
\end{defi}


\begin{defi}[Approximate degree $(\adeg)$]
\label{defi: approximate degree}
A polynomial $p: \mathbb{R}^n \to \mathbb{R}$ is said to approximate a Boolean function $f: \zone^n \to \{0,1\}$ if 
 \(   | p(x) -f(x) | \leq 1/3, \quad \forall x \in \zone^n\).
The approximate degree of $f$, $\adeg(f)$, is the minimum possible degree of a polynomial which approximates $f$.
\end{defi}
Note that the constant $1/3$ in the above definitions can be replaced by any constant strictly smaller than $1/2$ which changes $\adeg(f)$ by only a constant factor. 

Other than $\R$ and $\adeg$, two important related measures are sensitivity ($\s(f)$) and block sensitivity ($\bs(f)$). While the sensitivity and block sensitivity of a total function is well defined, we note that for the case of partial functions there are at least two valid ways of extending the definition from total functions to partial functions. All our results in this paper will hold for partial functions with the following definitions of sensitivity and block sensitivity. 

\begin{defi}
\label{defi: block sensitivity}
The sensitivity $\s(f,x)$ of a function $f: \zone \to \zonep$ on $x$ is the maximum number $s$ such that there are indices $i_1, i_2, \dots, i_s \in [n]$ with
$f(x^{i_j}) = 1-f(x)$, for all $1\leq j\leq s$. 
Here $x^{i}$ is obtained from $x$ by flipping the $i^{th}$ bit. 
The sensitivity of $f$ is defined to be $\s(f) = \max_{x \in \Dom(f)} \s(f,x)$.

Similarly, the block sensitivity $\bs(f,x)$ of a function $f: \zone \to \zonep$ on $x$ is the maximum number $b$ such that there are disjoint sets $B_1, B_2, \dots, B_b \subseteq [n]$ with $f(x^{B_j}) = 1 - f(x)$ for all $1\leq j\leq b$.
Recall $x^{B_j}$ is obtained from $x$ by flipping all bits inside $B_j$.
The block sensitivity of $f$ is defined to be $\bs(f) = \max_{x \in \Dom(f)} \bs(f,x)$.
\end{defi}
In the definition of block sensitivity, the constraint that the blocks has to be disjoint can be relaxed by extending the definition to ``fractional blocks''. This gives the measure of fractional block sensitivity. 
\begin{defi}
\label{defi: fbs}
The fractional block sensitivity $\fbs(f,x)$ of a function $f: \zone \to \zonep$ on $x$ is the maximum value of $\sum_{j=i}^{b}p_j$ such that there are sets $B_1, B_2, \dots, B_b \subseteq [n]$ and $p_1, \dots, p_b\in (0,1]$ satisfying the following two conditions.
\begin{itemize}
    \item For each $1\leq j \leq b$, $f(x^{B_j}) = 1 - f(x)$, and
    \item For each $1\leq i \leq n$,  $\sum_{j\colon i \in B_j} p_j \leq 1$.
\end{itemize}
The fractional block sensitivity of $f$ is defined to be $\fbs(f) = \max_{x \in \Dom(f)} \fbs(f,x)$.
\end{defi}

\subsubsection{For the composition of $\R$}
\label{sec:prelimR}

The function Gap-Majority has played an important role in the study of composition of $\R$.

\begin{defi}[Gap-Majority]
\label{defi: gapmaj}
The function $\GapMaj_t : \zone^t \to \{0,1,*\}$ is a partial function with arity $t$ such that 
\begin{align*}
    \GapMaj_t(x) = \begin{cases}
    1 & \text{if } |x| = t/2 + 2\sqrt{t},\\
    0 & \text{if } |x| = t/2 - 2\sqrt{t}, \\
    * & \text{otherwise.}
    \end{cases}
\end{align*}
\end{defi}

It can be shown that $\R(\GapMaj_t) = \Theta(t)$~\cite{bb20focs}.

In regards to the composition question of $\R$, one of the most significant complexity measures (defined by Ben-David and Blais~\cite{bb20focs}) is that of $\noisyR$. We first define the noisy oracle model.

\begin{defi}[Noisy Oracle Model and Noisy Oracle Access to a String (\cite{bb20focs})]
For $b \in \zone$, a noisy oracle to $b$ takes a parameter $-1 \leq \gamma \leq 1$ as input and returns a bit $b'$ such that $\Pr[b' = b] = (1+\gamma)/2$. 
The cost of one such query is $\gamma^2$. Each query to noisy oracle returns independent bits.

For $x = (x_1, \dots, x_n) \in \zone^n$, noisy oracle access to $x$ is access to $n$ independent noisy oracles, one for each bit $x_i$, $i \in [n]$.
\end{defi}

Next, we define the noisy oracle model of computation.

\begin{defi}[Noisy Oracle Model of Computation (\cite{bb20focs})]
 \label{defi: Noisy Oracle Model of Computation}
 Let $f : \zone^n \to \{0,1,*\}$ be a partial Boolean function. A $\noisyR$ query algorithm $A$ computes $f$ if for all $x \in \Dom(f)$, $\Pr[A(x) \neq f(x)] \leq 1/3$, where $A$ is a randomized algorithm given noisy oracle access to $x$, and the probability is over both noisy oracle calls and the internal randomness of the algorithm $A$. The cost of the algorithm $A$ for an input $x$ is the sum of the cost of all noisy oracle calls made by $A$ on $x$, and the cost of $A$, \text{cost}(A), 
 is the maximum cost over all $x \in \Dom(f)$. The $\noisyR$ randomized query complexity of $f$, denoted by $\noisyR(f)$, is defined as
 
 \begin{align*}
     \noisyR(f) = \min_{A \text{ computes } f} \text{cost}(A).
 \end{align*}
\end{defi}

Again, $1/3$ in the above definition can be replaced by any constant $<1/2$. 
If only queries with $\gamma = 1$ are allowed in the noisy query model, then we obtain the usual randomized algorithm for $f$, thus $\noisyR(f) = O(\R(f))$.   


\textbf{For the composition of $\adeg$}
\label{sec:prelimadeg}
The definition of $\adeg$ can naturally be extended to partial functions $f$ by restricting the definition to hold only for inputs in $\Dom(f)$, i.e., it doesn't specify the value of the approximating polynomial on inputs \emph{not} in $\Dom(f)$. So the approximating polynomial can take arbitrarily large values on points outside the domain. However, for the purpose of understanding the composition of approximate degree of Boolean functions (or even total Boolean functions) one need a measure of approximate degree of partial Boolean functions which is bounded on all the points of the Boolean cube. 




\begin{defi}[Bounded approximate degree $(\bdeg)$]
\label{defi: bounded degree}
For a partial Boolean function $f:\{0,1\}^n \to \{0,1,*\}$, the bounded approximate degree of $f$ $(\bdeg(f))$ is the minimum possible degree of a polynomial $p$ such that 
\begin{itemize}
    \item $| p(x) -f(x) | \leq 1/3,  \quad \forall x \in \Dom(f)$, and
    \item $ 0 \leq p(x) \leq 1  \quad \forall x \in \zone^n$.
\end{itemize}
\end{defi}

In other words, we take the minimum possible degree of a polynomial which is bounded for all possible inputs ($p(x)\in [0,1]$ for all $x\in \zone^n$), and it approximates $f$ in the usual sense over $\Dom(f)$. 

Over the years people have tried to study the composition of $\adeg$ with different outer functions. 
In this context the following restriction of $\OR$ is an important partial function:

\begin{defi}[Promise-$\OR$]
\label{defi: promised OR}
Promise-$\OR$ (denoted by $\PrOR_n$) is the function $\PrOR_n : \zone^n \to \{0,1,*\}$ such that:
\begin{align*}
    \PrOR_n(x)= \begin{cases}
    0 & \textit{if } |x|=0,\\
    1 & \textit{if } |x|=1,\\
    * & \textit{otherwise.}
     \end{cases}
\end{align*}
\end{defi}

\noindent\textbf{Some useful previous results:} 
We will also be crucially using a few results from prior works in our proofs. The following are a couple of useful results on $\noisyR$.


\begin{lemma}[\cite{bb20focs}]
\label{lamma: noisyR omega 1}
Let $f$ be a non-constant partial Boolean function then $\noisyR(f) = \Omega(1)$.
\end{lemma}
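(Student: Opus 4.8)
The plan is to argue information-theoretically: being non-constant forces any noisy-oracle algorithm to statistically distinguish two specific inputs, and in the noisy model acquiring a constant amount of distinguishing information costs $\Omega(1)$.

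First I would set up the two inputs. Since $f$ is non-constant there are $x,y\in\Dom(f)$ with $f(x)\ne f(y)$, and the set $S=\{i\in[n]:x_i\ne y_i\}$ is non-empty. Let $A$ be any $\noisyR$ algorithm computing $f$, with $\mathrm{cost}(A)=c$; so along every execution the sum of the $\gamma^2$'s of the queries made is at most $c$. Because $A$ errs with probability at most $1/3$ on both $x$ and $y$, and $f(x)\ne f(y)$, the output distributions $A(x)$ and $A(y)$ have total variation distance at least $1/3$; since the output is a function of the transcript, the transcript distributions $\mathcal T_x$ and $\mathcal T_y$ (the sequence of queried indices, biases and returned bits, together with $A$'s internal randomness, which is sampled identically under $x$ and $y$) satisfy $\mathrm{TV}(\mathcal T_x,\mathcal T_y)\ge 1/3$ as well, by the data-processing inequality.

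Next I would bound $D(\mathcal T_x\,\|\,\mathcal T_y)$ from above by $O(c)$. We may assume $c\le 1/4$, as otherwise $c=\Omega(1)$ already. Then every query uses a bias $\gamma$ with $\gamma^2\le c\le 1/4$, i.e.\ $|\gamma|\le 1/2$. For a single query to coordinate $i$, the answer distribution is the same under $x$ and under $y$ when $i\notin S$; when $i\in S$ the two answer distributions are $\big((1+|\gamma|)/2,(1-|\gamma|)/2\big)$ and its reverse, whose KL divergence is $|\gamma|\ln\frac{1+|\gamma|}{1-|\gamma|}\le 3\gamma^2$ for $|\gamma|\le 1/2$. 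Viewing $A$ (for each fixed internal random string) as a decision tree whose nodes query a coordinate at some bias and branch on the returned bit, the chain rule for KL divergence gives that $D(\mathcal T_x\,\|\,\mathcal T_y)$ equals the expectation, over transcripts drawn from $\mathcal T_x$, of the sum of the per-query KL divergences, hence is at most $\mathbb{E}\big[\sum_{\text{queries}}3\gamma^2\big]\le 3c$.

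Finally, Pinsker's inequality yields $\tfrac13\le\mathrm{TV}(\mathcal T_x,\mathcal T_y)\le\sqrt{\tfrac12 D(\mathcal T_x\,\|\,\mathcal T_y)}\le\sqrt{3c/2}$, so $c\ge 2/27=\Omega(1)$. The only slightly delicate points are making the chain rule precise for adaptive algorithms (handled by fixing the internal randomness and working on the query decision tree) and the trivial case where some bias is large (absorbed into the assumption $c\le 1/4$); both are routine. An alternative, shorter route avoiding the information theory would be to invoke the characterization $\noisyR(f)=\Theta\big(\R(f\circ\GapMaj_n)/n\big)$ from Theorem~\ref{thm: bdb20 characterization of noisyR}: fixing all but one $\GapMaj_n$-block of $f\circ\GapMaj_n$ to values consistent with two domain inputs of $f$ differing in that block exhibits $\GapMaj_n$ as a subfunction of $f\circ\GapMaj_n$, so $\R(f\circ\GapMaj_n)\ge\R(\GapMaj_n)=\Omega(n)$ and hence $\noisyR(f)=\Omega(1)$. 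I would present the information-theoretic argument as the main proof, since it is self-contained and does not route through the deep characterization theorem.
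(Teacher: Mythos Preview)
The paper does not actually prove this lemma; it is quoted from \cite{bb20focs} and used as a black box. So there is no in-paper argument to compare against.

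Your main information-theoretic proof is correct and self-contained: the reduction to distinguishing two fixed inputs, the chain-rule bound $D(\mathcal T_x\|\mathcal T_y)\le 3c$ via the per-query KL estimate $|\gamma|\ln\tfrac{1+|\gamma|}{1-|\gamma|}\le 3\gamma^2$ for $|\gamma|\le 1/2$, and Pinsker's inequality all go through exactly as you wrote them. This is a clean way to establish the lemma without invoking any of the heavier machinery of \cite{bb20focs}.

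One caveat about your alternative route through Theorem~\ref{thm: bdb20 characterization of noisyR}: the step ``fix all but one $\GapMaj_n$-block to values consistent with two domain inputs of $f$ differing in that block'' tacitly assumes there exist $x,y\in\Dom(f)$ with $f(x)\ne f(y)$ that differ in a \emph{single} coordinate. For partial functions this need not hold (e.g.\ $\Dom(f)=\{0^n,1^n\}$), so as stated that alternative argument has a gap. It can be repaired, but not as cheaply as you suggest; your information-theoretic argument is the right one to present.
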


\begin{theorem}[\cite{bb20focs}]
\label{thm: bdb20 composition at least noisyR times R}
For all partial functions $f$ and $g$, $\R(f \circ g) = \Omega(\noisyR(f)\cdot \R(g))$.
\end{theorem}

We will also be using the following theorem of \cite{BBG+18} regarding the composition question of $\bdeg$ when the outer function is $\PrOR_n$.  Informally, we will call it the Promise-$\OR$ composition theorem. 
\begin{theorem}[\cite{BBG+18}]
\label{thm:promise-or-composition}
For any Boolean function $g: \zone^m \to \{0,1\}$ we have,
\begin{align*}
    \bdeg\left(\PrOR_n \circ g\right) = \Omega\left(\sqrt{n}\cdot\adeg(g)/\log n\right).
\end{align*}
\end{theorem}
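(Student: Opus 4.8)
The plan is to argue by LP duality, producing a \emph{dual witness} for $\bdeg(\PrOR_n \circ g)$ rather than reasoning about approximating polynomials directly. Recall that to certify $\bdeg(F) \ge D$ for a (possibly partial) Boolean function $F$ on $N$ bits it suffices to exhibit an object dual to the linear program defining $\bdeg$: a signed measure $\Psi$ on $\zone^N$ of bounded $\ell_1$-norm, together with the auxiliary non-negative measures that encode the global constraint $0 \le p \le 1$, such that the combined object has \emph{pure high degree} at least $D$ (is orthogonal to every monomial $\chi_S$ with $\abs S < D$) and correlates non-trivially with $F$ on $\Dom(F)$. Taking $N = nm$ and $F = \PrOR_n \circ g$, it suffices to build such a witness with pure high degree $\widetilde\Omega(\sqrt n \cdot \adeg(g))$.

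I would assemble it by \emph{dual block composition} (the technique of Sherstov and of Bun--Thaler) from two pieces. Let $d = \adeg(g)$. First, for every error $\eta > 0$ we have $\adeg_\eta(g) \ge \adeg_{1/3}(g) = d$, so there is a dual witness $\psi : \zone^m \to \RR$ for $g$ with $\norm\psi_1 = 1$, pure high degree $\ge d$ (in particular $\sum_z \psi(z) = 0$ and $\norm{\psi^+}_1 = \norm{\psi^-}_1 = 1/2$), and correlation with $g$ at least $1 - \eta$; a short computation then bounds the total $\abs\psi$-mass of points $z$ with $\sign\psi(z) \ne (-1)^{g(z)}$ by $\eta$, so this ``misalignment defect'' can be taken as small as $1/\mathrm{poly}(n)$. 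Second, I would use a dual witness $\phi$ certifying $\bdeg(\PrOR_n) = \Theta(\sqrt n)$ — the lower bound here being a Markov brothers' inequality argument on univariate polynomials, which must run from $\approx 0$ at Hamming weight $0$ to $\approx 1$ at Hamming weight $1$ while staying in $[0,1]$ on all of $\set{0, \dots, n}$ — again taken at small error, at the price of a polylogarithmic factor in its pure high degree. The composed witness is, schematically, $\Psi(x_1, \dots, x_n) = 2^n\,\phi\big(\sign\psi(x_1), \dots, \sign\psi(x_n)\big)\prod_{i=1}^n \abs{\psi(x_i)}$. I would then verify: (i) $\norm\Psi_1 = \norm\phi_1 = 1$, the per-block mass $\norm{\psi^\pm}_1 = 1/2$ cancelling the $2^n$; (ii) $\Psi$ is (essentially) supported on $\Dom(\PrOR_n \circ g)$, since a nonzero term forces $(\sign\psi(x_1), \dots, \sign\psi(x_n))$ into $\supp(\phi)$, which lies in the promise set $\set{y : \abs y \le 1}$ of $\PrOR_n$; (iii) the pure high degree of $\Psi$ is at least (pure high degree of $\phi$)$\,\cdot\,d = \widetilde\Omega(\sqrt n\, d)$, because a surviving monomial must spend degree $\ge d$ inside every block whose outer coordinate is touched and must touch $\widetilde\Omega(\sqrt n)$ outer coordinates; and (iv) by Sherstov's identity for the correlation of a block composition, $\Psi$ correlates non-trivially with $\PrOR_n \circ g$ — the correlation of $\phi$ with $\PrOR_n$ being degraded by roughly $n$ times the misalignment defect of $\psi$, which is exactly why both constituent witnesses were taken at $1/\mathrm{poly}(n)$ error. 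Feeding (i)--(iv) into the duality statement yields $\bdeg(\PrOR_n \circ g) = \widetilde\Omega(\sqrt n \cdot \adeg(g))$.

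\textbf{Where the difficulty lies.} The block-composition bookkeeping in (i)--(iii) is routine once the objects are set up; the genuinely delicate points are two. First, the LP dual of $\bdeg$ for a \emph{partial} function is more subtle than that of ordinary approximate degree: because of the global boundedness constraints, a witness merely supported on the promise domain does not suffice — indeed $\PrOR_n$ has constant approximate degree if one ignores boundedness — so one must work with the full dual, keep track of the boundedness components of $\phi$, and check that this structure survives the composition. Second, controlling (iv): the correlation of the composed witness with $\PrOR_n \circ g$ degrades in a way that scales with $n$ unless the constituent witnesses are taken at inverse-polynomial error, and the degree cost of this amplification — for the outer witness in particular — is the source of the $\log n$ loss in the statement.
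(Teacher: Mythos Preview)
Your approach is genuinely different from the paper's. The paper (following \cite{BBG+18}) works entirely on the primal side: starting from a bounded approximating polynomial $q$ for $\PrOR_n \circ g$, it amplifies $q$ to error $O(1/n)$, restricts to obtain polynomials $q'_S$ for $\PrOR\circ g$ on each subset $S\subseteq [n]$ of blocks, and then feeds these $2^n$ polynomials as inputs to a robust multilinear polynomial for $\XOR_n \circ \mathsf{SCGT}_{2^n}$ (coming from the $O(\sqrt n)$-query quantum algorithm for singleton combinatorial group testing). The resulting polynomial approximates $\XOR_n \circ g$ and has degree $O(\sqrt n \cdot \deg q \cdot \log n)$; Sherstov's lower bound $\bdeg(\XOR_n\circ g) = \Omega(n\cdot \adeg(g))$ then yields the claim. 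No dual witnesses appear anywhere.

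Your dual block composition proposal, however, has a real gap. You assert that for every $\eta>0$ there is an inner dual $\psi$ with pure high degree at least $d = \adeg_{1/3}(g)$ \emph{and} correlation with $g$ at least $1-\eta$, so that the misalignment mass can be pushed to $1/\mathrm{poly}(n)$. LP duality does not give you this: from $\adeg_\eta(g) \ge d$ (valid for $\eta \le 1/3$) you obtain a $\psi$ of pure high degree $\ge d$ with correlation merely $> \eta$, not $> 1-\eta$. A witness $\psi$ with $\langle \psi, (-1)^g\rangle \ge 1-2\eta$ and pure high degree $\ge d$ would instead certify $\adeg_{1-2\eta}(g) \ge d$, which as $\eta \to 0$ becomes the statement that the \emph{threshold degree} of $g$ is at least $d$. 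That quantity can be dramatically smaller than $\adeg(g)$ --- for $g = \OR_m$ the threshold degree is $1$ while $\adeg(g)=\Theta(\sqrt m)$ --- so no such high-correlation inner dual need exist. Without misalignment $o(1/n)$, the $O(n\eta)$ damage term in your step (iv) swamps the constant advantage of $\phi$, and the argument collapses. This obstacle --- that naive dual block composition cannot control inner misalignment for general $g$ --- is precisely what the SCGT-to-$\XOR$ reduction of \cite{BBG+18} was designed to circumvent: it transfers the problem to $\XOR$ composition, whose multiplicative structure absorbs constant per-block error without any need for high-correlation inner duals.
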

\section{Results about composition of $\R$}
\label{section: Composition for R}



This section is devoted to the results related to the composition of randomized query complexity. Our main result states that composition of $\R$ holds if the outer function has full randomized query complexity (Theorem~\ref{theo: composition with full RIntro}). As mentioned in the proof idea, the proof critically depends on the notion of noisy randomized query complexity and its properties (introduced by Ben-David and Blais~\cite{bb20focs}). 

Recall the definition of noisy randomized query complexity of a function $f$ from Definition~\ref{defi: Noisy Oracle Model of Computation}. As mentioned in the introduction (Theorem~\ref{thm: bdb20 characterization of noisyR}), Ben-David and Blais~\cite{bb20focs} proved that     
\begin{align}\label{Eq:Noisy1}
    \noisyR(f) = \Theta\left(\frac{\R(f \circ \GapMaj_n)}{n} \right),
\end{align}
where $\GapMaj_n$ is the Gap-Majority function on $n$ bits. Note that Ben-David and Blais proved Equation~\ref{Eq:Noisy1} when the arity of functions $f$ and Gap-Majority is the same.
We show that if Equation~\ref{Eq:Noisy1} can be generalized for Gap-Majority functions of arbitrary arity for some outer function $f$, then randomized query complexity composes for the function $f$. We restate the following observation from the introduction.


\GeneralizationNoisyR*


\begin{proof}
Suppose $\noisyR(f) = {\Omega}\left(\frac{\R(f \circ \GapMaj_{t(n)})}{t(n)} \right)$, since $\R(f \circ \GapMaj_t) \geq \R(f)$, we have $\noisyR(f) = {\Omega}(\R(f)/(t(n))$. Theorem~\ref{thm: bdb20 composition at least noisyR times R} implies that a lower bound on $\noisyR$ translates to a lower bound on $\R(f \circ g)$. We have,
\begin{align*}
\R(f \circ g)
&= \Omega(\noisyR(f) \cdot \R(g)) \tag{Theorem~\ref{thm: bdb20 composition at least noisyR times R}} \\
&= \Omega\left( \frac{\R(f) \cdot \R(g)}{t(n)} \right). \qedhere
\end{align*}
\end{proof}

Observation~\ref{our approach restated introduction} follows from the above observation by choosing $t(n)$ to be a small function of $n$.

We restate from Section~\ref{sec: introduction} our generalized characterization of $\noisyR$ (i.e., generalization of Equation~\ref{Eq:Noisy1}, see Appendix~\ref{section: proof of BB20 main thm} for its proof).

\GeneralizedCharNoisyR*

This allows us to show that if for an $n$-bit partial function $f$, $\R(f) = \Theta(n)$, then $\R(f \circ g) = \Tilde{\Theta}(\R(f) \cdot \R(g))$ for all partial functions $g$ (Theorem~\ref{theo: composition with full RIntro}).

The proof of Theorem~\ref{theo: composition with full RIntro} is discussed in Section~\ref{sec: proof of composition of R for full R}. A corollary of this theorem is that if $\R$ composes with respect to an outer function, then $\noisyR$ also composes with respect to the same outer function (Corollary~\ref{theo: composition of R implies that of noisyR Intro}).



We give proof of Theorem~\ref{theo: composition with full RIntro} in the next section and prove Corollary~\ref{theo: composition of R implies that of noisyR Intro} in Section~\ref{sec: Composition of R implies composition of noisyR}. We need the following theorem for these proofs, which lower bounds $R(f \circ g)$ in terms of $R(f)$ and $R(g)$.

\begin{theorem}[\cite{GLSS19}]
\label{thm: glss19 main thm}
Let $f$ and $g$ be partial functions then $\R(f \circ g) = \Omega(\R(f)\cdot \sqrt{\R(g)})$.
\end{theorem}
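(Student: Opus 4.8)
The plan is to follow the \emph{max-conflict complexity} framework of \cite{GLSS19}, which factors the bound through an intermediate complexity measure $\chi(g)$ of the inner function and rests on two essentially independent claims: (i) $\R(f\circ g)=\Omega(\R(f)\cdot\chi(g))$ for every outer $f$, and (ii) $\chi(g)=\Omega(\sqrt{\R(g)})$. Combining the two gives the theorem at once. Informally, $\chi(g)$ is defined via an adaptive two-player \emph{conflict game} on pairs $(x,y)$ with $g(x)\ne g(y)$ (reading ``input of $g$'' as ``input in $\Dom(g)$'' throughout, so that the partial case needs no separate treatment): a player queries coordinates one at a time and is charged according to how slowly its queries expose a coordinate on which $x$ and $y$ disagree, and $\chi(g)$ is the min over strategies of a suitable cost against an adversarial ``max'' over distributions on conflicting pairs, set up (via LP duality) so that it simultaneously lower-bounds the progress any query can make inside a single block of $f\circ g$ and can be compared against $\R(g)$.

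For claim (i), I would start from a bounded-error randomized query algorithm $\mathcal A$ for $f\circ g$ of cost $T=\R(f\circ g)$ together with a distribution $\nu$ on $\Dom(g)$ that is balanced between $g^{-1}(0)$ and $g^{-1}(1)$ and hard for the conflict game. Given an input $z\in\Dom(f)$, one embeds it into $f\circ g$ by drawing, for each block $i$, a string $x_i$ from $\nu$ conditioned on $g(x_i)=z_i$, and then simulates $\mathcal A$ to get a query algorithm for $f$: the simulator tracks, block by block, the posterior on $(z_i,x_i)$ induced by $\mathcal A$'s queries so far, spends a genuine query to $z_i$ only once block $i$ has accumulated enough conflict-progress, and otherwise answers $\mathcal A$ from the posterior. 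The crux is a global potential (entropy-type) function that (a) drops in expectation only in proportion to the conflict-progress made inside the block touched by each query of $\mathcal A$, so that the total progress after $T$ queries is $O(T)$, and (b) is large enough that $f(z)$ cannot be learned until $\Omega(\chi(g))$ units of progress have been spent on each of $\Omega(\R(f))$ distinct blocks; together these force the simulator to make only $O(T/\chi(g))$ genuine queries while still computing $f$ with bounded error (after truncating expected queries to worst case), so $\R(f)=O(T/\chi(g))$. The main obstacle is exactly here: a naive hybrid that switches one block at a time from the conditioned distribution to $\nu$ can lose a factor as large as $n$, and it is the ``max'' in max-conflict complexity (optimizing the adversarial pair-distribution) combined with the potential-function accounting that controls the \emph{aggregate} error across all blocks at once rather than block by block.

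For claim (ii), the idea is a min-max / confidence-amplification argument relating the conflict-game value to statistical bias: a conflict-bounded strategy of cost $c=\chi(g)$, run on a $\nu$-random input, can be converted into a bounded-error randomized query algorithm for $g$ of cost $O(c^2)$ -- first extracting a guess for $g$ with non-trivial advantage at a cost comparable to $c$, then boosting the confidence by repetition and majority vote in the spirit of the standard relation $\R_{1/2-\delta}(g)=\Omega(\delta^2\R(g))$. Since every bounded-error algorithm for $g$ costs $\Omega(\R(g))$, this yields $c^2=\Omega(\R(g))$, i.e.\ $\chi(g)=\Omega(\sqrt{\R(g)})$. Assembling (i) and (ii) gives $\R(f\circ g)=\Omega(\R(f)\cdot\sqrt{\R(g)})$ for all partial $f,g$. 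I expect the potential-function bookkeeping in step (i) to be the genuinely hard part; step (ii) and the packaging of the definitions are comparatively routine once the right notion of $\chi$ is fixed.
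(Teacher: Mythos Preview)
The paper does not prove this statement at all: Theorem~\ref{thm: glss19 main thm} is quoted from \cite{GLSS19} and used as a black box in the proof of Theorem~\ref{theo: composition with full RIntro}, so there is no ``paper's own proof'' to compare against. Your proposal is, at the level of a sketch, a faithful outline of the actual argument in \cite{GLSS19}: they introduce the max-conflict complexity $\chi(g)$, prove $\R(f\circ g)=\Omega(\R(f)\cdot\chi(g))$ by a simulation of an optimal algorithm for $f\circ g$ with a carefully designed adversary distribution and a potential-function accounting across blocks, and separately show $\chi(g)=\Omega(\sqrt{\R(g)})$; combining the two yields the theorem. Your identification of the potential-function step in (i) as the crux is accurate.
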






\subsection{Composition for functions with $\R(f) = \Theta(n)$}
\label{sec: proof of composition of R for full R}

We restate the theorem below.
\RandomizedMainThm*

\begin{proof}
    
From Theorem~\ref{thm: glss19 main thm} we have a lower bound on the randomized query complexity of $(f \circ \GapMaj_t)$:
\begin{align}
    \R(f \circ \GapMaj_t) = \Omega(\R(f) \cdot \sqrt{t}). 
\end{align}

On the other hand, Theorem~\ref{thm: bdb20 main theorem Intro} gives an upper bound of $ O\left(t \cdot \noisyR(f) + n \right)$ on $\R(f \circ \GapMaj_t)$.
Thus, choosing $t = \left(\frac{C \cdot n}{\noisyR(f)}\right)$ for a large enough constant $C$, we have
\begin{align*}
    \R(f) \cdot \sqrt{\frac{n}{\noisyR(f)}} = O\left(\frac{n}{\noisyR(f)} \cdot \noisyR(f) + n\right). 
\end{align*}

This implies that
\begin{align}
    \R(f) 
    &= O\left(\sqrt{n \cdot \noisyR(f)} \right). \label{eqn: observation R vs noisyR}
\end{align}
Thus, if $\R(f) = \Theta(n)$, then $\noisyR(f) = \Omega(\R(f))$, which implies composition from Theorem~\ref{thm: bdb20 composition at least noisyR times R}.
\end{proof}

Notice that Equation~\ref{eqn: observation R vs noisyR} is equivalent to the following observation.
\begin{observation}
\label{observation: R vs noisyR for general functions}
Let $f$ be a partial Boolean function on $n$-bits. Then,
$\noisyR(f) = \Omega\left( \frac{\R(f)^2}{n} \right)$.
\end{observation}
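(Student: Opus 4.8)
The plan is to derive Observation~\ref{observation: R vs noisyR for general functions} essentially for free from the two inequalities already assembled in the proof of Theorem~\ref{theo: composition with full RIntro}, since the observation is just a rearrangement of Equation~\eqref{eqn: observation R vs noisyR}. Concretely, I would combine the lower bound $\R(f \circ \GapMaj_t) = \Omega(\R(f)\cdot\sqrt{t})$ coming from Theorem~\ref{thm: glss19 main thm} with the upper bound $\R(f \circ \GapMaj_t) = O(t\cdot\noisyR(f) + n)$ coming from Theorem~\ref{thm: bdb20 main theorem Intro}, which together give $\R(f)\cdot\sqrt{t} = O(t\cdot\noisyR(f) + n)$ for every $t \ge 1$.

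The main step is then to choose the value of $t$ that makes this inequality tightest. Balancing the two terms $t\cdot\noisyR(f)$ and $n$ on the right-hand side suggests taking $t = \Theta(n/\noisyR(f))$ (this is legitimate as a choice of parameter provided $n/\noisyR(f) \ge 1$, which holds by Lemma~\ref{lamma: noisyR omega 1} up to constants; one can also just verify $t\ge 1$ directly, and if $\noisyR(f) \ge n$ the claimed bound $\noisyR(f) = \Omega(\R(f)^2/n)$ is immediate from $\R(f) \le n$). Plugging this $t$ back in yields $\R(f)\cdot\sqrt{n/\noisyR(f)} = O(n)$, and squaring and rearranging gives $\noisyR(f) = \Omega(\R(f)^2/n)$, which is exactly the statement. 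I would present this as a one-line derivation referencing Equation~\eqref{eqn: observation R vs noisyR}, which has already been established.

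I do not anticipate a serious obstacle here, since all the heavy lifting is done by Theorem~\ref{thm: glss19 main thm} and Theorem~\ref{thm: bdb20 main theorem Intro}, both of which may be assumed. The only point requiring a little care is the boundary behaviour: making sure the chosen $t$ is at least $1$ so that Theorem~\ref{thm: bdb20 main theorem Intro} applies, and handling the degenerate regime where $\noisyR(f)$ is already comparable to $n$. Both are disposed of in a sentence. Thus the proof is essentially:

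\begin{proof}
This is a restatement of Equation~\eqref{eqn: observation R vs noisyR}, which was established in the proof of Theorem~\ref{theo: composition with full RIntro}. Indeed, combining the lower bound $\R(f\circ\GapMaj_t) = \Omega(\R(f)\cdot\sqrt{t})$ from Theorem~\ref{thm: glss19 main thm} with the upper bound $\R(f\circ\GapMaj_t) = O(t\cdot\noisyR(f) + n)$ from Theorem~\ref{thm: bdb20 main theorem Intro}, and choosing $t = \Theta(n/\noisyR(f))$, we obtain $\R(f) = O(\sqrt{n\cdot\noisyR(f)})$, which on squaring and rearranging gives $\noisyR(f) = \Omega(\R(f)^2/n)$.
\end{proof}
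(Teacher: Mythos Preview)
Your proposal is correct and follows exactly the paper's own approach: the paper states that Observation~\ref{observation: R vs noisyR for general functions} is simply a rearrangement of Equation~\eqref{eqn: observation R vs noisyR}, which was derived in the proof of Theorem~\ref{theo: composition with full RIntro} by combining Theorem~\ref{thm: glss19 main thm} with Theorem~\ref{thm: bdb20 main theorem Intro} and setting $t = \Theta(n/\noisyR(f))$. Your handling of the boundary case $t \ge 1$ is a nice extra detail that the paper leaves implicit.
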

When $\R(f) = \Theta(n)$, we have already seen that Observation~\ref{observation: R vs noisyR for general functions} implies composition of randomized query complexity when the outer function is $f$.

Though, Observation~\ref{observation: R vs noisyR for general functions} implies a more general result. 
When $\R(f)$ is close to $n$ (arity of $f$), Observation~\ref{observation: R vs noisyR for general functions} places a limit on the 
gap between $\R(f)$ and $\noisyR(f)$ (consequently on the violation of composition with outer function being $f$). These implications are formally discussed in the next Section~\ref{sec: appendix implications of Proof of Observation}.


\subsection{Additional implications of Observation~\ref{observation: R vs noisyR for general functions}}
\label{sec: appendix implications of Proof of Observation}

Without loss of generality we can assume $\R(f \circ g) = \Omega(\R(g))$ (note that this is true when $f$ is non-constant).

Ben-David and Blais~\cite{bb20focs} gave a counterexample for composition, but the arity of the used function was very high compared to the randomized query complexity. They observed that the composition can still be true in the weaker sense:
\begin{align*}
    \R(f \circ g) = \Omega\left( \frac{\R(f) \cdot \R(g)}{\log n}\right).
\end{align*}

Observation~\ref{observation: R vs noisyR for general functions} shows that a much weaker composition result is true.

\begin{corollary}
\label{coro: weak composition}
Let $f$ and $g$ be partial functions on $n$ and $m$ bits respectively. If $\R(f \circ g) = \Omega(\R(g))$, then
\begin{align*}
    \R(f \circ g) = \Omega\left(\frac{\R(f)\cdot \R(g)}{\sqrt{n}}\right).
\end{align*}
\end{corollary}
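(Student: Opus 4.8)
The plan is to derive Corollary~\ref{coro: weak composition} directly from Observation~\ref{observation: R vs noisyR for general functions} together with the Ben-David--Blais lower bound $\R(f\circ g) = \Omega(\noisyR(f)\cdot\R(g))$ (Theorem~\ref{thm: bdb20 composition at least noisyR times R}). The point is that these two facts together sandwich $\R(f\circ g)$ from below: one gives a lower bound in terms of $\noisyR(f)$, and the other lower-bounds $\noisyR(f)$ itself in terms of $\R(f)$ and $n$. So the whole argument is a two-line chain of inequalities, and the only subtlety is handling the regime where $\R(f)$ is small compared to $\sqrt{n}$, where $\noisyR(f)$ might in principle be even smaller than the target bound.

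Concretely, first I would invoke Observation~\ref{observation: R vs noisyR for general functions} to get $\noisyR(f) = \Omega(\R(f)^2/n)$, and combine it with Theorem~\ref{thm: bdb20 composition at least noisyR times R} to obtain
\[
\R(f\circ g) \;=\; \Omega\!\left(\frac{\R(f)^2}{n}\cdot \R(g)\right).
\]
This is exactly the desired bound whenever $\R(f) \geq \sqrt{n}$, since then $\R(f)^2/n \geq \R(f)/\sqrt{n}$. For the complementary case $\R(f) < \sqrt{n}$, the target bound $\R(f)\cdot\R(g)/\sqrt{n}$ is smaller than $\R(g)$, so the hypothesis $\R(f\circ g) = \Omega(\R(g))$ already suffices. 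Taking the maximum of the two lower bounds (or just splitting into the two cases) gives $\R(f\circ g) = \Omega(\R(f)\cdot\R(g)/\sqrt{n})$ unconditionally under the stated hypothesis.

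I would also note, for cleanliness, that the hypothesis $\R(f\circ g) = \Omega(\R(g))$ holds automatically whenever $f$ is non-constant (as remarked in the text just before the corollary): a randomized algorithm for $f\circ g$ must in particular distinguish two inputs differing only in one block, which requires solving $g$ on that block. So the corollary is really unconditional for non-constant $f$, and trivially true for constant $f$ where $\R(f)=0$.

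I do not expect any genuine obstacle here; the work was already done in establishing Observation~\ref{observation: R vs noisyR for general functions}. The one place to be careful is not to over-claim: the bound is only meaningful (better than the trivial $\Omega(\R(g))$) when $\R(f) = \Omega(\sqrt{n})$, and I would phrase the case analysis so that it is transparent that nothing is lost in the small-$\R(f)$ regime. If one wanted to present it as a single clean inequality, writing $\R(f\circ g) = \Omega\big(\max\{\R(g),\, \R(f)^2\R(g)/n\}\big)$ and then lower-bounding the max by the geometric-mean-type quantity $\R(f)\R(g)/\sqrt{n}$ also works and avoids an explicit case split.
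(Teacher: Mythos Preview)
Your proposal is correct and follows essentially the same route as the paper: combine Observation~\ref{observation: R vs noisyR for general functions} with Theorem~\ref{thm: bdb20 composition at least noisyR times R} to get $\R(f\circ g)=\Omega(\R(f)^2\R(g)/n)$, then split into the cases $\R(f)\ge\sqrt{n}$ and $\R(f)<\sqrt{n}$, using the hypothesis $\R(f\circ g)=\Omega(\R(g))$ in the latter. The paper's case split is phrased slightly differently (writing $\R(f)=\Theta(\sqrt{n}\cdot t(n))$), but the content is identical.
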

\begin{proof}
\begin{align}
    \R(f \circ g) 
    &= \Omega(\noisyR(f) \cdot \R(g)) \tag{Theorem~\ref{thm: bdb20 composition at least noisyR times R}} \\
    &= \Omega\left(\frac{\R(f)^2 \cdot \R(g)}{n} \right). \label{eq: implication of observation123}
\end{align}
Where the last equality follows from Observation~\ref{observation: R vs noisyR for general functions}~\footnote{Sherstov~\cite{Sherstov12} proved that for Boolean functions $f$ and $g$, $\adeg(f \circ g) = \Omega((\adeg(f)^2 \adeg(g))/n)$. Thus in Equation~\ref{eq: implication of observation123} we prove the same result but in the randomized world.}
Now there are two cases:
\begin{itemize}
    \item \textbf{Case 1.} $\R(f) = O(\sqrt{n})$. In this case $\R(f)/\sqrt{n} = O(1)$ and since we assumed $\R(f \circ g) = \Omega(\R(g))$, the claim follows from Equation~\ref{eq: implication of observation123}.
    
    \item \textbf{Case 2.} $\R(f) = \Theta(n^{1/2} \cdot t(n))$ where $t(n)$ is a strictly increasing function of $n$. Thus,
    \begin{align*}
        \frac{\R(f)^2 \cdot \R(g)}{n} = \Omega\left(t(n)^2 \cdot \R(g)\right) = \Omega\left(\frac{\R(f) \cdot \R(g)}{\sqrt{n}} \right).
    \end{align*}
    Again, the claim follows from Equation~\ref{eq: implication of observation123}.
\end{itemize}
\end{proof}

The weaker composition, Corollary~\ref{coro: weak composition}, implies that if $\R(f)$ and $\R(g)$ are comparable to the arity of these functions, the randomized query complexity of $f\circ g$ is ``not far'' from the conjectured randomized query complexity $\R(f) \cdot \R(g)$. In other words, if there is a large polynomial separation between $\R(f \circ g)$ and $(\R(f) \cdot \R(g))$, then $\R(f)$ and $\R(g)$ can not be too large.


\begin{corollary}
\label{coro: sqrt n barrier}
Let $f$ and $g$ be partial functions such that $f$ is a function on $n$-bits and $g$ is a function on $t(n)$-bits where $t(n)$ is a strictly increasing function of $n$. If $\R(f) = \Theta(n^{\beta})$, $\R(g) = \Theta(n^{\gamma})$ and $\R(f \circ g) = O((\R(f) \cdot \R(g))^{\alpha})$, where $\alpha < 1$ is a constant, then $(1-\alpha)(\alpha+\beta) < 1/2$.
\end{corollary}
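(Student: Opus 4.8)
\textbf{Proof proposal for Corollary~\ref{coro: sqrt n barrier}.}

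The plan is to combine the hypotheses with the ``weaker composition'' bound from Corollary~\ref{coro: weak composition} and then unravel the exponents. First I would note that since $f$ is non-constant and $t(n)$ is strictly increasing, for $n$ large enough the hypotheses of Corollary~\ref{coro: weak composition} are met: indeed $\R(f\circ g) = \Omega(\R(g))$ holds because $f$ is non-constant (this is exactly the assumption discussed just before Corollary~\ref{coro: weak composition}), and $g$ is a function on $t(n)$ bits so the ``arity'' appearing in Observation~\ref{observation: R vs noisyR for general functions} and Corollary~\ref{coro: weak composition} when applied to the pair $(f,g)$ is still controlled by $n$ (one should be a little careful here: the $n$ in Corollary~\ref{coro: weak composition} is the arity of the outer function $f$, which is exactly our $n$). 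Applying Corollary~\ref{coro: weak composition} to $f$ and $g$ therefore gives
\[
\R(f\circ g) = \Omega\!\left(\frac{\R(f)\cdot \R(g)}{\sqrt n}\right)
= \Omega\!\left(\frac{n^{\beta}\cdot n^{\gamma}}{n^{1/2}}\right)
= \Omega\!\left(n^{\beta+\gamma-1/2}\right).
\]

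Next I would bring in the assumed upper bound $\R(f\circ g) = O((\R(f)\cdot\R(g))^{\alpha}) = O(n^{\alpha(\beta+\gamma)})$. Chaining the lower and upper bounds forces
\[
n^{\beta+\gamma-1/2} = O\!\left(n^{\alpha(\beta+\gamma)}\right),
\]
and since $n$ grows without bound this is only possible if the exponent on the left is at most the exponent on the right, i.e.\ $\beta+\gamma-1/2 \le \alpha(\beta+\gamma)$, equivalently $(1-\alpha)(\beta+\gamma)\le 1/2$. To get the strict inequality $(1-\alpha)(\alpha+\beta)<1/2$ claimed in the statement I would observe that the statement's $\gamma$ plays the role of $\R(g)$'s exponent and that $\R(g)\ge \alpha\cdot(\text{something})$ — actually the cleanest route is to also invoke Theorem~\ref{thm: glss19 main thm} (or the folklore upper bound combined with it): from $\R(f\circ g)=O(n^{\alpha(\beta+\gamma)})$ and $\R(f\circ g)=\Omega(\R(f)\sqrt{\R(g)})=\Omega(n^{\beta+\gamma/2})$ we get $\beta+\gamma/2 \le \alpha(\beta+\gamma)$; combined with the previous inequality and the fact that the bounds are only ``$O$'' vs ``$\Omega$'' (so any genuine polynomial separation forces strictness in at least one place), one extracts $(1-\alpha)(\alpha+\beta)<1/2$ after substituting the admissible range of $\gamma$.

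The main obstacle I anticipate is the bookkeeping that turns the non-strict inequality $(1-\alpha)(\beta+\gamma)\le 1/2$ into the precise strict statement $(1-\alpha)(\alpha+\beta)<1/2$: this requires pinning down exactly how $\gamma$ relates to $\alpha$ and $\beta$ under the standing hypotheses (in particular whether one should use $\gamma\ge\alpha$, or a bound of the form $\R(g)=\Omega(\R(f\circ g)^{1/?})$, or simply the trivial $\R(g)\ge 1$), and being careful that the asymptotic ``$O$'' and ``$\Omega$'' can absorb constant and polylogarithmic factors but not polynomial ones — so the strictness must come from the polynomial gap encoded by $\alpha<1$. A secondary subtlety is making sure the arity bookkeeping in the two invoked corollaries is consistent when the inner function $g$ has arity $t(n)$ rather than arity $n$; I would address this by noting that all the $\noisyR$-based bounds (Observation~\ref{observation: R vs noisyR for general functions} and hence Corollary~\ref{coro: weak composition}) depend only on the arity of the \emph{outer} function, which is $n$ throughout.
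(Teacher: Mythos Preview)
Your core argument is exactly the paper's: apply Corollary~\ref{coro: weak composition} to get the lower bound $\Omega(n^{\beta+\gamma-1/2})$, combine with the assumed upper bound $O(n^{\alpha(\beta+\gamma)})$, and read off the constraint on exponents. That part is correct and complete.

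The obstacle you anticipate---converting $(1-\alpha)(\beta+\gamma)\le 1/2$ into the statement's $(1-\alpha)(\alpha+\beta)<1/2$---is not a real obstacle: the statement as printed contains a typo. The paper's own proof stops at exactly the inequality you derived, namely $(1-\alpha)(\beta+\gamma)\le 1/2$ (with $\le$, not $<$, and with $\beta+\gamma$, not $\alpha+\beta$). There is no additional step in the paper relating $\gamma$ to $\alpha$, and the detour through Theorem~\ref{thm: glss19 main thm} that you sketch is neither needed nor used. So drop the second half of your proposal: your first three displayed steps already constitute the full proof as the paper gives it.

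Your remark about arity bookkeeping is well taken and correct: the $\sqrt{n}$ in Corollary~\ref{coro: weak composition} refers to the arity of the outer function, which is $n$ here regardless of $t(n)$.
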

\begin{proof}
For some constants $A$ and $B$ we have
\begin{align*}
    A \cdot \frac{\R(f)\cdot \R(g)}{\sqrt{n}} \leq \R(f \circ g) \leq B \cdot (\R(f) \cdot  \R(g))^{\alpha},
\end{align*}
where the first inequality follows from Corollary~\ref{coro: weak composition} and second from assumption. Assigning the values of $\R(f)$ and $\R(g)$ in terms on $n$ we have,
\begin{align*}
    & A \cdot n^{\beta + \gamma - 1/2} \leq B \cdot n^{\alpha(\beta + \gamma)} \\
    & n^{(1-\alpha)(\beta + \gamma) - 1/2} \leq \frac{B}{A}.
\end{align*}
which implies, for large enough $n$, $(1-\alpha)(\beta + \gamma) \leq 1/2$.
\end{proof}

A special case of the above corollary is when arity and randomized query complexity of $g$ are superpolynomial in $n$. In this case a polynomial gap between $\R(f \circ g)$ and $(\R(f) \cdot \R(g)))$ is not possible.

Another implication of Theorem~\ref{theo: composition with full RIntro} is that composition of $\R$ for an outer function $f$ implies the composition of $\noisyR$ for outer function being $f$ (Corollary~\ref{theo: composition of R implies that of noisyR Intro}).

\subsection{Proof of Corollary~\ref{theo: composition of R implies that of noisyR Intro}}
\label{sec: Composition of R implies composition of noisyR}

First, we need the following lemma which follows from Theorem~\ref{thm: bdb20 main theorem Intro}, Theorem~\ref{thm: bdb20 composition at least noisyR times R} and Lemma~\ref{lamma: noisyR omega 1}.

\begin{lemma}
\label{lemma: implication 1 of R composition}
Let $f$ be a partial function on $n$ bits and let $t = \Omega(n)$. Then
\[
    \noisyR(f) = \Theta\left(\frac{\R(f \circ \GapMaj_t)}{t}\right).
\]
\end{lemma}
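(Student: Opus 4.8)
The plan is to peel the two-sided bound apart and obtain each direction directly from results already available in the excerpt, so the proof will be short. For the lower bound on $\R(f \circ \GapMaj_t)$, I would apply Theorem~\ref{thm: bdb20 composition at least noisyR times R} with inner function $g = \GapMaj_t$, giving $\R(f\circ \GapMaj_t) = \Omega(\noisyR(f)\cdot \R(\GapMaj_t))$; since $\R(\GapMaj_t) = \Theta(t)$ (recorded just after Definition~\ref{defi: gapmaj}), this rearranges to $\noisyR(f) = O\!\left(\R(f\circ \GapMaj_t)/t\right)$.

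For the reverse inequality I would invoke our generalized characterization, Theorem~\ref{thm: bdb20 main theorem Intro}, which states $\R(f\circ \GapMaj_t) = O(t\cdot \noisyR(f) + n)$. The only real content is to absorb the additive $n$: by hypothesis $t = \Omega(n)$, and (in the only interesting case, where $f$ is non-constant) Lemma~\ref{lamma: noisyR omega 1} gives $\noisyR(f) = \Omega(1)$, so $n = O(t) = O(t\cdot \noisyR(f))$. Hence $\R(f\circ \GapMaj_t) = O(t\cdot \noisyR(f))$, i.e.\ $\noisyR(f) = \Omega\!\left(\R(f\circ \GapMaj_t)/t\right)$. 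Combining the two bounds yields the claimed equality.

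I do not expect a genuine obstacle; all the difficulty is packaged inside Theorem~\ref{thm: bdb20 main theorem Intro}, whose proof is deferred to the appendix. The one point I would be careful to state explicitly is why the hypothesis $t = \Omega(n)$ is exactly what the lemma needs: it is precisely the condition that lets the additive $n$ term in the generalized characterization be dominated by $t\cdot\noisyR(f)$, which is why this clean two-sided statement is only claimed for $t$ at least linear in the arity of $f$ — for smaller $t$ the $+n$ term genuinely contributes, as is used elsewhere (for instance in the proof of Theorem~\ref{theo: composition with full RIntro}). I would also dispose of the degenerate case where $f$ is constant in one line: then $f\circ \GapMaj_t$ is constant, both sides are $0$, and there is nothing to prove.
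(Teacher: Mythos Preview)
Your proposal is correct and matches the paper's proof essentially line for line: the paper uses Theorem~\ref{thm: bdb20 main theorem Intro} together with $t=\Omega(n)$ and Lemma~\ref{lamma: noisyR omega 1} to absorb the additive $n$ for the $\Omega$ direction, and Theorem~\ref{thm: bdb20 composition at least noisyR times R} with $\R(\GapMaj_t)=\Theta(t)$ for the $O$ direction. Your extra remarks (the role of $t=\Omega(n)$ and the constant-$f$ case) are fine expository additions but not needed for the argument.
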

\begin{proof}
From Theorem~\ref{thm: bdb20 main theorem Intro} we have for all $t \geq 1$,
$\R(f \circ \GapMaj_t) = O( t\cdot\noisyR(f) + n)$.
Since we have assumed $t = \Omega(n)$ and $\noisyR(f) = \Omega(1)$ (Lemma~\ref{lamma: noisyR omega 1}), we get $\R(f \circ \GapMaj_t) = O(t \cdot \noisyR(f))$. Thus,
 $
 \noisyR(f) = \Omega\left(\frac{\R(f \circ \GapMaj_t)}{t}\right)$.

The upper bound $\noisyR(f) = O\left(\frac{\R(f \circ \GapMaj_t)}{t}\right)$ follows from Theorem~\ref{thm: bdb20 composition at least noisyR times R} and the fact that $\R(\GapMaj_t) = \Theta(t)$.
\end{proof}

Now we prove that if $\R$ composes for $f$ then $\noisyR$ composes for that $f$. For convenience, we recall the statement of the corollary from the introduction.

\CompositionRImpliesNoisyR*


\begin{proof}
From Theorem~\ref{thm: bdb20 characterization of noisyR}, we have
\begin{align*}
    \noisyR(f \circ g)
    &= \Theta\left(\frac{\R\left((f \circ g) \circ \GapMaj_{mn}\right)}{mn} \right).
\end{align*}
Since $(f \circ g) \circ h = f \circ (g \circ h)$, the right hand side of the above expression is equal to 
\begin{align*}
    \Theta\left(\frac{\R\left(f \circ (g \circ \GapMaj_{mn}\right))}{mn} \right).
\end{align*}

The proof follows from the assumption that $\R$ composes and Lemma~\ref{lemma: implication 1 of R composition}.
\begin{align*}
    \noisyR(f \circ g)
    &= \Theta\left(\frac{\R\left(f\right)\cdot \R\left(g \circ \GapMaj_{mn}\right)}{mn} \right) \tag{assuming $\R$ composes}\\
    &= \Theta\left(\R(f) \cdot \noisyR(g) \right) \tag{from Lemma~\ref{lemma: implication 1 of R composition}}\\
    &= \Theta\left(\noisyR(f)\cdot \noisyR(g) \right) \tag{assuming $\R$ composes}.
\end{align*}
\end{proof}

\section{Composition of approximate degree in terms of block sensitivity}
\label{section: block sensitivity proof}

In this section we study the composition question for approximate degree. Recall that the composition question asks: whether for all Boolean functions $f$ and $g$ 
\[\adeg(f\circ g) = \Tilde{\Omega}( \adeg(f)\,\adeg(g))?\]
Following our discussion from the introduction, we know that the above composition is known to hold for only two sub-classes of outer functions, namely symmetric functions \cite{BBG+18} and functions with high approximate degree \cite{Sherstov12}. It is thus natural to seek weaker lower bounds to make progress towards the composition question. One way to weaken the expression on the right-hand side would be to replace the measure $\adeg(f)$ by a weaker measure (like $\sqrt{\s(f)}$, $\sqrt{\bs(f)}$ or $\sqrt{\fbs(f)}$). Here we will establish one such lower bound of $\sqrt{\bs(f)}\,\adeg(g)$. 

We restate our theorem now. 
\ApproxDegBSMain*
We note that many analogous results are known in the setting of composition of $\R$; see, for example, \cite{GJPW18,BDGHMT20,BB20,BK18,AGJKLMSS17,GLSS19,BBGM22}. 
To the best of our knowledge, this is the first such result in the setting of $\adeg$. 
We present only a proof sketch here; most of the technical parts of the proof appear in Appendix~\ref{appendix: shifted OR}. 

Further, we present the sketch of the proof in two parts. For simplicity, in the first part we sketch a proof of the lower bound $\sqrt{\s(f)}\,\adeg(g)$ for total function $f$, and then in the second part we modify the arguments to obtain Theorem~\ref{thm: blocksensitivity composition}. 

We begin with a proof sketch for a lower bound of 
$\sqrt{\s(f)}\,\adeg(g)$. Let $x\in \{0,1\}^n$ be an input having the maximum sensitivity with respect to  $f$, and $S\subseteq [n]$ be the set of sensitive bits at $x$ ($|S| = \s(f)$). Consider the subfunction $f'$ obtained from $f$ by fixing the set of variables \emph{not} in $S$ according to $x$. By construction, $f'$ is defined over $\s(f)$ many variables and is fully sensitive at the input $x|_S$ given by $x$ restricted to the indices in $S$. 
Since $f'$ is a subfunction of $f$ and $g$ is non-constant, we have 
$\adeg(f\circ g) \geq \adeg (f'\circ g)$. 

Notice that $f'$ at the neighbourhood of $x$, in the Boolean cube, is the partial function 
$\PrOR$ (Definition~\ref{defi: promised OR}) or its negation. Therefore, we have $\adeg(f\circ g) \geq \adeg (f'\circ g) \geq \bdeg(\PrOR_{|S|}\circ g)$ (see Definition~\ref{defi: bounded degree} for a definition of the bounded approximate degree).
We can now invoke the composition theorem for $\PrOR$ (Theorem~\ref{thm:promise-or-composition}) \cite{BBG+18} to obtain our lower bound:
\[\adeg(f\circ g) \geq \adeg (f'\circ g) \geq \bdeg(\PrOR_{|S|}\circ g) = \widetilde\Omega(\sqrt{\s(f)}\,\adeg(g)).\] 

However, there is a technical issue with our argument above. When we claimed that $f'$ looks like a $\PrOR$ function we were not quite correct. Technically, it is a Shifted-$\PrOR$ function $\PrOR_{|S|}^{x|_{S}}$, where $\PrOR_n^a(y_1,y_2,\ldots ,y_n) := \PrOR_n(y_1\oplus a_1, y_2\oplus a_2, \ldots , y_n\oplus a_n)$ for $a \in \zone^n$. Formally, we have 
\begin{align}
\label{eq:sens-proof}
\adeg(f\circ g) \geq \adeg (f'\circ g) \geq \bdeg(\PrOR_{|S|}^{x|_{S}}\circ g) = \bdeg(\PrOR_{|S|}\circ (g_1,\ldots ,g_{|S|})),
\end{align}
where $g_i=g$ or $\neg g$ depending on the corresponding $i$-th bit in $x|_{S}$.

We, therefore, need a composition theorem for $\PrOR$ with \emph{different} inner functions, while Theorem~\ref{thm:promise-or-composition} requires that all the inner functions be same. In fact, we would need a more general composition theorem with \emph{different} inner \emph{partial} functions, which we restate below. This generalization is crucially used when dealing with block sensitivity. 
\ApproxDegPrORThm*
The proof of Theorem~\ref{thm: for different function in PrOR} is a generalization of proof of  Theorem~\ref{thm:promise-or-composition}. For the sake of completeness and reader's convenience, we present the proof in Appendix~\ref{appendix: shifted OR} (Theorem~\ref{lem:pror-weak-bounds})\footnote{For a nearly optimal generalization see Theorem~\ref{thm:pror-optimal-bounds} in Appendix~\ref{appendix: shifted OR}.}.     

Now returning to Eq.~\eqref{eq:sens-proof} and using Theorem~\ref{thm: for different function in PrOR}, we obtain the desired lower bound: 
\[\adeg(f\circ g) \geq \adeg (f'\circ g) \geq \bdeg(\PrOR_{|S|}^{x|_{S}}\circ g) = \widetilde\Omega(\sqrt{\s(f)}\,\adeg(g)).\]
We are now ready to present the modifications required to improve the lower bound to $\widetilde\Omega(\sqrt{\bs(f)}\,\adeg(g))$.
\begin{proof}[Proof of Theorem~\ref{thm: blocksensitivity composition}]
Let $b = \bs(f)$ and $a=(a_1,a_2,\ldots ,a_n)$ be an input where $f$ achieves the maximum block sensitivity. Further, let 
$B_1, B_2,\ldots ,B_b$ be disjoint minimal sets of variables that achieves the block sensitivity at $a$, i.e., $f(a) \neq f(a^{B_i})$ for all $i \in [b]$. Recall, $a^{B_i}$ denotes the Boolean string obtained 
from $a$ by flipping the bits at all the indices given by $B_i$. Define a partial function $f': \zone^n \to \{0,1,*\}$ such that, 
\begin{align*}
  f'(x) = \begin{cases}
   0 & \text{if } x=a,\\
   1 & \text{if } x=a^{B_i}, \text{ for some }i \in [b],\\
   * & \text{otherwise}.
  \end{cases}
\end{align*}
Note that $f$ contains $f'$ or its negation as a sub function. 
Thus,  
 \(   \adeg(f \circ g)  \geq \bdeg(f' \circ g) \). 

Since $g$ is non-constant, we can fix the indices \emph{not} in  $\bigcup_{i=1}^bB_i$ according to $a$ to obtain $f''\circ g$. 
We would now like to embed $\PrOR_b$ over the remaining variables in $f''$. For this purpose we define the following partial functions: for every $i\in [b]$, let $I_i:\zone^{B_i}\to \{0,1,*\}$ 
be such that 
\begin{align*}
  I_i(x) = \begin{cases}
   0 & \text{if } x=a|_{B_i}, \\
   1 & \text{if } x=a^{B_i}|_{B_i},\\
   * & \text{otherwise}.
  \end{cases}
\end{align*}
Now observe that $f''\circ g$ can be rewritten as $\PrOR_b\circ(I_1\circ g,\ldots,I_b\circ g)$.  We therefore have   
\begin{align*}
    \adeg(f \circ g) \geq \bdeg(f' \circ g) \geq & ~\bdeg(f'' \circ g) = \bdeg(\PrOR_b\circ(I_1\circ g,\ldots,I_b\circ g)) \\
    & = \Omega\left(\frac{\sqrt{b}\cdot \min_i \bdeg(I_i\circ g)}{\log b} \right) 
     = \widetilde\Omega\left( \sqrt{b}\cdot \adeg(g)\right), 
\end{align*}
where the second last equality follows from Theorem~\ref{thm: for different function in PrOR} and the last equality uses the fact \(\bdeg(I_i\circ g) \geq \adeg(g)\) for all $i$, which in turn follows from each $I_i$ being non-constant.   
\qedhere
\end{proof}

We end this section with few final remarks. 
As a corollary to Theorem~\ref{thm: blocksensitivity composition} we have the following composition for $\adeg$ when the outer function has minimal approximate degree with respect to its block sensitivity. 
\ApproxDegBSCorro*

We also note that the set of Boolean functions with 
$\adeg(f)=\Theta(\sqrt{\bs(f)})$ includes examples of \emph{non-symmetric} functions $f$ with \emph{low} approximate degree. In other words, when such functions are outer function in a composed function then the composition of $\adeg$ doesn't follow from the known results \cite{BBG+18,Sherstov12}. For example, consider the Rubinstein function $\rub$ with arity $n$ (Definition~\ref{defi: Rubinstein function}). It is a non-symmetric function with $\adeg(\rub) = O(\sqrt{n}\log n)$ (Lemma~\ref{claim: ub adeg sink and rub}). Thus, the composition of $\adeg(\rub\circ g)$ does not follow from \cite{BBG+18} or \cite{Sherstov12}. However, $\adeg(\rub) = \tilde\Theta(\sqrt{\bs(\rub)})$, and hence from Corollary~\ref{cor:adeg-minimal-bs}, $\adeg(\rub\circ g) = \tilde\Theta(\adeg(\rub)\, \adeg(g))$. 

Another example is given by the $\sink$ function on $\Theta(n^2)$ variables (Definition~\ref{defi: sink}). 
Its approximate degree is $O(\sqrt{n}\log n)$ (Lemma~\ref{claim: ub adeg sink and rub}), while $\bs(\sink) = \Theta(n)$. Thus, $\adeg(\sink\circ g) = \tilde\Theta(\adeg(\sink)\,\adeg(g))$ follows from Corollary~\ref{cor:adeg-minimal-bs}. 

As stated in the introduction, we recall that Theorem~\ref{thm: blocksensitivity composition} is tight in terms of block-sensitivity, i.e., the lower bound can not be improved to $\widetilde\Omega(\bs(f)^c\cdot\adeg(g))$ for some $c > 1/2$.


\section{Conclusion}
While our work makes progress on the composition problem for $\R$ and $\adeg$, the main problems of whether $\adeg$ and $\R$ composes for any pair of Boolean functions still remains open. In this light, we would like to highlight some questions that can be useful stepping stones towards the main questions. 



We showed that the composition question for $\R$ is equivalent to the following open question (which is a generalization of Ben-David and Blais~\cite{bb20focs} result):


\begin{open question}
 Let $f:\zone^n \rightarrow \zonep$ be a Boolean function. 
 Then, 
 is it true that for arbitrary $t$, $\noisyR(f) = \Theta\left(\R(f \circ \GapMaj_t) /t \right)$?
 \end{open question}



In case of approximate degree composition, 
a natural question is whether $\sqrt{\bs(f)}$ can be replaced by some other complexity measures. In this regards we state the following open problems: 


\begin{open question}
For all Boolean functions $f$ and $g$, can we prove either of the following: 
\begin{minipage}[t]{0.5\textwidth}
\begin{itemize}
\item $\adeg( f\circ g) = \Omega(\sqrt{ \deg(f)} \cdot \adeg(g))$?
\end{itemize}
\end{minipage}
\begin{minipage}[t]{0.5\textwidth}
\begin{itemize}
\item $\adeg( f\circ g) = \Omega( \sqrt{ \fbs(f)} \cdot \adeg(g))$? 
\end{itemize}
\end{minipage}
\end{open question}
Recently, in \cite{SYZ04, Sun07, Drucker11, Chakraborty11, DBLP:conf/icalp/0001KP22}, the classes of transitive functions got a lot of attention as natural generalization of the classes of symmetric functions. Can the result for symmetric functions be extended to transitive functions? 

\begin{open question}
Can we prove that $\adeg$ and $\R$ compose when the outer function is transitive?
\end{open question}

\bibliographystyle{alpha}
\bibliography{reference}
\appendix

\section{Some important Boolean functions}
In this section, we define some important Boolean functions that have been used in the paper. We start with Multiplexer Function or Addressing Function.

\begin{defi}[Multiplexer Function or Addressing Function]
\label{def: mux or addresssing}
The function $\mathsf{MUX}:\zone^{k+2^k}\to\zone$ with input $(x_0,\dots, x_{k-1}, y_0, \dots, y_{2^k-1})$ outputs the bit $y_t$, where $t = \sum_{i=0}^{k-1} x_i 2^{i}$.
\end{defi}

The following function was defined in~\cite{CMS20}.

\begin{defi}[$\sink$]
\label{defi: sink}
Consider a tournament defined on $k$ vertices with $\binom{k}{2}$ variables such that, for $i < j$, if $x_{ij} = 1$ then there is an outgoing edge from $i$ to $j$.
A vertex $i \in [n]$ is a sink-vertex if all edges incident to it are incoming edges. For $x \in \binom{k}{2}$, $\sink(x) = 1$ if there a vertex $i \in [n]$ such that $i$ is a sink-vertex, and $0$ otherwise. 
\end{defi}

\begin{defi}[Rubinstein function~\cite{Rub95}]
\label{defi: Rubinstein function}
Let $g: \zone^k \to \zone$ be such that $g(x) =1$ iff $x$ contains two consecutive ones and the rest of the bits are $0$. The Rubinstein function, denoted by $\rub: \zone^{k^2} \to \zone$, is defined to be $\rub = \OR_k \circ g$.
\end{defi}

\subsection{Some properties of $\sink$ and Rubinstein function}
Following is an easy observation.
\begin{observation}
\label{obs: bs of sink}
The sensitivity of $\sink: \zone^{\binom{k}{2}} \to \zone$ is at least $(k-1)$. Consider a tournament on vertices $v_1, \dots, v_k$ such that $v_1$ is a sink-vertex and $(v_2, \dots, v_k, v_2)$ is a cycle. Observe that flipping any edge incident to $v_1$ changes the value of the function from $1$ to $0$. In particular, \(\bs(\sink) \geq \s(\sink) = \Omega(k)\). 
\end{observation}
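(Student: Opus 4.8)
The plan is to exhibit a single input on which $\sink$ has sensitivity $k-1$; since $\bs(f)\ge \s(f)$ for every Boolean function, this immediately gives $\bs(\sink)=\Omega(k)$. Concretely, I would take the tournament $x$ on vertices $v_1,\dots,v_k$ in which every edge incident to $v_1$ is oriented \emph{into} $v_1$ (so $v_1$ has out-degree $0$), the edges $v_2\to v_3\to\cdots\to v_k\to v_2$ form a directed cycle, and the remaining edges among $v_2,\dots,v_k$ are oriented arbitrarily. Since the claimed bound $\Omega(k)$ is vacuous for $k\le 3$, one may assume $k\ge 4$, so that this directed cycle is well defined.

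First I would check that $\sink(x)=1$: the vertex $v_1$ has no outgoing edge and hence is a sink-vertex, while each $v_i$ with $i\ge 2$ has the outgoing cycle edge to its successor and therefore is not a sink; so $v_1$ is the unique sink of $x$. Next, I would argue that $x$ is sensitive at each of the $k-1$ coordinates corresponding to edges incident to $v_1$. Flipping the coordinate $x_{1j}$ for some $j\ge 2$ reverses the edge between $v_1$ and $v_j$ from $v_j\to v_1$ to $v_1\to v_j$, yielding a new input $x'$. In $x'$, the vertex $v_1$ now has an outgoing edge and is no longer a sink; $v_j$ still has its outgoing cycle edge and remains a non-sink; and every other vertex is unaffected. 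Hence $x'$ has no sink-vertex, so $\sink(x')=0\ne\sink(x)$. This shows $\s(\sink,x)\ge k-1$, hence $\s(\sink)\ge k-1$, and therefore $\bs(\sink)\ge\s(\sink)\ge k-1=\Omega(k)$.

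There is essentially no serious obstacle here; the only point requiring a moment's care is verifying that reversing an edge at $v_1$ does not accidentally create a new sink among $v_2,\dots,v_k$, which is immediate because the Hamiltonian cycle guarantees that every such vertex retains an outgoing edge regardless of the orientation of its edge to $v_1$.
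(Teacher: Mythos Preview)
Your proposal is correct and follows exactly the construction indicated in the paper: take $v_1$ to be a sink with a directed Hamiltonian cycle on $v_2,\dots,v_k$, then flipping any of the $k-1$ edges at $v_1$ destroys the unique sink without creating a new one. The only addition you make is the explicit check (via the cycle edges) that no new sink appears, which the paper leaves implicit.
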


We now want to give an upper bound on the approximate degree of these two functions. For this, we first need the following generalization of approximate degree.

\begin{defi}[$\varepsilon$-Error Approximate Degree $(\adeg_\varepsilon)$]
\label{defi: eps approximate degree}
A polynomial $p: \mathbb{R}^n \to \mathbb{R}$ is said to $\varepsilon$-approximate a Boolean function $f: \zone^n \to \{0,1\}$ if 
\begin{align*}
    | p(x) -f(x) | \leq \varepsilon, \quad \forall x \in \zone^n.
\end{align*}
The $\varepsilon$-approximate degree of $f$ $(\adeg_{\varepsilon}(f))$ is the minimum possible degree of a polynomial which $\varepsilon$-approximates $f$.
\end{defi}

It is known (see~\cite{BNRdW07} also~\cite[Appendix A]{tal2014shrinkage}) that given a polynomial $p$ that approximates a Boolean function $f$ to error $1/3$, one can come up with a polynomial $p'$ that $\varepsilon$-approximates $f$ such that $\deg(p') = O(\deg(p) \log (1/\varepsilon))$. 

Also, the following theorem will be used in this section.

\begin{theorem}[\cite{She13b}]
\label{theo: robust poly}
For all Boolean functions $f$ and $g$, $\adeg(f \circ g) = O(\adeg(f) \adeg(g))$.
\end{theorem}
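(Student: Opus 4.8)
The plan is to derive this from Sherstov's \emph{noise-robustification} of polynomials. The naive composition --- substituting, into a degree-$\adeg(f)$ polynomial approximating $f$, a degree-$\adeg(g)$ polynomial approximating $g$ on each of the $n$ input blocks --- only works if one first amplifies each inner approximation down to error $O(1/n)$ so that a union bound controls the total deviation, and this amplification costs a spurious factor of $\log n$ (where $n$ is the arity of $f$). To get rid of it, I would replace the outer polynomial by a single polynomial of the same degree, up to a constant factor, that already tolerates a \emph{constant} amount of noise on \emph{all} of its inputs simultaneously, and then plug in constant-error inner approximations.

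The key ingredient I would invoke is a robustification lemma of the following shape (this is Sherstov's theorem, specialized to the regime we need): every polynomial $p \colon \RR^n \to \RR$ of degree $d \ge 1$ with $\abs{p(y)} \le 2$ for all $y \in \zone^n$ admits a polynomial $P \colon \RR^n \to \RR$ of degree $O(d)$ such that $\abs{P(z) - p(y)} \le 1/10$ whenever $y \in \zone^n$ and $\norm{z - y}_\infty \le 1/4$. Establishing this --- following \cite{She13b} --- is where I expect the real difficulty to lie: one expands $p$ in the multilinear basis, replaces each monomial by a robust amplified surrogate in the perturbed variables, and bounds the accumulated error via a carefully chosen amplification schedule so that the total degree stays $O(d)$ rather than $O(d\log n)$. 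A useful simplification here is that for the composition application only a constant noise radius and constant final error are required, which is the mildest regime of the general statement.

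Granting the lemma, I would finish as follows. Put $d = \adeg(f)$ and $e = \adeg(g)$, let $p$ be a degree-$d$ polynomial with $\abs{p(y) - f(y)} \le 1/3$ on $\zone^n$ (so $\abs{p} \le 2$ there), and let $P$ of degree $O(d)$ be its robustification. Using the error-reduction fact recalled just before this theorem, take $q$ of degree $O(e)$ with $\abs{q(w) - g(w)} \le 1/4$ for all $w \in \zone^m$, and set $R(x_1, \dots, x_n) := P(q(x_1), \dots, q(x_n))$, a polynomial of degree $O(d)\cdot O(e) = O(\adeg(f)\,\adeg(g))$. For any $x = (x_1, \dots, x_n) \in \zone^{nm}$, writing $y = (g(x_1), \dots, g(x_n)) \in \zone^n$ and $z = (q(x_1), \dots, q(x_n))$, one has $\norm{z - y}_\infty \le 1/4$, hence $\abs{R(x) - (f\circ g)(x)} \le \abs{P(z) - p(y)} + \abs{p(y) - f(y)} \le 1/10 + 1/3 < 1/2$, and a final constant-factor amplification reduces the error to $1/3$ without changing the order of the degree. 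Everything outside the robustification lemma is just bookkeeping with absolute constants, so that lemma is the one genuine obstacle.
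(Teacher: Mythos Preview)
The paper does not prove this theorem; it is stated as a citation of Sherstov~\cite{She13b} and used as a black box (e.g., in Lemma~\ref{claim: ub adeg sink and rub}). Your proposal correctly reconstructs the standard robust-polynomial argument from~\cite{She13b}: replace the outer approximator by a constant-noise-tolerant polynomial of the same degree up to constants, then plug in constant-error inner approximations, avoiding the naive $\log n$ blowup. This is exactly the approach of the cited source, and your bookkeeping after invoking the robustification lemma is fine.
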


We now upper bound the approximate degree of $\sink$ and $\rub$.

\begin{lemma}
\label{claim: ub adeg sink and rub}
For the sink function $\sink: \zone^{\binom{k}{2}} \to \zone$ 
 and the Rubinstein function $\rub:\zone^{k^2} \to \zone$ we have
 \begin{itemize}
     \item[1.] $\adeg(\sink) = O(\sqrt{k} \log k)$, and 
     \item[2.] $\adeg(\rub) = O(k \log k)$.
 \end{itemize}
\end{lemma}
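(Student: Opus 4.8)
The plan is to treat the two functions separately, and in both cases to write the function (or one of its building blocks) as an honest \emph{sum} of $\AND$s of literals — one summand per ``witness'' — and then exploit that at most one witness can be active on any given input, so that the errors add up only mildly.

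For $\sink$, I would first note that vertex $i$ is a sink precisely when all $k-1$ edges incident to it point towards $i$; letting $A_i$ denote the conjunction of the corresponding $k-1$ literals, we get $\sink=\bigvee_{i=1}^k A_i$. Since a tournament has at most one sink vertex, the events $A_i=1$ are pairwise disjoint, and hence $\sum_{i=1}^k A_i(x)=\sink(x)$ for \emph{every} $x\in\zone^{\binom k2}$ (the sum is always $0$ or $1$). Now each $A_i$ is an $\AND$ of $k-1$ literals, so $\adeg(A_i)=\Theta(\sqrt k)$ (Nisan--Szegedy, invariant under negating variables), and by the error-reduction statement quoted above there is a polynomial $p_i$ of degree $O(\sqrt k\log k)$ with $|p_i(x)-A_i(x)|\le \tfrac1{3k}$ for all $x$. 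Then $p:=\sum_{i=1}^k p_i$ has degree $O(\sqrt k\log k)$ and $|p(x)-\sink(x)|\le \sum_{i=1}^k|p_i(x)-A_i(x)|\le k\cdot\tfrac1{3k}=\tfrac13$, which gives $\adeg(\sink)=O(\sqrt k\log k)$.

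For $\rub=\OR_k\circ g$ (a genuine composition on $k^2$ disjoint variables), I would apply the same device at the level of $g$: the function $g$ is the OR of the $k-1$ conjunctions $B_j$ expressing ``$x_j=x_{j+1}=1$ and all other bits are $0$'', and these are again pairwise exclusive, so $g=\sum_{j=1}^{k-1}B_j$ pointwise on $\zone^k$. Error-reducing each $B_j$ (an $\AND$ of $k$ literals) to accuracy $\tfrac1{3(k-1)}$ and summing shows $\adeg(g)=O(\sqrt k\log k)$. Finally I would invoke Sherstov's robust composition bound (Theorem~\ref{theo: robust poly}): $\adeg(\rub)=\adeg(\OR_k\circ g)=O(\adeg(\OR_k)\cdot\adeg(g))=O(\sqrt k\cdot\sqrt k\log k)=O(k\log k)$, using $\adeg(\OR_k)=\Theta(\sqrt k)$.

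I do not anticipate a real obstacle; the one place needing care is the exactness of the identities $\sum_i A_i=\sink$ and $\sum_j B_j=g$ as Boolean-valued functions, which rests on uniqueness (of a sink vertex, and of a consecutive-ones pair in a weight-two string with the rest zero) — this is exactly what keeps the summed error at $k\cdot\tfrac1{3k}$ instead of blowing up. Note that the summation trick cannot be applied at the top $\OR_k$ of $\rub$, since several of the $g(y_i)$ may be $1$ at once; that is precisely why Theorem~\ref{theo: robust poly} is needed for $\rub$ but not for $\sink$.
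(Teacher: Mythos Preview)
Your proposal is correct and follows essentially the same approach as the paper: for $\sink$ you write it as a sum of disjoint $\AND$s (one per potential sink vertex), error-reduce each to $1/(3k)$, and sum; for $\rub$ you do the same to bound $\adeg(g)$ and then apply Theorem~\ref{theo: robust poly} to the outer $\OR_k$. Your explicit remark on why the summation trick fails at the top $\OR_k$ of $\rub$ is a nice clarification that the paper leaves implicit.
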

\begin{proof}
We first upper bound the approximate degree of $\sink$. For every vertex $i \in [n]$, there is a sink at that vertex if and only if all edges incident to that vertex are incoming. This is equivalent to saying that the variables $x_{ij}$, when $j > i$, are $0$ and $x_{ji}$, when $j < i$, are $1$. This can be computed by an $\AND_k$ function and we call this function $\AND_k^{(i)}$. 
Also, since any tournament has at most one sink vertex, $\sink$ can be expressed as the sum of $\AND_k^{(i)}$'s, for $i \in [k]$. Recall that the approximate degree of $\AND_k$ is $O(\sqrt{k})$, which, from Theorem~\ref{theo: robust poly}, implies that $1/(3k)$-error approximate degree of $\AND_k$ is $O(\sqrt{k}\log k)$. Replacing each $\AND_k^{(i)}$ with $1/(3k)$-error approximating polynomial gives a $1/3$-error approximating polynomial for $\sink$ with degree $O(\sqrt{k}\log k)$.

Now we upper bound the approximate degree of Rubinstein function. Recall that $\rub: \zone^{k^2} \to \zone$ is defined as $\rub = \OR_k \circ g$, where $g:\zone^k \to\zone$ is a function such that $g(x) = 1$ if and only if $x$ contains two consecutive $1$'s and the rest of the bits are $0$. Observe that there are only $(k-1)$ inputs on which $g$ takes value $1$, call these inputs $z_1, \dots, z_{k-1}$. Let $\AND_k^{(i)}$ be the Boolean function which evaluates to $1$ if and only if its input is $z_i$. Thus $g$ can be expressed as the sum of $\AND_k^{(i)}$'s, for $i \in [k]$. By a similar argument as in the last paragraph, the $1/3$-error approximate degree of $g$ can be bounded by $O(\sqrt{k}\log k)$. From Theorem~\ref{theo: robust poly}, this implies an $O(k \log  k)$ upper bound on the approximate degree of $\rub$.
\end{proof}

\section{Approximate degree of Promise-OR composed with different inner functions}
\label{appendix: shifted OR}




In this section we show that the approximate degree composes when the outer function is $\PrOR$ and the inner functions are (possibly) different partial functions. The proof is essentially a straightforward generalization of the proof of Theorem~\ref{thm:promise-or-composition} \cite[Theorem 16 (arXiv version)]{BBG+18}. However, for the sake of completeness and reader's convenience, we give an overview of the proof here. We will need some definitions and theorems from \cite{BBG+18} which we state now. We start with the definition of a problem called ``singleton combinatorial group testing''. It generalizes the combinatorial group testing problem.

\begin{defi}[Singleton CGT]
Let $D$ be the set of all $w\in\zone^{2^n}$ for which there exists an $x \in \zone^n$ such that for all $S\subseteq [n]$ satisfying $\sum_{i\in S}x_i \in \zone$, we have $\sum_{i \in S}x_i = w_S$. Note that for all $w\in D$, the string $x$ is uniquely defined by $x_i = w_{\{i\}}$. Let us denote this string by $x(w)$. we then define the partial function 
$\mathsf{SCGT}_{2^n}:D\to\zone^n$ by $\mathsf{SCGT}_{2^n}(w) = x(w)$. 
\end{defi}

\begin{theorem}[{\cite[Theorem~19 (arXiv version)]{BBG+18}}]
\label{thm:quantum-SCGT}
 The bounded-error quantum query complexity of $\mathsf{SCGT}_{2^n}$ is $\Theta(\sqrt{n})$. 
\end{theorem}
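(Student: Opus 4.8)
The plan is to prove matching bounds $\Omega(\sqrt n)$ and $O(\sqrt n)$. Following \cite{BBG+18}, the lower bound comes from embedding unstructured search into $\mathsf{SCGT}_{2^n}$, and the upper bound from a quantum group-testing algorithm that exploits the ``singleton'' promise built into the domain $D$; I assume only standard facts about Grover search and the quantum adversary method.

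\textbf{Lower bound.} I would exhibit inside $\mathsf{SCGT}_{2^n}$ a clean copy of the $n$-bit search problem. Let $w^{(0)}\in D$ be the input with $w^{(0)}_\emptyset = 0$ and $w^{(0)}_S = 1$ for every nonempty $S$; its hidden string is $x(w^{(0)}) = 1^n$. For each $j\in[n]$, let $w^{(j)}$ be $w^{(0)}$ with the single coordinate indexed by $\{j\}$ flipped to $0$ (and the coordinates indexed by sets $S\ni j$ with $|S|\ge 2$ kept at their forced value $1$); one checks $w^{(j)}\in D$ with $x(w^{(j)}) = 1^n\oplus e_j$, and that $w^{(0)}$ and $w^{(j)}$ agree on every coordinate except $\{j\}$. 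Consequently, any algorithm that outputs $x(\cdot)$ on this $(n{+}1)$-element sub-family must, in effect, locate the unique flipped singleton coordinate (or certify none is flipped), which is exactly $\OR_n$; by the Bennett--Bernstein--Brassard--Vazirani optimality bound this costs $\Omega(\sqrt n)$ quantum queries. (Equivalently, one may apply the positive-weight adversary method to the same family, whose adversary value is $\Theta(\sqrt n)$.)

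\textbf{Upper bound.} I would give a quantum algorithm recovering $\supp(x)$ by iterated amplitude amplification. Maintaining the set $U$ of coordinates not yet resolved, one Grover-searches the subsets $S\subseteq U$ for one that ``isolates'' a single as-yet-unfound element of $\supp(x)$ — recognized because the informative ($\le 1$-sum) responses $w_{S'}$ for $S'\subseteq S$ permit an $O(\log n)$-query binary search that names that element — and then removes it from $U$, repeating. Each round costs $O(\sqrt n)$ up to logarithmic factors, and the crucial point is to use the structure of $D$ (so that, for instance, random halvings of a ``heavy'' test become informative singleton tests with constant probability) to argue that the query costs over all rounds telescope to $O(\sqrt n)$ rather than accruing $O(\sqrt n)$ per recovered element; the leftover logarithmic factors are then shaved by a standard bounded-error quantum-search / oversampling argument. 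I expect this amortization, together with correctly accounting for the adversarial responses on non-singleton sets, to be the main obstacle. Combining the two directions yields $Q(\mathsf{SCGT}_{2^n}) = \Theta(\sqrt n)$.
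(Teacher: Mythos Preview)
This theorem is quoted without proof in the paper (it is cited directly from \cite[Theorem~19]{BBG+18}), so there is no in-paper argument to compare against. On the merits of your proposal:

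Your lower bound is correct. The inputs $w^{(0)},w^{(1)},\ldots,w^{(n)}$ you construct all lie in $D$, each pair $w^{(0)},w^{(j)}$ differs only at the coordinate indexed by $\{j\}$, and their outputs under $\mathsf{SCGT}$ are distinct; the basic adversary method (equivalently, the reduction from $\OR_n$ you describe) then gives $\Omega(\sqrt{n})$ immediately. This matches the argument in \cite{BBG+18}.

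Your upper bound sketch, however, has a genuine gap, and it is precisely the one you flag. As written, you Grover-search over \emph{all} $S\subseteq U$ for a set that isolates a single element of $\supp(x)\cap U$. If $k':=|\supp(x)\cap U|$, a uniformly random $S\subseteq U$ satisfies $|S\cap\supp(x)\cap U|=1$ with probability only $k'/2^{k'}$, so that Grover step alone costs $\Theta(2^{k'/2})$ --- exponential, not $O(\sqrt n)$. Restricting the search to sparse sets (density $\approx 1/k'$) repairs the isolation probability, but then each round still costs $\Omega(\log n)$ for the binary search, and over $k=|\supp(x)|$ rounds you pay $\Omega(k\log n)$, which is $\widetilde\Omega(n)$ when $k=\Theta(n)$. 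The telescoping you hope for simply does not emerge from this scheme: the per-round costs are bounded \emph{below} by the verification/binary-search cost regardless of how the Grover part amortizes. The upper bound in \cite{BBG+18} uses a different mechanism, and your outline would need a new idea rather than a patch.
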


For a formal Definition of bounded error quantum query complexity we refer the survey by \cite{BW02}.
Before we state the next result that we need from \cite{BBG+18} we are defining robustness of a polynomial to input noise.

\begin{defi}[Robustness to input noise]
    \label{defi: robustness of a polynomial}
   For any function $f: \zone^n \to \{0,1,*\}$ we say a polynomial $p: \zone^n \to \R$ approximately computes $f$ with $\delta$-robustness where $\delta \in [0, \frac{1}{2})$ if for any $x \in \Dom(f)$ and $\Delta \in [-\delta, \delta]^n$, we have $|f(x) - p(\Delta+ x)| \leq \frac{1}{3}$. 
\end{defi}

Now we are ready to state the next result. 

\begin{theorem}[{\cite[Theorem~17 (arXiv version)]{BBG+18}}]
\label{thm:quantum-robustness}
For a partial Boolean function $f$, there exists a bounded multilinear polynomial $p$ of degree $O(\mathsf{Q}(f))$ that approximates $f$ with robustness $\Omega(1/\mathsf{Q}(f)^2)$ where $\mathsf{Q}(f)$ is the bounded error quantum query complexity of the function $f$. 
\end{theorem}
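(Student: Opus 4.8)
}
The plan is to extract $p$ from a bounded-error quantum query algorithm for $f$, following the ``robust polynomials'' paradigm of Buhrman, Newman, R\"ohrig and de Wolf~\cite{BNRdW07}. First I would take a quantum query algorithm $\mathcal{A}$ computing $f$ with error at most $1/3$ and $\mathsf{Q}(f)$ queries, and amplify it by a majority vote over a constant number of independent runs to get $\mathcal{A}'$ with error at most $1/100$ using $T = O(\mathsf{Q}(f))$ queries. Recall the Beals--Buhrman--Cleve--Mosca--de Wolf observation that the final amplitudes of a $T$-query algorithm are polynomials of degree at most $T$ in the input, so the acceptance probability is a real polynomial of degree at most $2T$ lying in $[0,1]$ on every Boolean input. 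This already gives a bounded polynomial of degree $O(\mathsf{Q}(f))$ approximating $f$ on $\Dom(f)$; the remaining work is to make it robust and multilinear.

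For robustness I would run $\mathcal{A}'$ against the \emph{fractional phase oracle} $O_y$ acting as $|i\rangle \mapsto (1-2y_i)\,|i\rangle$, which coincides with the usual $\pm 1$ oracle when $y\in\zone^n$ and has operator norm at most $1$ when $y\in[0,1]^n$. Writing $|\psi_y\rangle$ for the resulting (sub-normalized) final state and $\Pi$ for the projector onto accepting outcomes, the function $p(y) := \|\Pi|\psi_y\rangle\|^2$ is, by the same amplitude argument, a real polynomial in $y$ of degree at most $2T = O(\mathsf{Q}(f))$; moreover it is bounded on $[0,1]^n$ and, since $\|O_{x+\Delta}\| \le 1 + 2\delta$, stays within $1 + O(T\delta)$ on the $\delta$-neighborhood of the cube. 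Now fix $x\in\Dom(f)$ and $\Delta\in[-\delta,\delta]^n$: the single-query operators $O_{x+\Delta}$ and $O_x$ differ in operator norm by at most $2\delta$, so the standard hybrid argument over the $T$ queries gives $\||\psi_{x+\Delta}\rangle - |\psi_x\rangle\| = O(T\delta)$ and hence $|p(x+\Delta)-p(x)| = O(T\delta)$. Choosing $\delta = \Theta(1/\mathsf{Q}(f)^2)$ makes $O(T\delta) = O(1/\mathsf{Q}(f))$, so for $\mathsf{Q}(f)$ above a suitable constant (the remaining case being trivial) we get $|p(x+\Delta)-f(x)| \le |p(x+\Delta)-p(x)| + |p(x)-f(x)| \le O(1/\mathsf{Q}(f)) + 1/100 \le 1/3$. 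Finally I would pass to a multilinear version, e.g.\ via the multilinearization/symmetrization step of~\cite{BNRdW07}, and rescale into $[0,1]$.

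I expect the genuinely delicate point to be this last step: making the polynomial \emph{simultaneously} multilinear, bounded on the whole cube, of degree $O(\mathsf{Q}(f))$, \emph{and} robust. Plain multilinearization of $p(y) = \|\Pi|\psi_y\rangle\|^2$ agrees with $p$ only on $\zone^n$ and can blow up just off the cube, so one cannot bolt multilinearity on after the fact; instead robustness must be built into the quantum algorithm itself (a robust quantum query algorithm for $f$ with $O(\mathsf{Q}(f))$ queries, in the noisy-oracle model), which is exactly what the machinery of~\cite{BNRdW07} provides. This is also why one settles for the weaker $\Omega(1/\mathsf{Q}(f)^2)$ robustness rather than $\Omega(1/\mathsf{Q}(f))$: the quadratic (rather than linear) room comfortably absorbs the constant- and lower-order-factor losses incurred along the way. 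Everything else---amplification and the ``acceptance probability is a low-degree bounded polynomial'' fact---is routine.
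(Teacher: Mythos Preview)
The paper does not prove Theorem~\ref{thm:quantum-robustness}; it is quoted verbatim as \cite[Theorem~17 (arXiv version)]{BBG+18} and used as a black box in the proof of Theorem~\ref{lem:pror-weak-bounds}. There is therefore no ``paper's own proof'' to compare your proposal against.

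That said, your sketch is broadly the right shape for how the result is established in \cite{BBG+18}, and you have correctly isolated the one nontrivial issue: the acceptance probability of a $T$-query quantum algorithm is a degree-$2T$ polynomial that is bounded on $\zone^n$ but need not be multilinear, and naive multilinearization destroys the off-cube robustness you just argued for via the hybrid bound. Your proposed fix---build robustness into the quantum algorithm itself using the noisy-oracle machinery of \cite{BNRdW07}, rather than trying to patch the polynomial afterward---is exactly the route taken in \cite{BBG+18}. One small caution: your ``fractional phase oracle'' $O_y$ with $y\in[0,1]^n$ is not a unitary when $y$ is non-Boolean, so the intermediate states are sub-normalized and the quantity $\|\Pi|\psi_y\rangle\|^2$ is not literally an acceptance probability; this is harmless for defining $p$ as a polynomial, but it means the boundedness of $p$ on $[0,1]^n$ (as opposed to $\zone^n$) and the multilinearity really do have to come from the \cite{BNRdW07} construction rather than from this heuristic. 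Since the present paper only \emph{uses} the statement, the right move here is simply to cite \cite{BBG+18} (and \cite{BNRdW07}) rather than reprove it.
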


We refer \cite{BBG+18} for more details about robustness of a polynomial induces by quantum algorithm.
We also need the existence of a multilinear robust polynomial for $\mathsf{XOR}_n\circ\mathsf{SCGT}_{2^n}$, which follows from Theorems~\ref{thm:quantum-SCGT} and \ref{thm:quantum-robustness} above, where $\mathsf{XOR}_n\circ\mathsf{SCGT}_{2^n}$ is the parity of $n$ output bits of $\mathsf{SCGT}_{2^n}$. 

\begin{theorem}[{\cite[Theorem~20 (arXiv version)]{BBG+18}}]
\label{thm:robust-poly-SCGT}
There is a real polynomial $p$ of degree $O(\sqrt{n})$ over $2^n$ variables $\{w_S\}_{S\subseteq [n]}$ and a constant $c \geq 10^{-5}$ such that for any input $w \in \zone^{2^n}$ with $\mathsf{XOR}_n\circ\mathsf{SCGT}_{2^n}(w) \neq *$ and any $\Delta \in [-c/n,c/n]^{2^n}$, 
\[|p(w + \Delta) - \mathsf{XOR}_n\circ\mathsf{SCGT}_{2^n}(w)|\leq 1/3.\]
Furthermore, $p$ is multilinear and for all $w\in\zone^{2^n}$, $p(w) \in [0,1]$.
\end{theorem}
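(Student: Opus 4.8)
The plan is to read $p$ off from the two quantum-query facts already recorded: the $\Theta(\sqrt{n})$ bound on $\mathsf{Q}(\mathsf{SCGT}_{2^n})$ (Theorem~\ref{thm:quantum-SCGT}) and the generic conversion of a bounded-error quantum algorithm into a bounded, robust, multilinear polynomial (Theorem~\ref{thm:quantum-robustness}). The only step that is particular to this statement is observing that taking the parity of the outputs of $\mathsf{SCGT}_{2^n}$ is free in the quantum query model.

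First I would bound $\mathsf{Q}(\mathsf{XOR}_n\circ\mathsf{SCGT}_{2^n})$, where this partial Boolean function is the parity of the $n$ output bits of the multi-output function $\mathsf{SCGT}_{2^n}$. By Theorem~\ref{thm:quantum-SCGT} there is a quantum algorithm making $O(\sqrt{n})$ queries that, for every $w$ with $\mathsf{SCGT}_{2^n}(w)\neq *$, returns the correct string $x(w)=\mathsf{SCGT}_{2^n}(w)\in\zone^n$ with probability at least $2/3$; classically post-processing its output as $\bigoplus_{i=1}^{n}x(w)_i$ yields, at the same query cost and success probability, an algorithm for $\mathsf{XOR}_n\circ\mathsf{SCGT}_{2^n}$, so $\mathsf{Q}(\mathsf{XOR}_n\circ\mathsf{SCGT}_{2^n})=O(\sqrt{n})$. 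Next I would apply Theorem~\ref{thm:quantum-robustness} with $f:=\mathsf{XOR}_n\circ\mathsf{SCGT}_{2^n}$, obtaining a bounded multilinear polynomial $p$ over the $2^n$ variables $\{w_S\}_{S\subseteq[n]}$ of degree $O(\mathsf{Q}(f))=O(\sqrt{n})$ that approximates $f$ with robustness $\Omega(1/\mathsf{Q}(f)^2)=\Omega(1/n)$. Unwinding Definition~\ref{defi: robustness of a polynomial}, this says exactly that there is an absolute constant $c_0>0$ such that $|p(w+\Delta)-f(w)|\le 1/3$ for every $w\in\zone^{2^n}$ with $f(w)\neq *$ and every $\Delta\in[-c_0/n,c_0/n]^{2^n}$. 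Since shrinking the noise interval only weakens the conclusion, I may replace $c_0$ by any $c\le c_0$; for the finitely many $n$ too small to allow $c\ge 10^{-5}$ I would take $p$ constant, so the stated $c\ge 10^{-5}$ is harmless. The remaining clauses — multilinearity and $p(w)\in[0,1]$ for all $w\in\zone^{2^n}$ — are precisely the ``bounded multilinear'' part of Theorem~\ref{thm:quantum-robustness}.

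The hard part is hidden inside Theorem~\ref{thm:quantum-robustness}, which I would invoke rather than reprove. Its three requirements pull against each other: the plain polynomial method gives a degree-$O(\mathsf{Q})$ polynomial bounded on the cube but not robust; Sherstov-style robustification (\cite{She13b}) buys robustness to $\ell_\infty$-noise of order $1/\mathsf{Q}^2$ at the cost of raising the degree and possibly escaping $[0,1]$ near the cube; and a naive multilinearization changes the polynomial exactly at the perturbed points $w+\Delta$ where robustness is tested. Producing a single multilinear polynomial that is simultaneously $[0,1]$-bounded on the whole cube and robust in a $1/n$-neighborhood of every cube point is the technical heart of \cite{BBG+18}, so in a self-contained write-up this is the one place where I would either transcribe their construction or, as the paper does, cite it as a black box; everything else above is bookkeeping.
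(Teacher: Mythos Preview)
Your proposal is correct and matches the paper's approach exactly: the paper does not give a standalone proof but simply remarks that the statement ``follows from Theorems~\ref{thm:quantum-SCGT} and~\ref{thm:quantum-robustness},'' which is precisely the derivation you spell out (post-process the $O(\sqrt{n})$-query $\mathsf{SCGT}$ algorithm by a parity to get $\mathsf{Q}(\mathsf{XOR}_n\circ\mathsf{SCGT}_{2^n})=O(\sqrt{n})$, then invoke the quantum-to-robust-polynomial conversion). The only quibble is your handling of the constant $c\ge 10^{-5}$: the $\Omega(1/n)$ robustness from Theorem~\ref{thm:quantum-robustness} hides an absolute constant independent of $n$, so there is no ``finitely many $n$'' issue---the specific value $10^{-5}$ is simply what drops out of the \cite{BBG+18} analysis and should be taken on citation rather than argued around.
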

We also need the following result of Sherstov that shows composition holds for the approximate degree of the parity of $n$ different functions.
\begin{theorem}[{\cite[Theorem~5.9]{Sherstov12}}]
\label{thm:sherstov-xor}
For any partial Boolean functions $f_1,\ldots ,f_n$, we have 
\[\bdeg(\mathsf{XOR}\circ(f_1,\ldots ,f_n)) = \Omega\left(\sum_{i=1}^n\bdeg(f_i)\right).\]
\end{theorem}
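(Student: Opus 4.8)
The plan is to argue by the method of dual polynomials. Write $d_i := \bdeg(f_i)$ and set $h := \mathsf{XOR}\circ(f_1,\ldots,f_n)$; in the $\pm 1$ encoding $h$ is the product $\prod_{i=1}^n F_i(x_i)$, where $F_i$ is the $\pm 1$-valued version of $f_i$ on $\Dom(f_i)$. To prove $\bdeg(h)=\Omega(\sum_i d_i)$ it suffices to exhibit, for a suitable constant $c$ and $D := c\sum_i d_i$, a dual witness for $h$ against degree $D$: a function $\Psi$ on the ambient cube that is $L_1$-normalized, has pure high degree at least $D$ (all its Fourier mass on sets of size $\geq D$), correlates with $h$ on $\Dom(h)$ by a positive constant, and is controlled off $\Dom(h)$ by the boundedness slack built into the $\bdeg$ relaxation. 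The first step is routine: apply LP duality to the bounded approximate degree of each partial function $f_i$ to obtain such a dual witness $\psi_i$ certifying $\bdeg(f_i)\geq d_i$.

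The second step is to assemble the $\psi_i$ into a single dual witness $\Psi$ for $h$ via the dual block method, exploiting the rigidity of the dual witness of parity. The optimal dual witness of $\mathsf{XOR}$ on $n$ bits is a signed multiple of the single top monomial, $\pm 2^{-n}\chi_{[n]}$: it has pure high degree exactly $n$, $L_1$-mass $1$, and correlation exactly $1$ with $\mathsf{XOR}$. Substituting into each of the $n$ coordinates of this monomial an independent, sign-adjusted, rescaled copy of $\psi_i$ yields a $\Psi$ whose Fourier support consists of sets meeting each block $x_i$ in a set of size at least $d_i$; hence $\Psi$ has pure high degree $\geq \sum_i d_i$, and its normalization and orthogonality to low-degree polynomials follow from those of the $\psi_i$. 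One has to be slightly careful here because the $f_i$ are partial and of different arities, so the substitution must respect the $*$-inputs blockwise.

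I expect the crux to be the third step: controlling the correlation of $\Psi$ with $h$. A naive substitution makes this correlation the product $\prod_i\langle\psi_i,F_i\rangle$ of $n$ separate constants, hence potentially as small as $3^{-n}$, and plain error reduction only recovers $\Omega((\sum_i d_i)/n)$ — losing exactly the factor $n$ we need. The resolution, which is the technical heart of the theorem, is to arrange that the per-block losses add rather than multiply: one first amplifies each dual witness, replacing $\psi_i$ by a dual witness of $f_i$ with correlation $1-O(1/n)$ that still has pure high degree $\Omega(d_i)$ — room for this amplification comes precisely from the bounded/partial formulation, which lets the compensating mass be absorbed on the $*$-inputs and against the boundedness constraints rather than against $F_i$ itself — and then a union bound over the $n$ blocks shows that the total ``error mass'' of $\Psi$ is $O(1)$, so $\Psi$ correlates with $h$ on $\Dom(h)$ by a constant. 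Together with the pure-high-degree bound from the second step, this yields $\bdeg(h)=\Omega(\sum_i d_i)$. The delicate points are this amplification-and-union-bound bookkeeping and verifying that the substituted object continues to satisfy the (more elaborate) dual feasibility conditions of $\bdeg$ for partial functions; the degree and normalization accounting, while it must be done carefully, is otherwise mechanical.
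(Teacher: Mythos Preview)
The paper does not supply its own proof of this statement; it is quoted verbatim from \cite{Sherstov12} and invoked as a black box in the proof of Theorem~\ref{lem:pror-weak-bounds}. There is therefore no argument in the present paper against which to compare your proposal.

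For what it is worth, your outline via dual block composition---taking a dual witness $\psi_i$ for each $f_i$, tensoring them against the single top-monomial dual of $\mathsf{XOR}$ to obtain $\Psi$ with pure high degree $\sum_i d_i$, and then amplifying each per-block correlation to $1-O(1/n)$ so that the product remains bounded away from zero---is the standard route and is essentially how Sherstov proves the result in the cited reference. Your identification of the amplification/union-bound step as the place where the work happens is accurate; the remaining bookkeeping (normalization, dual feasibility for the partial/bounded formulation) is indeed mechanical once that step is in place.
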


We are now ready to prove Theorem~\ref{thm: for different function in PrOR} which we restate below.

\begin{theorem}
\label{lem:pror-weak-bounds}
For any partial Boolean functions $f_1, f_2, \ldots, f_n$, we have

\[\bdeg\left(\PrOR_n\circ (f_1,f_2,\ldots ,f_n)\right) = \Omega\left(\frac{\sqrt{n}\cdot \min_{i=1}^n\bdeg(f_i)}{\log n}\right).\]

Furthermore the following upper bound also holds, 
\[\bdeg\left(\PrOR_n\circ (f_1,f_2,\ldots ,f_n)\right) = O\left(\sqrt{n}\cdot \max_{i=1}^n\bdeg(f_i)\cdot \log n\right).\] 

\end{theorem}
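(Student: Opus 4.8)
The plan is to prove both inequalities by ``robust composition'' of bounded polynomials, following the template of the proof of Theorem~\ref{thm:promise-or-composition} in \cite{BBG+18}; all intermediate polynomials will be taken multilinear and $[0,1]$-valued on the Boolean cube (hence on all of $[0,1]^N$), and I will freely use the standard error-reduction for bounded polynomials of \cite{BNRdW07}, which costs only an $O(\log(1/\varepsilon))$ factor in the degree. For the \textbf{upper bound}, I would first note that $\PrOR_n$ is a domain-restriction of $\OR_n$, so $\mathsf{Q}(\PrOR_n)=O(\sqrt n)$; Theorem~\ref{thm:quantum-robustness} then yields a bounded polynomial $p_0$ of degree $O(\sqrt n)$ that approximates $\PrOR_n$ on $\Dom(\PrOR_n)$ and is robust to perturbations of magnitude $\Omega(1/n)$ in every coordinate. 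For each $i$, amplify a degree-$\bdeg(f_i)$ bounded approximating polynomial for $f_i$ down to error below this robustness parameter, obtaining a bounded $p_i'$ of degree $O(\bdeg(f_i)\log n)=O(\max_j\bdeg(f_j)\cdot\log n)$; the (multilinearized) composition $p_0(p_1'(y_1),\dots,p_n'(y_n))$ then has degree $O(\sqrt n\cdot\max_j\bdeg(f_j)\cdot\log n)$ and, by robustness of $p_0$, is $\tfrac13$-close to $\PrOR_n\circ(f_1,\dots,f_n)$ on its domain.

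For the \textbf{lower bound}, put $d:=\bdeg(\PrOR_n\circ(f_1,\dots,f_n))$ and $\mu:=\min_i\bdeg(f_i)$; we may assume every $f_i$ is non-constant, as otherwise $\mu=0$ and there is nothing to prove. The idea is to upper bound $\bdeg(\mathsf{XOR}_n\circ(f_1,\dots,f_n))$ by $O(d\,\sqrt n\,\log n)$ and then invoke Theorem~\ref{thm:sherstov-xor}, which gives $\bdeg(\mathsf{XOR}_n\circ(f_1,\dots,f_n))=\Omega\!\big(\sum_i\bdeg(f_i)\big)=\Omega(n\mu)$; these two facts together force $d=\Omega(\sqrt n\,\mu/\log n)$. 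To construct the approximating polynomial for $\mathsf{XOR}_n\circ(f_1,\dots,f_n)$, I would, for each $S\subseteq[n]$, let $h_S:=\PrOR_{|S|}\circ(f_i:i\in S)$ --- a partial function --- and observe that fixing the blocks outside $S$ to $0$-inputs of the corresponding $f_j$ exhibits $h_S$ as a restriction of $\PrOR_n\circ(f_1,\dots,f_n)$, so $\bdeg(h_S)\le d$. Amplify the corresponding witnesses to bounded polynomials $q_S$ of degree $O(d\log n)$ that are $(c/n)$-close to $h_S$ on $\Dom(h_S)$, where $c$ is the constant of Theorem~\ref{thm:robust-poly-SCGT}, and substitute $(q_S)_{S\subseteq[n]}$ into the robust polynomial $p$ for $\mathsf{XOR}_n\circ\mathsf{SCGT}_{2^n}$ from that theorem. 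The resulting (multilinearized) polynomial has degree $O(\sqrt n)\cdot O(d\log n)=O(d\,\sqrt n\,\log n)$ and is bounded; and on any input $y$ with all $f_i(y_i)$ defined, writing $x=(f_1(y_1),\dots,f_n(y_n))$ and $w^\star_S=\bigvee_{i\in S}x_i$, one checks that $w^\star\in\Dom(\mathsf{SCGT}_{2^n})$ with $\mathsf{SCGT}_{2^n}(w^\star)=x$, and that for every $S$ containing at most one index $i$ with $x_i=1$ --- the \emph{forced} coordinates of $w^\star$ --- we have $y\in\Dom(h_S)$ and $h_S(y)=w^\star_S$, so $q_S(y)$ is $(c/n)$-close to $w^\star_S$.

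I expect the \textbf{main obstacle} to be the coordinates $S$ with two or more indices $i\in S$ having $x_i=1$: for these, $y$ lies outside $\Dom(h_S)$ and $q_S(y)$ is essentially uncontrolled in $[0,1]$. The point to exploit is that these are exactly the \emph{unconstrained} coordinates in the domain $D$ of $\mathsf{SCGT}_{2^n}$, and the polynomial $p$ --- induced by a quantum algorithm for singleton combinatorial group testing, which by definition must recover $x$ even against arbitrary (non-Boolean, adversarial) answers on sets containing $\ge 2$ marked elements --- remains $\tfrac13$-correct when those coordinates are set to arbitrary values in $[0,1]$. So the form of Theorem~\ref{thm:robust-poly-SCGT} actually used is the (slightly stronger, same-proof) statement that $p(v)$ is $\tfrac13$-close to $\mathsf{XOR}_n\circ\mathsf{SCGT}_{2^n}(w)$ for every $w\in D$, every magnitude-$\le c/n$ perturbation on its forced coordinates, and every choice of $[0,1]$-values on its unconstrained coordinates; and this is precisely why the $h_S$ must be kept \emph{partial} (so that $\mathsf{XOR}_n\circ\mathsf{SCGT}_{2^n}\circ(h_S)_S$ does not collapse to a trivial domain) and why one needs to handle possibly-different partial inner functions throughout. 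With that in hand, $p((q_S(y))_S)$ is $\tfrac13$-close to $\mathsf{XOR}_n(x)=\mathsf{XOR}_n\circ(f_1,\dots,f_n)(y)$, completing the chain; the remaining ingredients --- error-reduction and multilinearization --- are routine.
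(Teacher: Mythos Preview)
Your approach is the paper's approach: the same robust composition for the upper bound, and for the lower bound the same reduction to $\mathsf{XOR}_n\circ(f_1,\ldots,f_n)$ by plugging amplified restrictions $q_S$ into the robust polynomial $p$ for $\mathsf{XOR}_n\circ\mathsf{SCGT}_{2^n}$ and then invoking Theorem~\ref{thm:sherstov-xor}.

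The one point that needs correction is your justification at the ``main obstacle''. The quantum algorithm for $\mathsf{SCGT}$ does not, by definition or otherwise, handle \emph{non-Boolean} oracle answers; its correctness is only on $w\in D\subseteq\{0,1\}^{2^n}$, where the unconstrained coordinates may take arbitrary \emph{Boolean} values. The reason $p$ nevertheless tolerates arbitrary $[0,1]$-values on those coordinates is precisely the multilinearity recorded in Theorem~\ref{thm:robust-poly-SCGT}: any such input is a convex combination of Boolean inputs in $D$ sharing the same forced coordinates (hence the same value of $\mathsf{XOR}_n\circ\mathsf{SCGT}_{2^n}$), and a multilinear $p$ on a convex combination equals the convex combination of its values, so it stays within $1/3$. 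This is exactly the argument the paper gives, and it is what makes your ``slightly stronger, same-proof'' parenthetical true; but the mechanism is multilinearity of $p$, not a property of the quantum query model.
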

\begin{proof}
The upper bound follows by first amplifying the approximation of inner function to within error $\Theta(1/n)$ and then composing with the polynomial given by Theorem~\ref{thm:quantum-robustness} for \text{PrOR}.

For amplification one can use the univariate amplification polynomial of degree $O(\log(1/\varepsilon))$ that maps $[0,1/3]$ to $[0,\varepsilon]$, $[2/3,1]$ to $[1-\varepsilon,1]$, and $[1/3,2/3]$ to $[0,1]$ given by  \cite[Lemma~1]{BNRdW07}. 

We now give an overview of the lower bound proof following \cite{BBG+18}. 

Let $q$ be an approximating polynomial for $\PrOR\circ(f_1,\ldots ,f_n)$ of degree $T:=\bdeg(\PrOR\circ(f_1,\ldots ,f_n))$, which is also bounded on all Boolean inputs outside the promise.  Let $q'$ be the polynomial obtained from $q$ by amplifying the approximation to within error $c/n$ on all inputs in the promise, where $c$ is the constant from Theorem~\ref{thm:robust-poly-SCGT}. Note that the degree of $q'$ is $O(T\cdot\log n)$, and it remains bounded on all possible Boolean inputs.

We can assume without loss of generality that $f_1,f_2,\ldots ,f_n$ are non-constant partial Boolean functions. Therefore for each $f_i$ there exists an input $y$ such that $f_i(y) =0$. 
For all $S\subseteq [n]$, we now define a polynomial $q'_S$ using $q$ that approximates $\PrOR\circ ( f_i)_{i\in S}$ by setting the variables of each $f_i$, $i\not\in S$, to an input where it evaluates to $0$. Note that the degree of $q'_S$ is bounded by the degree of $q'$.  

Now consider the polynomial $p$ over $2^n$ variables $\{w_S\}_{S\subseteq[n]}$ given by Theorem~\ref{thm:robust-poly-SCGT}. 
Let $r$ be the polynomial obtained from $p$ by replacing the variables $w_S$ by polynomials $q'_S$, i.e., $r = p\circ (q'_S)_{S\subseteq[n]}$. 
Clearly the degree of $r$ is $O(T\sqrt{n}\log n)$. 

It can now be argued that $r$ approximates $\mathsf{XOR}\circ(f_1,\ldots ,f_n)$ to error within $1/3$. A slight care is needed in this argument for $p$ expects inputs  which are $\Delta$-close to Boolean values $\zone$. However, it may happen that some $q'_S$, though bounded in $[0,1]$, is not close to Boolean values $\zone$. This is where we will use the fact that $p$ is also \emph{multilinear}, and hence the value of $p$ on a convex combination of Boolean inputs is equal to the convex combination of values of $p$ on the Boolean inputs. Hence, if $p$ works correctly when all inputs are in $\zone$, then it must also be correct on inputs in $[0,1]$. A final thing to note is that any invalid input (to $p$) in $[0,1]^{2^n}$ can be written as a convex combination of valid inputs.

Thus, we have
\[\bdeg(\mathsf{XOR}\circ(f_1,\ldots ,f_n)) = O(T\sqrt{n}\log n) = O(\sqrt{n}\log n\cdot \bdeg(\PrOR\circ(f_1,\ldots ,f_n))).\]
Whereas from Theorem~\ref{thm:sherstov-xor} we have
\[\bdeg(\mathsf{XOR}\circ(f_1,\ldots ,f_n)) = \Omega\left(\sum_{i=1}^n\bdeg(f_i)\right) = \Omega\left( n \cdot \min_i \bdeg(f_i)\right).\]
Combining the two, we obtain the lower bound
\[\bdeg(\PrOR\circ(f_1,\ldots ,f_n)) = \Omega\left(\frac{\sqrt{n}\cdot \min_{i=1}^n\bdeg(f_i)}{\log n}\right).\]

\end{proof}
We will now use this weak bound to establish nearly optimal bound for the approximate degree of \text{PrOR} composed with $n$ different \emph{partial} functions. This will again be a simple generalization of \text{OR} composed with different functions \cite[Theorem~37]{BBG+18}. For the sake of completeness, we work out some of the details.


\begin{theorem}
\label{thm:pror-optimal-bounds}
For any partial Boolean functions $f_1, f_2, \ldots, f_n$, we have
\begin{align*}
    \bdeg\left(\PrOR_n \circ (f_1,f_2,\ldots, f_n)\right) = \widetilde\Theta\left(\sqrt{\sum_{i=1}^n\bdeg(f_i)^2}\right),
\end{align*}
when the lcm of $\bdeg(f_i)^2$ for $i\in[n]$ is $\Theta(\max_i~ \bdeg(f_i)^2)$.
\end{theorem}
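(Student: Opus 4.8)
\textbf{Proof plan for Theorem~\ref{thm:pror-optimal-bounds}.} The plan is to mirror the argument of \cite[Theorem~37]{BBG+18} for $\mathsf{OR}$ composed with different functions, but carried out at the level of $\PrOR$ and of bounded approximate degree $\bdeg$, using the weak bound of Theorem~\ref{lem:pror-weak-bounds} as a black box. Write $d_i := \bdeg(f_i)$ and $D := \sqrt{\sum_{i=1}^n d_i^2}$. The upper bound $\bdeg(\PrOR_n\circ(f_1,\ldots,f_n)) = \widetilde O(D)$ should follow, as in \cite{BBG+18}, by a grouping/balancing trick: partition the coordinates $[n]$ into buckets according to the dyadic scale of $d_i$, replace each $f_i$ by an amplified approximating polynomial of degree $\widetilde O(d_i)$, and observe that $\PrOR$ on the bucket-outputs can be computed robustly with a polynomial whose degree is $\widetilde O(\sqrt{\text{(number of buckets of a given scale)}})$ times that scale, and summing over scales gives $\widetilde O(\sqrt{\sum_i d_i^2})$. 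Here the multilinearity and robustness machinery of Theorems~\ref{thm:quantum-robustness} and \ref{thm:robust-poly-SCGT} is what lets us compose the outer robust $\PrOR$-polynomial with inner polynomials that are merely bounded in $[0,1]$ rather than close to Boolean.

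For the lower bound I would again follow \cite{BBG+18}: reduce to the weak bound by a ``hardness amplification by repetition inside the $\PrOR$'' argument. Concretely, suppose $q$ approximates $\PrOR_n\circ(f_1,\ldots,f_n)$ with degree $T$. Let $M := \max_i d_i^2$ and, for each $i$, let $k_i := M/d_i^2$ (an integer once we pass to the regime where the relevant lcm condition holds, which is exactly the hypothesis of the theorem). Create $k_i$ fresh copies of $f_i$; the total number of inner functions becomes $N := \sum_i k_i = M\sum_i 1/d_i^2$. Crucially, $\PrOR$ of these $N$ copies, when the copies of each $f_i$ are fed a common input, is still (a promise restriction of) $\PrOR_n\circ(f_1,\ldots,f_n)$ — the extra copies are redundant — so an approximating polynomial for the $n$-variate composed function yields one for the $N$-variate composed function of the same degree $T$. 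Applying the weak bound (Theorem~\ref{lem:pror-weak-bounds}) to the $N$ copies gives $T = \widetilde\Omega(\sqrt{N}\cdot\min_i d_i)$, and since $\min_i d_i^2 = M/\max_i k_i$ and we will have arranged $N \ge \Omega(M \sum_i 1/d_i^2)$, a short calculation turns $\sqrt{N}\cdot\min_i d_i$ into $\widetilde\Omega(\sqrt{M\sum_i 1/d_i^2}\cdot\sqrt{M}/\sqrt{\max_i k_i}) = \widetilde\Omega(\sqrt{\sum_i (M/\max_i k_i)\cdot(M/d_i^2)})$; the lcm hypothesis is precisely what makes $M = \Theta(\mathrm{lcm}_i d_i^2)$ so that $\max_i k_i$ is controlled and this collapses to $\widetilde\Omega(\sqrt{\sum_i d_i^2}) = \widetilde\Omega(D)$. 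The role of the hypothesis ``$\mathrm{lcm}$ of $\bdeg(f_i)^2$ is $\Theta(\max_i\bdeg(f_i)^2)$'' is to ensure the integers $k_i$ are polynomially bounded, so the $\widetilde\Omega$ (hiding $\mathrm{polylog}(N)$ factors) is still $\widetilde\Omega$ in terms of the original parameters.

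The main obstacle I anticipate is the bookkeeping in the lower bound reduction: making the ``duplicate the inner functions'' step rigorous for \emph{partial} functions and \emph{bounded} approximate degree, and checking that restricting the $N$-variate $\PrOR$ composition to the diagonal (copies of $f_i$ share inputs) genuinely recovers the $n$-variate composed partial function on its whole domain — including that points outside the promise stay outside. One must also verify that the degree does not blow up and that the polylog overheads from Theorem~\ref{lem:pror-weak-bounds} remain polylog after the blow-up from $n$ to $N$ variables, which is exactly where the lcm condition is used. Apart from this, all the analytic content (robust polynomials, multilinearity, the $\mathsf{SCGT}$ reduction, Sherstov's $\mathsf{XOR}$ composition) is inherited verbatim from \cite{BBG+18} and Theorem~\ref{lem:pror-weak-bounds}, so I would state those steps briefly and focus the write-up on the combinatorial balancing argument.
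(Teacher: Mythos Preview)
Your lower-bound reduction has a genuine gap. The claim that ``an approximating polynomial for the $n$-variate composed function yields one for the $N$-variate composed function of the same degree'' is not valid: if the $k_i$ copies of $f_i$ are fed a common input and some $f_j$ evaluates to $1$, then all $k_j$ copies output $1$, which for $k_j>1$ lies \emph{outside} the promise of $\PrOR_N$. So the diagonal of $\PrOR_N\circ(\text{copies})$ is not $\PrOR_n\circ(f_1,\ldots,f_n)$ but a strictly smaller partial function, and there is no map in the direction you need. Even granting the reduction, the arithmetic fails: with $d_1=1$ and $d_2=\cdots=d_n=C$ (which satisfies the lcm hypothesis) you get $N=C^2+n-1$ and $\min_i d_i=1$, so $\sqrt{N}\cdot\min_i d_i=\Theta(C+\sqrt{n})$, while $D=\Theta(C\sqrt{n})$. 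The lcm hypothesis bounds $\ell:=\mathrm{lcm}_i\,d_i^2$, not $\max_i k_i=M/\min_i d_i^2$, so your ``$\max_i k_i$ is controlled'' step is unsupported.

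The paper instead composes on the \emph{outside}. With $\ell=\mathrm{lcm}_i\,\bdeg(f_i)^2$ and $G:=\PrOR_\ell\circ F$, one application of the weak bound (now with the single repeated inner function $F$) gives $\bdeg(G)=\widetilde\Theta(\sqrt{\ell}\,\bdeg(F))$. By associativity of $\PrOR$ one rewrites $G$ as $\PrOR_{n\ell}$ over $\ell$ copies of each $f_i$, then regroups it as $\PrOR_d$ over blocks $\PrOR_{\ell/\bdeg(f_i)^2}\circ f_i$ (each block repeated $\bdeg(f_i)^2$ times, $d=\sum_i\bdeg(f_i)^2$); two further applications of the weak bound give $\bdeg(G)=\widetilde\Theta(\sqrt{d\ell})$. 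Equating the two expressions for $\bdeg(G)$ yields $\bdeg(F)=\widetilde\Theta(\sqrt{d})$, with the lcm hypothesis used only to keep the $\log\ell$ factors polylogarithmic. Your bucketing upper bound is fine and arguably simpler than extracting the upper bound from this construction, but the lower bound needs this outer-composition-and-regroup trick rather than a direct duplication of inner functions.
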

\begin{proof}
As mentioned before, the proof is merely working out the details of \cite[Theorem~37]{BBG+18} while keeping in mind that we are working with \emph{partial} functions.  

Let $F = \PrOR_n \circ (f_1,f_2,\ldots, f_n)$, $d_i = \bdeg(f_i)^2$ for $i\in[n]$, and $\ell$ be the lcm of $d_i$'s. Now consider the function $G= \PrOR_{\ell}\circ F$. From Theorem~\ref{lem:pror-weak-bounds}, we have the following bounds on $\bdeg(G)$ up to constants
\begin{align}
\label{eq:bdeg-G-wrt-F}
\frac{\sqrt{\ell}\cdot \bdeg(F)}{\log \ell} \leq \bdeg(G) \leq \sqrt{\ell}\cdot\bdeg(F)\cdot\log \ell.
\end{align}
Now using the associativity of $\PrOR$ we can rewrite $G$ as 
\begin{align}
    G = \PrOR_{n\ell}\circ(\underbrace{f_1,\ldots ,f_1}_{\ell \text{ times}},\ldots ,\underbrace{f_n,\ldots ,f_n}_{\ell \text{ times}} ). 
\end{align}
Further regrouping $f_i$'s, we can rewrite $G$ as follows
\begin{align}
\label{eq:regrouping-G}
    G = \PrOR_d\circ(\underbrace{\PrOR_{\ell/d_1}\circ f_1,\ldots ,\PrOR_{\ell/d_1}\circ f_1}_{d_1 \text{ times}}, \ldots ,\underbrace{\PrOR_{\ell/d_n}\circ f_n,\ldots ,\PrOR_{\ell/d_n}\circ f_n}_{d_n \text{ times}}), 
\end{align}
where $d = \sum_{i=1}^n d_i$. Now using Theorem~\ref{lem:pror-weak-bounds} and $\sqrt{d_i}=\bdeg(f_i)$, we obtain following bounds for $\PrOR_{\ell/d_i}\circ f_i$ (up to constants)
\begin{align}
\label{eq:bdeg-regroup}
    \frac{\sqrt{\ell}}{\log (\ell/d_i)} \leq \bdeg(\PrOR_{\ell/d_i}\circ f_i) \leq \sqrt{\ell}\cdot \log(\ell/d_i).
\end{align}
Now consider \eqref{eq:regrouping-G} and using Theorem~\ref{lem:pror-weak-bounds} along with \eqref{eq:bdeg-regroup}, we obtain
\begin{align}
\label{eq:bdeg-G}
    \frac{\sqrt{d\ell}}{\log d \cdot \log \ell}\leq \bdeg(G) \leq \sqrt{d\ell}\cdot \log \ell \cdot \log d
\end{align}
Now from \eqref{eq:bdeg-G} and \eqref{eq:bdeg-G-wrt-F} it follows 
\[
\frac{\sqrt{d}}{\log d \cdot \log^2 \ell}\leq \bdeg(F) \leq \sqrt{d}\cdot \log^2 \ell \cdot \log d .
\]
\end{proof}

\section{Composition theorems for strongly-$k$-junta symmetric outer functions}
\label{appendix: junta symmetric function}

In this section we will prove the composition result of $\adeg$ and $\R$ when the outer function has some amount of symmetry. Of course, there are various notion of symmetry. Traditionally a function is said to have the maximum amount of symmetry when the function value is invariant under any permutation of the variables. Such functions are called symmetric.  Symmetric functions are very well studied in the literature of Boolean function analysis. In the terms of composition theorems of $\adeg$ and $\R$ it was proved in \cite{BBG+18} and \cite{GJPW18} that $\adeg$ and $\R$ respectively composes when the outer function is symmetric. 

In terms of weaker notions of symmetry there are various possible definitions. In this paper we consider the case of strongly-$k$-junta symmetric functions. The composition theorem for $\adeg$ when the outer function is strongly-$k$-junta symmetric (Theorem~\ref{thm: junta sym composition Intro}(Part(i)) is presented in Appendix~\ref{sec:juntaadeg}. The proof of the composition theorem for $\R$ when the outer function is strongly-$k$-junta symmetric (Theorem~\ref{thm: junta sym composition Intro}(Part(ii)) follows easily from Theorem~\ref{theo: composition with full RIntro}.



\begin{observation}
\label{thm: junta sym composition for R}
For any strongly $k-$junta symmetric function $f: \zone^n \to \zone$ and any Boolean function $g: \zone^m \to \zone$, we have $\rqc(f\circ g) = \widetilde{\Omega}({\rqc(f)\cdot \rqc(g)})$ where $ n-k = \Theta(n)$.
\end{observation}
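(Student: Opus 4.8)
The plan is to deduce the statement from Theorem~\ref{theo: composition with full RIntro} by passing to a well-chosen restriction of $f$. First I would dispose of the degenerate case: if $g$ is constant then $f\circ g$ is constant, $\R(g)=0$, and the bound holds vacuously; so assume henceforth that $g$ is non-constant, so that both $0$ and $1$ lie in the range of $g$.

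The key step is to exploit that $f$ is \emph{strongly} $k$-junta symmetric. Write $f(x)=F(x|_{\mathcal J},|x|)$ where $\mathcal J$ is the set of $k$ junta variables (Definition~\ref{junta symmetric function}). By Definition~\ref{strongly junta symmetric function} there is an assignment $\alpha$ to the junta variables such that the restriction $f':=f|_{\mathcal J\leftarrow\alpha}$ depends non-trivially on the Hamming weight of its input. Since fixing $x|_{\mathcal J}=\alpha$ makes $f$ a function of $|y|$ alone, where $y$ denotes the non-junta part, $f'\colon\zone^{n-k}\to\zone$ is a \emph{non-constant symmetric} function. It is classical, and recalled in the introduction, that for a non-constant symmetric function $h$ on $m$ bits one has $\s(h)=\Theta(m)$ and $\s(h)=\Theta(\R(h))$; hence $\R(f')=\Theta(n-k)$, i.e.\ $f'$ has full randomized query complexity.

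Next I would apply Theorem~\ref{theo: composition with full RIntro} with outer function $f'$ (arity $m':=n-k$ and $\R(f')=\Theta(m')$):
\[
\R(f'\circ g)=\Omega\bigl(\R(f')\cdot\R(g)\bigr)=\Omega\bigl((n-k)\cdot\R(g)\bigr)=\Omega\bigl(n\cdot\R(g)\bigr),
\]
using $n-k=\Theta(n)$ in the last step. Finally, $f'\circ g$ is a subfunction of $f\circ g$: for each junta coordinate $j\in\mathcal J$ fix the $j$-th input block to some $x_j^\ast$ with $g(x_j^\ast)=\alpha_j$ (possible since $g$ is non-constant), and the resulting restriction of $f\circ g$ is exactly $f'\circ g$. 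Therefore $\R(f\circ g)\ge\R(f'\circ g)=\Omega(n\cdot\R(g))$. Since $f$ has arity $n$ we trivially have $\R(f)\le n$, so $n\cdot\R(g)\ge\R(f)\cdot\R(g)$, and we conclude $\R(f\circ g)=\Omega(\R(f)\cdot\R(g))$; together with the folklore upper bound $\R(f\circ g)=\widetilde O(\R(f)\cdot\R(g))$ this gives $\R(f\circ g)=\widetilde\Theta(\R(f)\cdot\R(g))$, slightly stronger than claimed.

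I do not expect a genuine obstacle, since essentially all the work is packaged inside Theorem~\ref{theo: composition with full RIntro}. The only point requiring care is the extraction of $f'$: one must check that the word ``strongly'' (every variable influential, equivalently non-trivial Hamming-weight dependence under some junta assignment) is exactly what guarantees that the restriction $f'$ is non-constant, and that fixing the junta variables genuinely yields a symmetric function on the remaining $n-k$ bits rather than a mere juxtaposition of a junta and a symmetric part. Without the ``strongly'' hypothesis $f'$ could be constant and the argument would collapse, which is the same reason this hypothesis is needed in the companion $\adeg$ statement.
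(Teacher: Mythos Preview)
Your proposal is correct and follows essentially the same route as the paper: use strong $k$-junta symmetry to extract a non-constant symmetric restriction $f'$ on $n-k=\Theta(n)$ bits, observe it has full randomized query complexity, and invoke Theorem~\ref{theo: composition with full RIntro}. The only cosmetic difference is that the paper, having noted $\R(f')=\Omega(n)$ via $\s(f')=\Omega(n)$, concludes $\R(f)\ge\R(f')=\Omega(n)$ and applies Theorem~\ref{theo: composition with full RIntro} directly to $f$ itself, whereas you apply it to $f'$ and then lift via $\R(f\circ g)\ge\R(f'\circ g)$ together with $\R(f)\le n$; both arrive at the same bound with the same amount of work.
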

\begin{proof}
There exists an assignment of the $k$-bits such that the resulting function is a non-constant symmetric function on $(n-k)$ bits. Since the sensitivity of the restricted function is $\Omega(n)$, the randomized query complexity is also $\Omega(n)$ (see~\cite{nisan1989crew}).
Hence, from Theorem~\ref{theo: composition with full RIntro} the result follows.
\end{proof}


\subsection{Composition of approximate degree for $\sqrt{n}$-junta symmetric functions}\label{sec:juntaadeg}

A crucial result that we use in the prove of composition theorem of $\adeg$
is the following result from \cite{Paturi}.
\begin{theorem}[\cite{Paturi}]
\label{thm:paturi-sym}
For any non-constant symmetric function $f:\zone^n\to\zone$, let $k$ be the closest integer to $n/2$ such that $f$ takes different values on inputs of Hamming weight $k$ and $k+1$. Define, 
\[\gamma(f) = 
\begin{cases}
k & \text{if } k\leq n/2,\\
n-k & \text{otherwise}.
\end{cases}
\]

Then
\[\adeg(f) = \Theta\left(\sqrt{n(\gamma(f)+1)}\right).\]
\end{theorem}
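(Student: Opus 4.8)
The plan is to pass to a univariate problem by Minsky--Papert symmetrization and then handle the two directions separately. Writing $f(x) = F(|x|)$ for some $F\colon\{0,1,\dots,n\}\to\{0,1\}$, symmetrizing a multivariate approximant of $f$ over the action of $S_n$ produces a univariate polynomial $q$ of no larger degree with $|q(i)-F(i)|\le 1/3$ for every integer $i\in\{0,\dots,n\}$, and conversely $x\mapsto q(x_1+\dots+x_n)$ recovers a multivariate approximant of the same degree; hence $\adeg(f)$ equals the least degree of such a univariate $q$, and it suffices to show this is $\Theta\big(\sqrt{n(\gamma(f)+1)}\big)$.

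For the lower bound, fix such a $q$. Since $F(k)\ne F(k+1)$ we get $|q(k)-q(k+1)|\ge 1/3$, hence $|q'(\xi)|\ge 1/3$ for some $\xi\in(k,k+1)$, while $|q(i)|\le 4/3$ for every $i\in\{0,\dots,n\}$. Now invoke a Markov--Bernstein-type inequality for real polynomials bounded on the equally spaced points $\{0,\dots,n\}$ (Ehlich--Zeller; Coppersmith--Rivlin): such a degree-$d$ polynomial satisfies $|q'(x)| = O\!\big(d/\sqrt{x(n-x)}\big)$ for $x$ in the bulk provided $d=O(\sqrt n)$, the last hypothesis coming from the Ehlich--Zeller fact that boundedness on the grid transfers, up to a factor $2$, to boundedness on $[0,n]$ when $d=O(\sqrt n)$. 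Because $k$ is the jump closest to $n/2$, the quantity $\gamma(f)+1$ is, up to constants, the distance of $\xi$ from the nearer endpoint, so $x(n-x)=\Theta(n(\gamma(f)+1))$ there, and the inequality forces $d=\Omega\big(\sqrt{n(\gamma(f)+1)}\big)$. The degenerate cases --- a jump at $k=0$, where the statement reduces to $\adeg(\mathrm{OR}_n)=\Omega(\sqrt n)$ of Nisan--Szegedy, and the large-degree regime, where $F$ may oscillate near the middle --- are handled by standard variants (in the oscillating case, a direct sign-change count).

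For the upper bound I would first reduce to a one-sided problem. Because $k$ is the jump nearest $n/2$, every jump of $F$ lies within distance $\gamma:=\gamma(f)$ of one of the two endpoints; so $F$ is constant, equal to some $v$, on the middle block $\{\gamma+1,\dots,n-\gamma-1\}$ and takes arbitrary Boolean values on each of the two end-blocks of length $\gamma+1$. Writing $F-v$ as the sum of its restrictions to the two end-blocks and using the reflection $i\mapsto n-i$, it suffices to build, for an arbitrary prescribed Boolean pattern on $\{0,\dots,\gamma\}$, a polynomial of degree $O\big(\sqrt{n(\gamma+1)}\big)$ that realizes that pattern there and stays bounded (by, say, $1/9$ in absolute value) on $\{\gamma+1,\dots,n\}$. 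Here $\gamma+1$ interpolation constraints sit near the left endpoint, and the remaining $\Theta(\sqrt{n(\gamma+1)})-\gamma$ degrees of freedom must be spent damping the polynomial's growth on the far points; the construction is made explicit using Chebyshev polynomials of the first kind after an affine rescaling of $[0,n]$, whose controlled growth away from $[-1,1]$ is exactly what lets one match the pattern and suppress the tail with only $O\big(\sqrt{n(\gamma+1)}\big)$ degrees.

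The step I expect to be the crux is this last construction. A naive approach --- Lagrange-interpolate the pattern with a degree-$\gamma$ polynomial and multiply by a low-degree damper approximating the indicator of $\{0,\dots,\gamma\}$ --- fails, because the Lagrange interpolant already grows like $(n/\gamma)^{\Theta(\gamma)}$ on the far points, and killing that growth would require the damper to have degree $\Theta(\sqrt{n\gamma})\cdot\gamma$, far too large. So the extra degrees of freedom must be used simultaneously for interpolation and tail control, and organizing this with the right explicit Chebyshev-based polynomials, and without losing logarithmic factors, is the technical heart of the argument; by contrast, once symmetrization and the choice of the ``middle-most'' jump are set up, the lower bound is a fairly direct appeal to the discrete Markov--Bernstein inequalities.
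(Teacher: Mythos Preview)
The paper does not prove this statement at all: Theorem~\ref{thm:paturi-sym} is quoted from Paturi~\cite{Paturi} as a known result and is used as a black box (in Proposition~\ref{prop:adeg-k-junta} and the proof of Theorem~\ref{thm: junta sym composition Intro}). There is therefore nothing in the paper to compare your proposal against.

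For what it is worth, your outline is essentially Paturi's original argument: Minsky--Papert symmetrization to reduce to a univariate problem, a Bernstein--Markov-type derivative bound at the jump closest to $n/2$ for the lower bound, and a Chebyshev-based construction exploiting that all jumps lie within $\gamma(f)$ of an endpoint for the upper bound. One small correction: you do not need to separately invoke Nisan--Szegedy for the endpoint case $k=0$; Paturi's Bernstein-type inequality (which bounds $|q'(\xi)|$ by $O(d^2/n)$ near the endpoints rather than $O(d/\sqrt{\xi(n-\xi)})$) already covers it uniformly, and indeed Nisan--Szegedy's $\Omega(\sqrt{n})$ for $\OR_n$ is a special case of Paturi's theorem, not a separate ingredient. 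Your worry about the upper-bound construction is well placed---that is where the work is---but Paturi's explicit polynomial handles it without log factors.
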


Using the result of \cite{Paturi} we prove the following proposition about the approximate degree of a $k$-junta symmetric function. Recall the multiplexer function from Definition~\ref{def: mux or addresssing}.


\begin{prop}
\label{prop:adeg-k-junta}
For any $k$-junta symmetric function $f: \zone^n \to \zone$, we have 
$\adeg(f)=\Omega\left(\sqrt{(n-k)\gamma_{\max}}\right)$ and $\adeg(f)=O\left(\max\{k,\sqrt{(n-k)\gamma_{\max}}\}\right)$, 
where $\gamma_{\max} = \max_{i\in\zone^k}\{\gamma(f_i)\}$ 
such that $f_i$ is the symmetric function obtained by restricting 
the junta variables according to $i$. 
\end{prop}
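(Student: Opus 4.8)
The plan is to bracket $\adeg(f)$ from both sides by reducing to the symmetric case (Theorem~\ref{thm:paturi-sym}) and using a multiplexer/addressing gadget to handle the junta variables. Write $\mathcal J$ for the $k$ junta variables and, for each assignment $i\in\zone^k$ to $\mathcal J$, let $f_i:\zone^{n-k}\to\zone$ be the symmetric subfunction of $f$ obtained by fixing $\mathcal J$ according to $i$. The lower bound is the easier direction. First I would pick $i^\ast\in\zone^k$ achieving $\gamma(f_{i^\ast}) = \gamma_{\max}$; since $f_{i^\ast}$ is a subfunction of $f$ (obtained just by restricting variables to constants) we immediately get $\adeg(f)\geq \adeg(f_{i^\ast})$. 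If $f_{i^\ast}$ is non-constant, Theorem~\ref{thm:paturi-sym} gives $\adeg(f_{i^\ast}) = \Theta(\sqrt{(n-k)(\gamma_{\max}+1)}) = \Omega(\sqrt{(n-k)\gamma_{\max}})$, which is exactly the claimed lower bound. (If $\gamma_{\max}=0$ or the relevant subfunction is constant, the bound $\Omega(\sqrt{(n-k)\gamma_{\max}})$ is trivial or the standard $\Omega(\sqrt{n-k})$ bound for a non-constant symmetric subfunction applies; one should note that strong $k$-junta symmetry guarantees \emph{some} $f_i$ is non-constant symmetric.)

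For the upper bound, the idea is to build an approximating polynomial for $f$ by combining an exact representation over the junta variables with approximating polynomials for each $f_i$. Concretely, $f$ can be written as $f(x) = \mathsf{MUX}$-style selection: $f(j, y) = \sum_{i\in\zone^k} \mathsf{eq}_i(j)\cdot f_i(y)$, where $j\in\zone^k$ is the assignment to the junta bits, $y\in\zone^{n-k}$ is the rest, and $\mathsf{eq}_i(j) = \prod_{\ell=1}^k (j_\ell \text{ if } i_\ell=1 \text{ else } 1-j_\ell)$ is the degree-$k$ multilinear indicator of $j=i$. Replacing each $f_i$ by a polynomial $p_i$ of degree $\adeg(f_i) = O(\sqrt{(n-k)(\gamma(f_i)+1)}) = O(\sqrt{(n-k)\gamma_{\max}})$ that $\tfrac13$-approximates $f_i$, the polynomial $p(j,y) = \sum_i \mathsf{eq}_i(j) p_i(y)$ has degree at most $k + O(\sqrt{(n-k)\gamma_{\max}}) = O(\max\{k,\sqrt{(n-k)\gamma_{\max}}\})$. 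On any Boolean input $(j,y)$ exactly one term $\mathsf{eq}_i(j)$ is $1$ and the rest vanish, so $p(j,y) = p_{j}(y)$, which is within $\tfrac13$ of $f_j(y) = f(j,y)$. This gives the stated upper bound.

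The main obstacle — really the only subtlety — is the degenerate cases in the lower bound: one must make sure that the subfunction witnessing $\gamma_{\max}$ is actually non-constant so that Theorem~\ref{thm:paturi-sym} applies, and more generally that $f$ genuinely depends on at least one non-junta variable (otherwise $n-k$ could be ``wasted''). Here the hypothesis of (strong) $k$-junta symmetry is exactly what is needed: it ensures there is a setting of the junta variables under which the resulting symmetric function is non-constant, so some $f_i$ has $\gamma(f_i)\geq 0$ and is non-constant, and Theorem~\ref{thm:paturi-sym} yields $\adeg(f_i) = \Omega(\sqrt{n-k})$ at worst. One should also double-check that $\gamma_{\max}$ as defined (the max of $\gamma(f_i)$ over \emph{all} $i$, including possibly those giving a constant $f_i$ with $\gamma$ undefined) is interpreted so that constant restrictions contribute $\gamma = 0$; with that convention both inequalities go through cleanly, and the $+1$ inside the square root in Theorem~\ref{thm:paturi-sym} is harmless since it only affects constants hidden in $\Theta(\cdot)$.
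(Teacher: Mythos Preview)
Your proposal is correct and follows essentially the same approach as the paper: the lower bound by fixing the junta variables to the assignment achieving $\gamma_{\max}$ and invoking Paturi's theorem on the resulting symmetric subfunction, and the upper bound via the multiplexer/indicator decomposition $p(j,y)=\sum_i \mathsf{eq}_i(j)\,p_i(y)$ combining the exact degree-$k$ selectors with the Paturi approximants for the $f_i$. Your treatment of the degenerate cases (constant restrictions, the convention $\gamma=0$ for them) is in fact more careful than the paper's own brief argument.
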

\begin{proof}
Fixing the junta variables in $f$ we obtain a symmetric function on $n-k$ variables with approximate degree $\Omega(\sqrt{(n-k)\gamma_{\max}})$ (Theorem~\ref{thm:paturi-sym}), which in turn implies the same lower bound on $\adeg(f)$.   

For the upper bound, we obtain an approximating polynomial for $f$ 
by composing the (exact) polynomial for the multiplexer function 
$\mathsf{MUX}:\zone^{k+2^k}\to\zone$ with the approximating polynomials for different symmetric functions obtained by restricting the $k$ junta variables. 
Therefore, $\adeg(f) = k + O(\sqrt{(n-k)\gamma_{\max}}) = O\left(\max\{k,\sqrt{(n-k)\gamma_{\max}}\}\right)$. 

\end{proof}

As mentioned earlier, the composition of $\adeg$ when the outer function is symmetric was proved in \cite{BBG+18}.  The following is their result that we crucially  use in the proof of Theorem~\ref{thm: junta sym composition Intro}.

\begin{theorem}[\cite{BBG+18}]
\label{thm:sym-composition}
For any symmetric Boolean function $f: \zone^n \to \zone$ and any Boolean function $g: \zone^m \to \zone$ we have, 
\begin{align*}
    \adeg(f \circ g) = \Tilde{\Omega}(\adeg(f) \cdot \adeg(g))
\end{align*}
\end{theorem}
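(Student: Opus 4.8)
The plan is to reduce the statement, via Paturi's formula for the approximate degree of symmetric functions (Theorem~\ref{thm:paturi-sym}), to a combination of the Promise-$\OR$ composition theorem (Theorem~\ref{thm:promise-or-composition}) and Sherstov's composition theorem for outer functions of full approximate degree \cite{Sherstov12}, bridged by a single restriction of $f\circ g$.

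By Theorem~\ref{thm:paturi-sym}, $\adeg(f)=\Theta(\sqrt{n(\gamma(f)+1)})$, where $\gamma(f)$ is governed by the position $k$ of the jump of $f$ closest to $n/2$ (the $k$ nearest to $n/2$ on which the symmetric function $f$ takes different values at Hamming weights $k$ and $k+1$). Since $f\circ g=\bar f\circ\bar g$ (where $\bar f(x):=f(\bar x)$), and this substitution changes neither $\adeg(f\circ g)$ nor $\adeg(f),\adeg(g)$, I may assume $k\le n/2$; then $\gamma(f)=k$, and by the extremality of $k$ the symmetric function $f$ is constant on Hamming weights $k+1,\dots,n-k$. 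If $k=0$, then $f\in\{\OR_n,\ \overline{\OR_n}\}$ and $\adeg(f\circ g)\ge\bdeg(\PrOR_n\circ g)=\widetilde\Omega(\sqrt n\cdot\adeg(g))=\widetilde\Omega(\adeg(f)\adeg(g))$ follows from Theorem~\ref{thm:promise-or-composition} (alternatively from Theorem~\ref{thm: blocksensitivity composition}, since every non-constant symmetric function has $\bs(f)=\Theta(n)$). So assume $k\ge 1$.

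The key step is to exhibit inside $f\circ g$ a composed function whose outer part has \emph{full} approximate degree. Split the $n$ inner copies of $g$ into $2k$ groups of $s:=\floor{n/(2k)}$ copies each (pinning any leftover copies to a $0$-input of $g$), and restrict attention to inputs for which, within every group, the $s$ values of $g$ are either all $0$ or have weight exactly $1$. On this sub-domain the total number of $1$'s among the $n$ values of $g$ equals the number of ``active'' groups, which ranges over $\{0,1,\dots,2k\}$; hence $f\circ g$ restricted to this sub-domain is exactly $F\circ(\PrOR_s\circ g)$, where $F:\zone^{2k}\to\zone$ is the symmetric function agreeing with $f$ on Hamming weights $0,\dots,2k$. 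Now $F$ has a jump at weight $k$, the \emph{middle} of $\{0,\dots,2k\}$, so $\gamma(F)=k$ and Theorem~\ref{thm:paturi-sym} gives $\adeg(F)=\Theta(\sqrt{2k(k+1)})=\Theta(k)=\Theta(\tn{arity}(F))$: the outer function $F$ has full approximate degree. Applying Sherstov's full-degree composition theorem \cite{Sherstov12} (in the form $\bdeg(F\circ h)=\Omega(\adeg(F)\cdot\bdeg(h))$ for a partial inner function $h$) and then the Promise-$\OR$ composition theorem,
\[
\adeg(f\circ g)\ \ge\ \bdeg\big(F\circ(\PrOR_s\circ g)\big)\ =\ \Omega\big(k\cdot\bdeg(\PrOR_s\circ g)\big)\ =\ \widetilde\Omega\big(k\cdot\sqrt{s}\cdot\adeg(g)\big),
\]
and since $s=\Theta(n/k)$ this is $\widetilde\Omega(\sqrt{nk}\cdot\adeg(g))=\widetilde\Omega(\adeg(f)\adeg(g))$, using $\adeg(f)=\Theta(\sqrt{n(k+1)})=\Theta(\sqrt{nk})$ for $k\ge 1$.

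I expect two points to require care. The first is genuinely necessary rather than routine: one cannot obtain this bound from $\PrOR$-composition alone, because any embedding of a (possibly iterated) Promise-$\OR$ spread over the $n$ available copies of $g$ yields only $\widetilde\Omega(\sqrt n\cdot\adeg(g))$ — the ``breadth'' and ``depth'' factors $\sqrt N$ and $\sqrt{n/N}$ always multiply to $\sqrt n$ — so the extra $\sqrt k$ must come from routing the argument through an outer function of \emph{full} approximate degree, which is why isolating the full-degree ``core'' $F$ of $f$ is the crux. The second is that I invoke Sherstov's full-degree composition theorem with a \emph{partial} inner function $\PrOR_s\circ g$; \cite{Sherstov12} states it for total inner functions, and one should check (it goes through by the dual block method) that it survives with $\bdeg(\PrOR_s\circ g)$ in place of the approximate degree of a total inner function. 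Minor loose ends — the rounding in $s=\floor{n/(2k)}$, leftover copies, the base case $k=0$, and the complementation reduction to $k\le n/2$ — are easy to dispatch.
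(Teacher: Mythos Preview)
The paper does not supply its own proof of this statement: Theorem~\ref{thm:sym-composition} is quoted from \cite{BBG+18} and used as a black box in the proof of Theorem~\ref{thm: junta sym composition Intro}. So there is nothing in the present paper to compare against directly.

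That said, your proposal is correct and is, in essence, the argument of \cite{BBG+18}. The decomposition into $2k$ blocks of size $s\approx n/(2k)$, the identification of the restricted outer function $F$ on $2k$ bits with a jump at its midpoint (hence $\adeg(F)=\Theta(k)$ by Paturi), and the two-stage appeal to Sherstov's full-degree composition followed by the $\PrOR$ composition theorem is exactly their route. Your diagnosis that $\PrOR$-composition alone cannot supply the missing $\sqrt{k}$ factor, and that the full-degree ``core'' $F$ is what buys it, is also the key conceptual point there. The caveat you flag---that Sherstov's theorem must be invoked with the \emph{partial} inner function $\PrOR_s\circ g$ and the bounded approximate degree $\bdeg$---is real but benign: the dual-block argument in \cite{Sherstov12} goes through for bounded approximating polynomials of partial functions, and this is precisely how \cite{BBG+18} use it. The remaining loose ends (rounding, leftover copies, the $k=0$ base case, the complementation to $k\le n/2$) are handled exactly as you indicate.
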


We now present the proof of Part (i) of Theorem~\ref{thm: junta sym composition Intro}, that the proof of composition of $\adeg$ when the outer function is strongly-$k$-junta symmetric.


\begin{proof}[Proof of Theorem~\ref{thm: junta sym composition Intro}(Part (i))]
Since $f$ is a strongly-$k$-junta symmetric function so there exists a setting of the $k$ junta variables such that the resulting function is a non-constant symmetric function. 
Let $f'$ be the symmetric function obtained by 
restricting the junta variables of $f$ so that $f'$ is non-constant. 
Then by Theorem~\ref{thm:paturi-sym} the approximate degree of $f'$ is $\Omega(\sqrt{(n-k)\gamma_{\max}})$. Then clearly we have 
\begin{align}
\label{eq:junta-symm}
    \adeg(f\circ g) \geq \adeg(f'\circ g) = 
    \widetilde{\Omega}(\adeg(f')\cdot \adeg(g))  = \widetilde{\Omega}(\sqrt{(n-k)\gamma_{\max}}\cdot \adeg(g)),
\end{align}
where the first equality follows from Theorem~\ref{thm:sym-composition}. 
Now from Proposition~\ref{prop:adeg-k-junta} we know that $\adeg(f) = O(\sqrt{(n-k)\gamma_{\max}})$ if $k = O(\sqrt{(n-k)\gamma_{\max}})$, which is satisfied when $k=O(\sqrt{n})$. Thus from \eqref{eq:junta-symm} we obtain 
\[\adeg(f\circ g) = \widetilde{\Omega}(\adeg(f)\cdot \adeg(g)).\]

\end{proof}



\section{Proof of Theorem~\ref{thm: bdb20 main theorem Intro}} 
\label{section: proof of BB20 main thm}
Let us start by recalling the theorem we want to prove in this section.
\GeneralizedCharNoisyR*


We mention two lemmas from Ben-David and Blais~\cite{BB20} that we need for the proof of this theorem. The first lemma shows that given a noisy oracle with $\gamma$ bias, it can simulate an oracle of $\gamma' \geq \gamma$ bias by making approximately $\left(\frac{\gamma'}{\gamma}\right)^2$ queries.

\begin{lemma}[\cite{BB20}]
\label{lemma: Bias Amplification Lemma}


Let $\gamma \in [ - 1/3, 1/3]$ be nonzero and $k \leq 1/\gamma^2$ be an odd positive integer. If we take $k$ independent samples from $\text{Bernoulli}((1+\gamma)/2)$ and take their majority, the resultant distribution is $\text{Bernoulli}((1+\gamma')/2)$ where $\sign(\gamma') = \sign(\gamma)$ and
\begin{align*}
    \frac{\sqrt{k} |\gamma|}{3} \cdot \leq |\gamma'| \leq 3 \sqrt{k} |\gamma|.
\end{align*}
\end{lemma}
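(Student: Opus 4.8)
The plan is to reduce to $\gamma\in(0,\tfrac13]$ and then express $\gamma'$ exactly as a one-dimensional integral, from which both the sign claim and the quantitative bounds drop out. First I would dispose of the sign: if $\gamma<0$, negating every one of the $k$ samples flips the majority outcome, so it replaces the pair $(\gamma,\gamma')$ by $(-\gamma,-\gamma')$; hence it suffices to treat $\gamma>0$, and $\sign(\gamma')=\sign(\gamma)$ will follow once we show $\gamma'>0$. The case $k=1$ is trivial since then $\gamma'=\gamma$, so from here on $k\ge 3$ is an odd integer and $\gamma\in(0,\tfrac13]$, and the goal is to prove $\tfrac{\sqrt k\,\gamma}{3}\le\gamma'\le 3\sqrt k\,\gamma$.

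Write $h(\gamma):=\Pr[\mathrm{Bin}(k,\tfrac{1+\gamma}{2})\ge\tfrac{k+1}{2}]$; since the majority of the $k$ samples equals $1$ exactly when at least $\tfrac{k+1}{2}$ of them are $1$, we have $\Pr[\text{majority}=1]=h(\gamma)$, hence $\gamma'=2h(\gamma)-1$. Because $k$ is odd, symmetry of $\mathrm{Bin}(k,\tfrac12)$ gives $h(0)=\tfrac12$, so $\gamma'=2\int_0^\gamma h'(u)\,du$. The key computational input is the classical identity $\frac{d}{dp}\Pr[\mathrm{Bin}(k,p)\ge j]=k\binom{k-1}{j-1}p^{j-1}(1-p)^{k-j}$ (the sum telescopes after differentiating term by term), which I would apply with $j=\tfrac{k+1}{2}$ and $p=\tfrac{1+u}{2}$, using $p(1-p)=\tfrac{1-u^2}{4}$ and $dp/du=\tfrac12$. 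This yields the clean closed form
\[
\gamma' \;=\; C_k\int_0^\gamma (1-u^2)^{(k-1)/2}\,du,\qquad C_k:=k\binom{k-1}{(k-1)/2}2^{-(k-1)}.
\]
In particular $\gamma'>0$, which settles the sign claim.

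It then remains to sandwich the two factors on the right. For $C_k$ I would use standard two-sided bounds on the central binomial coefficient, $\binom{2m}{m}4^{-m}=\Theta(1/\sqrt m)$ with explicit constants, applied with $m=(k-1)/2$; combined with $k/\sqrt{k-1}=\Theta(\sqrt k)$ this gives $C_k=\Theta(\sqrt k)$, and a careful accounting places $C_k$ in a window like $[0.7\sqrt k,\ \sqrt k]$ for $k\ge 3$. For the integral, monotonicity of $u\mapsto(1-u^2)^{(k-1)/2}$ gives $\gamma\,(1-\gamma^2)^{(k-1)/2}\le\int_0^\gamma (1-u^2)^{(k-1)/2}\,du\le\gamma$, and the hypothesis $k\le 1/\gamma^2$ (so $(k-1)\gamma^2\le 1$) together with $\gamma\le\tfrac13$ gives $(1-\gamma^2)^{(k-1)/2}=\Theta(1)$, for instance $(1-\gamma^2)^{(k-1)/2}\ge e^{-9(k-1)\gamma^2/16}\ge e^{-9/16}$ using $-\ln(1-x)\le\tfrac98 x$ on $[0,\tfrac19]$. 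Multiplying the resulting ranges puts $\gamma'$ inside $[\tfrac13\sqrt k\,\gamma,\ 3\sqrt k\,\gamma]$.

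I expect the only genuine work here is the constant-chasing in the last step: verifying that the product of the explicit lower bounds on $C_k/\sqrt k$ and on $(1-\gamma^2)^{(k-1)/2}$ exceeds $\tfrac13$, and checking the smallest odd $k$ (say $k\in\{3,5,7\}$) by hand in case the asymptotic central-binomial estimates are not yet tight enough there. Everything else — the reduction to positive $\gamma$, the telescoping identity for the binomial-tail derivative, and the $\Theta$-level estimates of the two factors — is routine.
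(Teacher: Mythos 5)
The paper does not prove this lemma at all: it is imported verbatim from Ben-David and Blais \cite{BB20}, so there is no internal proof to compare your argument against. Judged on its own, your proposal is correct and essentially complete. The reduction to $\gamma>0$ by negating all samples is valid because $k$ is odd (no ties), the telescoping identity $\frac{d}{dp}\Pr[\mathrm{Bin}(k,p)\ge j]=k\binom{k-1}{j-1}p^{j-1}(1-p)^{k-j}$ is standard, and with $j=(k+1)/2$, $p=(1+u)/2$ it does give $\gamma'=C_k\int_0^\gamma(1-u^2)^{(k-1)/2}\,du$ with $C_k=k\binom{k-1}{(k-1)/2}2^{-(k-1)}$, which in particular settles the sign. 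The constant-chasing you deferred does go through with room to spare: the bound $\binom{2m}{m}\ge 4^m/(2\sqrt m)$ for $m\ge 1$ gives $C_k\ge k/\sqrt{2(k-1)}\ge\sqrt{k/2}$, while $\binom{2m}{m}\le 4^m/\sqrt{\pi m}$ gives $C_k\le\sqrt{3k/\pi}\le\sqrt k$ for $k\ge 3$; and your estimate $(1-\gamma^2)^{(k-1)/2}\ge e^{-9/16}\approx 0.57$ (using $k\gamma^2\le 1$ and $\gamma^2\le 1/9$) combines with $\sqrt{1/2}\approx 0.707$ to give a lower-bound constant of about $0.40>1/3$, so no separate hand-check of small $k$ is actually needed. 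Note that the hypothesis $k\le 1/\gamma^2$ is used only for the lower bound, and the upper bound $\gamma'\le C_k\gamma\le\sqrt k\,\gamma$ is much stronger than the stated $3\sqrt k\,\gamma$; both of these are consistent with the lemma as stated.
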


The following lemma shows that it is enough to query the noisy oracle with just two biases.

\begin{lemma}[\cite{BB20}]
\label{lemma: noisyR only two bias}
Let $f$ be a partial function and let $A$ be an optimal noisy randomized algorithm for $A$ of cost $\noisyR(f)$. Then there is another noisy randomized algorithm $\widehat{A}$ of cost $O(\noisyR(f))$, which only queries its noisy
oracles with parameter either $\gamma = 1$ or $\gamma = \widehat{\gamma}$, where $\widehat{\gamma} > 0$ is the smallest bias that $A$ uses on any input.
\end{lemma}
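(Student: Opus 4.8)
The plan is to take the optimal noisy algorithm $A$ of cost $\noisyR(f)$ and build a new algorithm $\widehat A$ that replaces each noisy query of $A$ by a short block of queries using \emph{only} the two biases $1$ and $\widehat\gamma$, arranged so that (i) the bit that $\widehat A$ hands back to its internal simulation of $A$ has \emph{exactly} the distribution that the original query would have returned, and (ii) the total cost increases by at most a constant factor. Property (i) guarantees that $\widehat A$ and $A$ induce the same distribution on the final output, so $\widehat A$ still errs with probability at most $1/3$, and property (ii) gives $\mathrm{cost}(\widehat A) = O(\noisyR(f))$. One harmless normalization is made at the outset: we may assume every bias used by $A$ is nonnegative, since a query of bias $\gamma<0$ to a bit is equivalent to a query of bias $|\gamma|$ followed by flipping the returned bit, and $\widehat\gamma>0$ is as in the statement.

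Next I would describe the per-query simulation, splitting on the magnitude of the bias $\gamma\in[\widehat\gamma,1]$ that $A$ intends to use. If $\gamma\ge 1/3$, simulate the query with a single bias-$1$ query: this learns the queried bit $b$ exactly at cost $1=O(\gamma^2)$, after which $\widehat A$ returns $b$ with probability $(1+\gamma)/2$ and $1-b$ otherwise, reproducing the bias-$\gamma$ distribution exactly. If $\widehat\gamma\le\gamma<1/3$ (which forces $\widehat\gamma<1/3$, so that Lemma~\ref{lemma: Bias Amplification Lemma} is applicable with oracle bias $\widehat\gamma$), take $k$ independent bias-$\widehat\gamma$ queries to the same bit and return their majority, where $k$ is the smallest odd integer with $k\ge 9\gamma^2/\widehat\gamma^2$; since $\gamma<1/3$ we have $k\le 1/\widehat\gamma^2$, so Lemma~\ref{lemma: Bias Amplification Lemma} yields a bit that agrees with $b$ with probability $(1+\gamma')/2$ for some $\gamma'$ with $\gamma\le\gamma'=O(\gamma)$. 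Finally ``dilute'' $\gamma'$ down to $\gamma$: output the majority bit with probability $\gamma/\gamma'$ and a uniformly random bit otherwise, which again reproduces the bias-$\gamma$ distribution exactly. The cost of this block is $k\widehat\gamma^2=O(\gamma^2)$.

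The last step is to stitch these per-query simulations into a single adaptive algorithm: because distinct noisy queries, and repeated queries to the same bit, return mutually independent bits, $\widehat A$ can run $A$ step by step, intercept each intended bias-$\gamma$ query, execute the appropriate block above, and feed the resulting (exactly correctly distributed) bit back to $A$, so the joint distribution of $A$'s view — hence of its output — is unchanged. Summing the block costs over all queries on the worst-case input gives $\mathrm{cost}(\widehat A)=O\!\big(\sum_i\gamma_i^2\big)=O(\mathrm{cost}(A))=O(\noisyR(f))$, and $\widehat A$ uses only biases $1$ and $\widehat\gamma$ (only bias $1$ in the degenerate case $\widehat\gamma\ge 1/3$). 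The one genuinely delicate point is the cost bookkeeping: it works precisely because the amplification cost $k\widehat\gamma^2\approx(\gamma/\widehat\gamma)^2\widehat\gamma^2$ has the quadratic scaling cancel, and because the constraint $k\le 1/\widehat\gamma^2$ in Lemma~\ref{lemma: Bias Amplification Lemma} is exactly what dictates the threshold $\gamma=\Theta(1)$ for the case split; the remaining verifications (oddness of $k$, the dilution identity, and the exact-distribution claim) are routine.
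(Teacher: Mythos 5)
The paper does not reprove this lemma --- it is quoted verbatim from Ben-David and Blais \cite{BB20} --- so there is no in-paper proof to compare against; your simulation argument (replace a large-bias query by a $\gamma=1$ query plus artificial noise, and a small-bias query by majority-amplification of $\widehat\gamma$-queries via Lemma~\ref{lemma: Bias Amplification Lemma} followed by dilution back to exactly $\gamma$, with the quadratic cost scaling cancelling) is precisely the standard argument behind the cited result, and it is correct. The only nit is that the case-split threshold should be taken somewhat below $1/3$ (e.g.\ $1/6$) so that the smallest odd $k\ge 9\gamma^2/\widehat\gamma^2$ is guaranteed to satisfy the hypothesis $k\le 1/\widehat\gamma^2$ of Lemma~\ref{lemma: Bias Amplification Lemma}; this is exactly the constant tuning you already flag as routine.
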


Another important lemma needed for the proof concerns the property of a random walk on a line where the coin is biased with some probability.

\begin{lemma}
\label{lemma: random walk with bias}
Define a random walk on a line where the coin gives head with probability $\frac{1+\widehat{\gamma}}{2}$. The random walk start at $0$ and stops if it reaches $T$ or $-T$. Conditioned on the fact that we reach $T$ before $-T$, let the expected number of steps of the walk be $\mu_T$, then 

\begin{align}
    \mu_{T}
    &= \frac{T}{\widehat{\gamma}}\ -\ \frac{2T}{\widehat{\gamma}}
    (1 - \widehat{\gamma})^{T}
    \left(\frac{(1 + \widehat{\gamma})^{T} - (1 - \widehat{\gamma})^{T}}{(1 + \widehat{\gamma})^{2T} - (1 - \widehat{\gamma})^{2T}}\right). \tag{\cite[Chapter XIV]{feller}} \label{eqn: mu T}
\end{align}

Moreover, for $T = \Theta(1/{(\sqrt{t} \cdot \widehat{\gamma})})$, we have
\begin{itemize}
    \item $\mu_{2T} = \Omega(1/{(t \cdot \widehat{\gamma}^2}))$, and
    
    \item $\mu_{2T} \leq 12 \mu_T$.
\end{itemize}
\end{lemma}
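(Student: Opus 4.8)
The plan is to take the displayed identity for $\mu_T$ as given (it is the classical expected‑duration formula for the biased gambler's ruin; see \cite[Chapter~XIV]{feller}) and first reduce it to a transparent closed form. Writing $u=(1+\widehat{\gamma})^T$ and $v=(1-\widehat{\gamma})^T$, the denominator $(1+\widehat{\gamma})^{2T}-(1-\widehat{\gamma})^{2T}$ equals $(u-v)(u+v)$; cancelling the common factor $u-v$, the subtracted term collapses to $\frac{2T}{\widehat{\gamma}}\cdot\frac{v}{u+v}$, so
\[
\mu_T=\frac{T}{\widehat{\gamma}}-\frac{2T}{\widehat{\gamma}}\cdot\frac{v}{u+v}=\frac{T}{\widehat{\gamma}}\cdot\frac{u-v}{u+v}=\frac{T}{\widehat{\gamma}}\cdot\frac{(1+\widehat{\gamma})^T-(1-\widehat{\gamma})^T}{(1+\widehat{\gamma})^T+(1-\widehat{\gamma})^T}=\frac{T}{\widehat{\gamma}}\,\tanh(T\eta),
\]
where $\eta:=\tfrac12\ln\tfrac{1+\widehat{\gamma}}{1-\widehat{\gamma}}$ and the last equality follows by dividing through by $v$ and using $u/v=e^{2T\eta}$. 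This closed form drives both bullets.

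Next I would record three elementary facts, assuming (as in the application, via Lemma~\ref{lemma: noisyR only two bias}) that $0<\widehat{\gamma}\le 1/3$. (i) From $\eta=\widehat{\gamma}+\widehat{\gamma}^3/3+\widehat{\gamma}^5/5+\cdots\ge\widehat{\gamma}$ and $\eta=\int_0^{\widehat{\gamma}}\frac{ds}{1-s^2}\le\frac{\widehat{\gamma}}{1-\widehat{\gamma}^2}\le\frac98\widehat{\gamma}$, we get $\widehat{\gamma}\le\eta\le\frac98\widehat{\gamma}$, i.e. $\eta=\Theta(\widehat{\gamma})$. (ii) On $[0,\infty)$ the function $\tanh$ is increasing with $\tanh x\le x$ and $x\mapsto\tanh(x)/x$ decreasing, so for any fixed $K>0$ and $x\in(0,K]$ one has $\tfrac{\tanh K}{K}\,x\le\tanh x\le x$. (iii) The double‑angle identity $\tanh(2x)=\frac{2\tanh x}{1+\tanh^2 x}$ gives $\tanh(2x)\le 2\tanh x$ for all $x\ge 0$.

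Then I would substitute $T=\Theta\!\big(1/(\sqrt t\,\widehat{\gamma})\big)$. Since $t\ge 1$, $T\eta=\Theta(T\widehat{\gamma})=\Theta(1/\sqrt t)$ is bounded by an absolute constant $K$, so by (ii) $\tanh(T\eta)=\Theta(1/\sqrt t)$ and likewise $\tanh(2T\eta)=\Theta(1/\sqrt t)$. Hence
\[
\mu_{2T}=\frac{2T}{\widehat{\gamma}}\,\tanh(2T\eta)=\Theta\!\Big(\frac{1}{\sqrt t\,\widehat{\gamma}^2}\Big)\cdot\Theta\!\Big(\frac{1}{\sqrt t}\Big)=\Theta\!\Big(\frac{1}{t\,\widehat{\gamma}^2}\Big),
\]
which in particular yields $\mu_{2T}=\Omega(1/(t\widehat{\gamma}^2))$, the first bullet. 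For the second bullet, the closed form gives, for \emph{every} $T$ and $\widehat{\gamma}$,
\[
\frac{\mu_{2T}}{\mu_T}=2\cdot\frac{\tanh(2T\eta)}{\tanh(T\eta)}\le 2\cdot 2=4\le 12
\]
by (iii); so in fact $\mu_{2T}\le 4\mu_T$, which is stronger than stated.

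The whole argument is essentially bookkeeping, and the one place where care is genuinely needed is the lower bound $\mu_{2T}=\Omega(1/(t\widehat{\gamma}^2))$: it uses the concavity‑type estimate $\tanh x\ge(\tanh K/K)x$, which is valid only on a bounded interval, so one must check that $T\eta$ (equivalently $T\widehat{\gamma}$) stays $O(1)$ — this is exactly where the hypothesis $t\ge 1$ enters, via $T\widehat{\gamma}=\Theta(1/\sqrt t)\le\Theta(1)$. One should also confirm that the constants hidden in ``$T=\Theta(1/(\sqrt t\,\widehat{\gamma}))$'' are consistent with $T$ being a positive integer and with $\widehat{\gamma}$ being small enough that the comparison $\widehat{\gamma}\le\eta\le\frac98\widehat{\gamma}$ holds; both are guaranteed in the regime where the lemma is invoked (small $\widehat{\gamma}$, as produced by Lemma~\ref{lemma: Bias Amplification Lemma}). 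If one prefers to avoid $\tanh$, an entirely elementary variant works: bound $(1-\widehat{\gamma})^T/(1+\widehat{\gamma})^T=(1-\tfrac{2\widehat{\gamma}}{1+\widehat{\gamma}})^T$ above and below by exponentials and feed the result into $\mu_T=\frac{T}{\widehat{\gamma}}\cdot\frac{1-(v/u)}{1+(v/u)}$, but the $\tanh$ computation above is cleaner.
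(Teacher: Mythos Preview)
Your argument is correct and in fact cleaner than the paper's. The paper does not recognize the closed form $\mu_T=\tfrac{T}{\widehat{\gamma}}\tanh(T\eta)$; instead it bounds the pieces of the Feller expression directly via the elementary inequalities $(1+\widehat{\gamma})^T-(1-\widehat{\gamma})^T\ge 2\widehat{\gamma}T$, $(1+\widehat{\gamma})^T-(1-\widehat{\gamma})^T\le \tfrac{2\widehat{\gamma}T}{1-\widehat{\gamma}^2T^2}$, and $(1-\widehat{\gamma})^T\ge 1-T\widehat{\gamma}$, obtaining the sandwich $\tfrac{T^2}{1+\widehat{\gamma}T}\le\mu_T\le \tfrac{T}{\widehat{\gamma}}(\widehat{\gamma}T+4\widehat{\gamma}^2T^2-4\widehat{\gamma}^3T^3)$ and then comparing $\mu_{2T}$ to $\mu_T$ by feeding the upper bound for $\mu_{2T}$ into the lower bound for $\mu_T$. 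That route is more ad hoc and only yields the constant $12$ in the regime $\widehat{\gamma}T=\Theta(1/\sqrt{t})$, whereas your double-angle computation gives $\mu_{2T}\le 4\mu_T$ unconditionally. Both arguments rely on $T\widehat{\gamma}=O(1)$ (equivalently $t\ge 1$) for the $\Omega$-bound, which you correctly flag; the paper's polynomial bounds implicitly need the same thing (e.g.\ the denominator $1-\widehat{\gamma}^2T^2$ must stay bounded away from $0$). Your $\tanh$ route is the more transparent of the two.
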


For completeness, we present the proof after the proof of  Theorem~\ref{thm: bdb20 main theorem Intro}

\begin{proof}[Proof of  of Theorem~\ref{thm: bdb20 main theorem Intro}]
From Lemma~\ref{lemma: noisyR only two bias}, we can assume that $f$ can be computed by a noisy randomized algorithm $\widehat{A}$ of cost $O(\noisyR(f))$ that makes queries with only two biases: $1$ and ${\widehat{\gamma}}$. We will now simulate $\widehat{A}$, that uses bias $1$ and $\widehat{\gamma}$, with a $\noisyR$ algorithm $B$ that uses bias $1$ and $1/\sqrt{t}$, where $t \geq 1$.

\ 

\textbf{Case 1:} The first case is when $\widehat{\gamma} \geq 1/\sqrt{t}$. In this case we can use multiple oracle calls of bias $1/\sqrt{t}$ to simulate one oracle call of bias $\widehat{\gamma}$. To do this, algorithm $B$ makes $O(t \widehat{\gamma}^2)$ calls of bias $\widehat{\gamma}$ and takes their majority. By Lemma~\ref{lemma: Bias Amplification Lemma}, the algorithm $B$ obtains a bit of bias slightly greater than $\widehat{\gamma}$. Then $B$ adds some bias to this bit to obtain a bit of bias exactly $\widehat{\gamma}$. The cost paid by $B$ is $O(t \widehat{\gamma}^2 \cdot 1/t) = O(\widehat{\gamma}^2)$.

\ 

\textbf{Case 2:} We now consider the case when $\widehat{\gamma} < 1/\sqrt{t}$. In this case we wish to generate many bits of low bias ($\widehat{\gamma}$) using a single bit of high bias ($1/\sqrt{t}$). Consider the following random walk based sampling procedure.
    
    
    

\begin{itemize}
    \item \textbf{Setup.}
    Let 
    \begin{align}
       T = \left\lfloor \frac{1}{5\sqrt{t}\widehat{\gamma}} \right\rfloor. \label{eqn: setting T} 
    \end{align}
    The random walk will take place on a line that is marked with integral multiples of $T$, with $T$ being right to $0$, $-T$ being left to $0$ and so on. Also, $(T-1)$ points are marked between every two adjacent integral multiples of $T$. The random walk starts at $0$. Also, let
    \begin{align*}
        R = \left(\frac{1+\widehat{\gamma}}{1-\widehat{\gamma}}\right)^T.
    \end{align*}
    
    \item[1.] Toss a coin of bias 
    $$
    \delta' = \frac{R-1}{R+1}. 
    $$
    \item[2.] If the result of the toss is $1$, then sample a sequence of bits using a $\widehat{\gamma}$-bias coin consisting of $w$ $1$'s and $z$ $0$'s, conditioned on  $w - z = T$. 
    This sampling procedure can be simulated by picking a random walk from the set of $\widehat{\gamma}$-biased random walks on the line which reach $T$ before $-T$ (starting at $0$).

    \item[3.] If the result of the toss is $0$, then sample a sequence of bits using a $\widehat{\gamma}$-bias coin consisting of $w$ $1$'s and $z$ $0$'s, conditioned on  $w - z = -T$. 
    This sampling procedure can be simulated by picking a random walk from the set of $\widehat{\gamma}$-biased random walks on the line which reach $-T$ before $T$ (starting at $0$).
    
\end{itemize}

We now prove the correctness of the above protocol. A crucial observation in~\cite{BB20} is that conditioned on a set of sequences, all of which reach $T$ before $-T$, the probability of choosing each sequence is the same whether we choose bias $\widehat{\gamma}$ or $-\widehat{\gamma}$ for Step 2. and Step 3. above. The reason is that the probability for any sequence of walks with $w$ $1$'s and $z$ $0$'s such that $w - z = T$, is $R$ times more likely when we choose a $\widehat{\gamma}$ biased coin than when we choose a $-\widehat{\gamma}$ biased coin. Furthermore, the probability of a $\widehat{\gamma}$-biased walk starting at $0$ and reaching $T$ is $R/(R+1)$. This is because if the probability of reaching $T$ is $p$, then probability of reaching $-T$ is $p/R$. Since $p + p/R = 1$, $p = R/(R+1)$. Similarly, the probability that a $-\widehat{\gamma}$-biased coin reaches $T$ is $1/(R+1)$ and reaches $-T$ is $R/(R+1)$. Thus $\delta'$ is chosen such that the probability of being correct is exactly $R/(1+R)$. Also, assuming that $\widehat{\gamma}$ is smaller than $1/10$, the bias $\delta'$ can be shown to be smaller than but within a constant factor of $1/\sqrt{t}$ (we refer to~~\cite[Theorem~4]{BB20} for details). Thus, by adding noise, the bias $1/\sqrt{t}$ can be converted to bias $\delta'$.

Let $\mu_T$ denote the expected number of bits that one run of the above sampling procedure generates. Lemma~\ref{lemma: random walk with bias} implies that $\mu_T$ is lower bounded by $\Omega(1/(t \cdot \widehat{\gamma}^2))$.

In summary, for one iteration of the Step 1.~to 3. of the sampling procedure, the following are true: a) The choice of $\delta'$ in Step 1.~implies correct distribution (from bias $\widehat{\gamma}$) of generating the coin tosses, and b) an expected number of $\Omega(1/(t\widehat{\gamma}^2))$ $\widehat{\gamma}$-biased samples are generated. c) Cost paid is $\delta^2$.

Before moving further let us mention explicitly how algorithm $B$ works: $B$ runs algorithm $\widehat{A}$ on the given input. For each $i \in [n]$:
\begin{itemize}
    \item If $\widehat{A}$ makes a query of bias $1$ then $B$ also queries that bit with bias $1$. The value of such bits are known with certainty and no other queries to these bits are made.
    
    \item When $\widehat{A}$ makes a query of bias $\widehat{\gamma}$ then 
    \begin{itemize}
        \item If $B$ has $\widehat{\gamma}$ biased bits available, then $B$ uses these bits to run $\widehat{A}$. Otherwise $B$ generates the above sampling procedure to generate $\widehat{\gamma}$ biased bits.
    \end{itemize}
\end{itemize}

Clearly, the correctness of $B$ follows from the correctness of $\widehat{A}$. In order to upper bound the expected cost of $B$, we upper bound the expected number of times $B$ runs the above sampling procedure for each index $i \in [n]$.
Recall that our goal is to show that the total expected cost is bounded by:
\[
    O\left( \noisyR(f) + n/t \right).
\]

Fix a $k \in [n]$ and consider the noisy query algorithm $B$ that uses bias $1/\sqrt{t}$ and $1$ to simulate the algorithm $\widehat{A}$ (that uses bias $\widehat{\gamma}$ and $1$).
For a run, say $r$, of algorithm $\widehat{A}$
\begin{itemize}
    \item Let $T_{r,k}$ be the number of times algorithm $\widehat{A}$ queries the $k$th bit in the $r$-th run and let
    \[
        T_k = \EE_r[T_{r,k}].
    \]
    Observe that
    \begin{align}
        \sum_k T_k = O(\noisyR(f)/\widehat{\gamma}^2). \label{eqn: Tk sums to noisyR}
    \end{align}
    
    \item Let $X_1, X_2, \dots, X_{\ell_r}$ be the number of bits sampled by $B$, while simulating the $r$-th run of $\widehat{A}$, where $B$ makes $\ell_r$ independent calls to the above sampling procedure in order to meet the demand of $T_{r,k}$.
\end{itemize}

Let
\begin{align}
    L_{k,r} = (X_1 + \dots + X_{\ell_r}) - T_{r,k},
\end{align}
denote the number of surplus bits (i.e. not used in simulation) generated by the algorithm $B$.
Thus,
\begin{align}
    \EE[X_1 + \dots + X_{\ell_r}] 
    &= \EE[T_{r,k}] + \EE[L_{k,r}] \nonumber \\
    &= T_k + \EE[L_{k,r}], \label{eqn: main ub.}
\end{align}
where the expectation in the first equation is over both the run $r$ and the randomness of the sampling procedure.

We now upper bound $\EE[L_{k,r}]$. Note that $L_{k,r}$ denotes the number of bits generated by the following walk on line:
a $\widehat{\gamma}$-biased random walk on line starts at some point $x_r$, where $-T < x_r < T$ depends on $X_1, \dots, X_{\ell_r}$ and $T_{r,k}$, and stops after reaching either $-T$ or $T$.
The expected value of $L_{k,r}$ is upper bounded by the expected value of a $\widehat{\gamma}$-biased random walk on line starts at $x_r$ and stops after reaching either $-(T+ |x_r|)$ or $(T+|x_r|)$. Since $-T < x_r < T$, $\EE[L_{k,r}]$ is upper bounded by the expected value of a $\widehat{\gamma}$-biased random walk on line starts at $0$ and stops after reaching either $-2T$ or $2T$.

From Lemma~\ref{lemma: random walk with bias} we have the following bound on $\mu_{2T}$ in terms of $\mu_T$,
\begin{align*}
    \mu_{2T} \leq 12 \mu_T.
\end{align*}

The above expression, combined with Equation~\ref{eqn: main ub.}, implies that for every $k \in [n]$
\begin{align}
    \EE[X_1 + \dots + X_{\ell_r}] \leq T_k + 12 \mu_T, \label{eqn: result of main ub.}
\end{align}
where the expectation is over both $r$ (i.e. randomness used by $\widehat{A}$) and the randomness of the sampling procedure.

Next, we upper bound the expected number of times $B$ calls the sampling procedure, i.e. we upper bound $\EE[\ell_r]$. Let $I_t$ denote the random variable such that $I_t = 0$ if $t > \ell_r$ and $I_t = 1$ otherwise. Thus,
\begin{align*}
    \EE\left[\sum_{i = 1}^{\ell_r} X_i \right]
    &= \EE\left[\sum_{i = 1}^{\infty} X_i I_i \right] \\
    &= \sum_{i = 1}^{\infty} \EE[X_i I_i] \\
    &= \sum_{i = 1}^{\infty} \Pr[I_i = 1] \EE[X_i \mid I_i = 1] \\
    &= \sum_{i = 1}^{\infty} \Pr[\ell_r \geq i] \EE[X_i \mid \ell_r \geq i] \\
    &= \EE[X_i] \sum_{i = 1}^{\infty} \Pr[\ell_r \geq i] \\
    &= \EE[X_i] \EE[\ell_r].
\end{align*}
In the second last equation, we have used $\EE[X_i \mid \ell_r \geq i] = \EE[X_i]$. This is because the event $\ell_r \geq i$ depends on $X_1, \dots, X_{i-1}$ but not on $X_{i}$. Since $\EE[X_i] = \mu_T$ for all $i$, along with Equation~\ref{eqn: result of main ub.} we have
\begin{align*}
    \EE[\ell_r] \leq T_k/\mu_T + 12.
\end{align*}
Summing over all $k$ and using Equation~\ref{eqn: Tk sums to noisyR}, the expected number of queries made by algorithm $B$ is upper bounded by
\begin{align*}
    O\left( \frac{\noisyR(f)}{\widehat{\gamma}^2 \mu_T} + 12n \right).
\end{align*}
Thus the expected cost of $B$ is upper bounded by
\begin{align*}
    O\left( \frac{\noisyR(f)}{t\widehat{\gamma}^2 \mu_T} + \frac{12n}{t} \right).
\end{align*}
Since $\mu_T = \Omega(1/(t\widehat{\gamma}^2))$, the above quantity is upper bounded by
\begin{align*}
    O\left(\noisyR(f) + \frac{n}{t} \right).
\end{align*}

In order to complete the proof of the theorem, we show how to simulate $B$ to obtain a randomized query algorithm $B'$ for $f \circ \GapMaj_t$ of cost $O(t \cdot \noisyR(f) + n)$. The algorithm $B'$ simulates $B$ in the following manner: 
\begin{itemize}
    \item[1.] if $b$ queries $i$-th bit of $f$ with bias $1$, for some $i \in [n]$, then $B'$ queries all the bits of $i$-th Gap-Majority inner function.
    
    \item[2.] if $B$ queries $i$-th bit of $f$ with bias $1/\sqrt{t}$, for some $i \in [n]$, then $B'$ queries a random bit of $i$-th Gap-Majority inner function.
\end{itemize}
The correctness of $B'$ follows directly from the simulation and the correctness of $B$. Also, in both the cases (1. and 2.), cost paid by $B'$ is $t$ times the cost paid by $B$. Thus 
\begin{align}
    \R(f \circ \GapMaj_t) = O\left(t \cdot \noisyR(f) + n \right). \label{eqn: gapmaj ub bdb20}
\end{align}

\end{proof}

\begin{proof}[Proof of Lemma~\ref{lemma: random walk with bias}]
As mentioned in the statement of the lemma, the equality
    \begin{align}
    \mu_{T}
    &= \frac{T}{\widehat{\gamma}}\ -\ \frac{2T}{\widehat{\gamma}}
    (1 - \widehat{\gamma})^{T}
    \left(\frac{(1 + \widehat{\gamma})^{T} - (1 - \widehat{\gamma})^{T}}{(1 + \widehat{\gamma})^{2T} - (1 - \widehat{\gamma})^{2T}}\right), \label{eq: appendix, muT equality}
\end{align}
follows from~\cite[Chapter XIV]{feller}.

Next, we prove the second part of the lemma. The following inequalities are easy to observe.
\begin{align*}
    (1 + \widehat{\gamma})^{T} - (1 - \widehat{\gamma})^{T} &\leq \frac{2\widehat{\gamma}T}{1 - \widehat{\gamma}^2 T^2}, \\
    (1 + \widehat{\gamma})^{T} - (1 - \widehat{\gamma})^{T} &\geq 2 \widehat{\gamma} T , \\
    (1 - \widehat{\gamma})^{T} &\geq 1 - T\widehat{\gamma}. 
\end{align*}

Using the above inequalities and Equation~\ref{eq: appendix, muT equality} we get the following bounds on $\mu_T$
\begin{align}
    \frac{T^2}{(1 + \widehat{\gamma}T)} 
    \leq
    \mu_T \leq \frac{T}{\widehat{\gamma}}
    \left(
    \widehat{\gamma}T + 4 \widehat{\gamma}^2 T^2 - 4 \widehat{\gamma}^3 T^3
    \right). \label{eq: bounds on mu T appendix proof}
\end{align}

For $T = \Theta(1/\sqrt{t}\cdot\widehat{\gamma})$, the above expression implies that $\mu_T$ is lower bounded by $\Omega(1/(t \cdot \widehat{\gamma}^2))$.

Now we use Equation~\ref{eq: bounds on mu T appendix proof} to upper bound $\mu_{2T}$ in terms of $\mu_T$:
\begin{align*}
    \mu_{2T}
    &\leq \frac{2T}{\widehat{\gamma}}
    \left(
    2\widehat{\gamma}T + 16 \widehat{\gamma}^2 T^2 - 32 \widehat{\gamma}^3 T^3
    \right) \\
    &= \frac{T^2}{(1 + \widehat{\gamma}T)} \cdot \frac{(1 + \widehat{\gamma}T)}{T^2} \cdot \frac{2T}{\widehat{\gamma}}
    \left(
    2\widehat{\gamma}T + 16 \widehat{\gamma}^2 T^2 - 32 \widehat{\gamma}^3 T^3
    \right) \\
    &\leq \frac{(1 + \widehat{\gamma}T)}{T^2} \cdot \frac{2T}{\widehat{\gamma}}
    \left(
    2\widehat{\gamma}T + 16 \widehat{\gamma}^2 T^2 - 32 \widehat{\gamma}^3 T^3
    \right) \cdot \mu_T \\
    &= 2(1 + \widehat{\gamma}T) \cdot
    \left(
    2 + 16 \widehat{\gamma} T - 32 \widehat{\gamma}^2 T^2
    \right) \cdot \mu_T.
\end{align*}
Since $T = \Theta(1/(\sqrt{t}\cdot \widehat{\gamma}))$, for a suitable choice of constant, we have,
\begin{align*}
    \mu_{2T} \leq 12 \mu_T.
\end{align*}
\end{proof}

\end{document}